\newcommand{\tr}{\mathrm{Tr}}
\newcommand{\bs}[1]{\boldsymbol{#1}}
\newtheorem{lemma}{Lemma}
\theoremstyle{definition}
\newcommand{\bei}{\begin{itemize}}
\newcommand{\eei}{\end{itemize}}
\newcommand{\ket}[1]{|#1\rangle}
\newcommand{\bra}[1]{\langle#1|}
\def\<{\langle}
\def\>{\rangle}
\newcommand{{\Cn}}{{\mathbb{C}^4}}
\newcommand{{\CN}}{{\mathbb{C}^{2n}}}
\newcommand{{\BC}}{{\mathcal{B}(\mathbb{C}^n)}}
\newcommand{{\BBC}}{{\mathcal{B}(\mathbb{C}^{2n})}}
\def\oper{{\mathchoice{\rm 1\mskip-4mu l}{\rm 1\mskip-4mu l}{\rm 1\mskip-4.5mu l}{\rm 1\mskip-5mu l}}}
\newmdtheoremenv{theo}{Theorem}
\definecolor{mycolor}{rgb}{0.7, 0.75, 0.71}
\newmdenv[innerlinewidth=0.5pt, roundcorner=4pt,linecolor=mycolor,innerleftmargin=6pt,
innerrightmargin=6pt,innertopmargin=2pt,innerbottommargin=6pt]{mybox}
\newcommand{\B}{\mathcal{B}}
\newcommand{\I}{\mathbbm{I}}
\newcommand{\set}[1]{\left\{ #1 \right\}}
\newcommand{\be}{\begin{equation}}
\newcommand{\ee}{\end{equation}}
\newcommand{\bea}{\begin{eqnarray}}
\newcommand{\eea}{\end{eqnarray}}
\newcommand{\bes}{\begin{equation*}}
\newcommand{\ees}{\end{equation*}}
\newcommand{\beas}{\begin{eqnarray*}}
\newcommand{\eeas}{\end{eqnarray*}}
\newcommand{\bpm}{\begin{pmatrix}}
\newcommand{\epm}{\end{pmatrix}}
\newcommand{\bmb}{\begin{mybox}}
\newcommand{\emb}{\end{mybox}}
\newcommand{\sep}{\mathrm{sep}}
\newcommand{\nn}{\nonumber}
\newcommand{\id}{\mathbbm{I}}
\renewcommand{\H}{\mathcal{H}}
\newcommand{\opn}{{\mathrm{op}}}
\theoremstyle{remark}
\theoremstyle{definition}
\newtheorem{proposition}{Proposition}
\newtheorem{theorem}{Theorem}
\newtheorem{corollary}{Corollary}
\begin{document}

\title{ Mirrored Entanglement Witnesses for Multipartite and High-Dimensional Quantum Systems }


\author{Jiheon Seong}  
\email{equally contributed }
\affiliation{School of Electrical Engineering, Korea Advanced Institute of Science and Technology (KAIST), 291 Daehak-ro, Yuseong-gu, Daejeon 34141, Republic of Korea }

\author{Anindita Bera}
\email{equally contributed }
\affiliation{Department of Mathematics, Birla Institute of Technology Mesra, Jharkhand 835215, India}

\author{Beatrix C. Hiesmayr}
\affiliation{University of Vienna, Faculty of Physics, W\"ahringerstrasse 17, 1090 Vienna, Austria}

\author{Dariusz Chru{\'s}ci{\'n}ski}
\affiliation{Institute of Physics, Faculty of Physics, Astronomy and Informatics,
Nicolaus Copernicus University, Grudzi\c{a}dzka 5/7, 87--100 Toru{\'n}, Poland}

\author{Joonwoo Bae}
\affiliation{School of Electrical Engineering, Korea Advanced Institute of Science and Technology (KAIST), 291 Daehak-ro, Yuseong-gu, Daejeon 34141, Republic of Korea }

\begin{abstract}
Entanglement witnesses (EWs) are a versatile tool to detect entangled states and characterize related properties of entanglement in quantum information theory. The verification of entangled states via EWs relies on the fact that separable states form a convex set; this also means that the framework presented by EWs generally applies to other quantum resources where free resources contain the convexity. A witness $W$ corresponds to an observable satisfying $\tr[W\sigma_{\sep}]\geq 0$ for all separable states $\sigma_{\sep}$; entangled states are detected once the inequality is violated. Recently, mirrored EWs have been introduced by showing that there exist non-trivial upper bounds to EWs, 
\bea
u_W\geq \tr[W\sigma_{\sep}]\geq 0. \nonumber 
\eea
An upper bound to a witness $W$ signifies the existence of the other one $M$, called a mirrored EW, such that $W+M = u_W \I\otimes \I$. The framework of mirrored EWs shows that a single EW can be even more useful, as it can detect a larger set of entangled states by lower and upper bounds. 

In this work, we develop and investigate mirrored EWs for multipartite qubit states and also for high-dimensional systems, to find the efficiency and effectiveness of mirrored EWs in detecting entangled states. We provide mirrored EWs for $n$-partite GHZ states, graph states such as two-colorable states and tripartite bound entangled states. We also show that optimal EWs can be reflected with each other. For bipartite systems, we present mirrored EWs for existing optimal EWs and also construct a mirrored pair of optimal EWs in dimension three. Finally, we generalize mirrored EWs such that a pair of EWs can be connected by another EW, i.e., $W+M =\mathbbm{K}$ is also an EW. Our results enhance the capability of EWs to detect a larger set of entangled states in multipartite and high-dimensional quantum systems. 
\end{abstract}
\maketitle

\section{Introduction}

In quantum information processing, each element of quantum theory, such as states, measurement, and dynamics, is leveraged for various applications \cite{Neumann:2018aa}. The fact that an unknown quantum state alone cannot be copied has been a cornerstone of quantum communication that achieves a higher level of security without relying on computational assumptions \cite{Wootters:1982aa, BENNETT20147}. Measurement-based quantum computing realizes quantum logic gates through quantum states and local measurements only \cite{PhysRevLett.86.5188}. Quantum dynamics corresponds to quantum logic gates that enable efficient computation, such as prime-number factoring or amplitude amplifications \cite{365700, 10.1145/237814.237866}. Quantum measurements can also demonstrate probabilities that cannot be achieved by classical systems, such as contextuality \cite{KOCHEN:1967aa, PhysRevA.71.052108} and incompatibility \cite{Heinosaari_2016}, which are closely connected to Bell nonlocality \cite{Bell, chsh, RevModPhys.86.419}. All of these phenomena can be used to certify the quantum properties of systems. 

Entanglement is a resource behind the quantum information applications. The monogamy of entanglement, or the shareability of entangled states, is another statement of the no-cloning theorem \cite{Werner:1989aa, PhysRevA.69.022308, PhysRevA.73.012112} and the precondition for secure quantum communication \cite{PhysRevLett.92.217903, PhysRevLett.94.020501,ROY2021127143}. Two-body interactions that can create entangled states are a key to universal quantum computation \cite{PhysRevA.52.3457,PhysRevA.102.062431}. Nonclassical correlations such as Bell nonlocality and quantum steering certify the presence of entanglement \cite{PhysRevLett.98.140402, RevModPhys.92.015001}; they do not occur without entanglement. Note also that quantum steering is isomorphic to the measurement incompatibility \cite{PhysRevLett.115.230402}. 

Therefore, the question lies in verifying entangled states in practice \cite{RevModPhys.81.865, GUHNE20091,PhysRevA.98.062304,dense-coding,giovanni2025, PhysRevA.101.012341}. Once entangled states are detected, they can be applied to various tasks in quantum information processing. However, verifying entangled states remains a challenging task in general. To describe the problem, let $\B( \H)$ denote the set of bounded linear operators on a Hilbert space $\H$, and $S(\H)$ the set of quantum states; $\rho\in S(\H)$ is non-negative and of a unit trace, $\tr[\rho] =1$ and $\rho\geq 0$. 

\subsubsection*{Entanglement detection after state tomography}

On the one hand, let us consider a scenario where a quantum state $\rho \in S(\H \otimes \H)$ may be provided after state verification, e.g., quantum tomography. The computational complexity of deciding whether it is entangled is known as {\it NP-Hard} \cite{10.1145/780542.780545}. As a theoretical tool, the so-called partial transpose was proposed, from which it follows that the state must be entangled if it is no longer a quantum state after the partial transpose, i.e., after taking the transpose on a subsystem only \cite{PhysRevLett.77.1413, HORODECKI19961}. However, there are entangled states which remain positive after partial transpose (PPT), called PPT entangled states \cite{PhysRevLett.82.1056}, see also a recent review \cite{doi:10.1142/S0219749925300037}. There have been other theoretical tools, including the realignment criteria \cite{10.5555/2011534.2011535}, moments of matrices \cite{PhysRevLett.122.120505}, and symmetric extensions \cite{PhysRevA.69.022308, PhysRevA.80.052306, PhysRevLett.109.160502}. 

The precise characterization of entangled states can be achieved by referring to allowed and disallowed transformations on quantum states \cite{HORODECKI19961}. Quantum operations are generally described by positive (P) and completely positive (CP) maps on quantum states,
\bea
&&(\mathrm{P}): \Lambda \geq 0 \iff \Lambda[ \rho] \geq 0,~ \forall \rho \in S(\H) \ \nonumber \\
&&(\mathrm{CP}): \id \otimes \Lambda \geq 0 \iff \id\otimes \Lambda[ \rho] \geq 0,~\forall \rho \in S(\H\otimes \H).~~~~ \nonumber 
\eea 
Positive but not completely positive maps, which do not correspond to legitimate quantum operations, can define entangled states. That is, a state $\rho\in S(\H\otimes \H)$  is entangled if and only if there exists a positive but not completely positive map such that $\id\otimes \Lambda[\rho] \ngeq 0$.

\subsubsection*{Entanglement detection before state tomography}

On the other hand, let us consider a scenario where one attempts to determine whether the state is entangled or not, even before a quantum state is verified. The scenario may share similarities with the Bell experiment, for instance the Clauser–Horne–Shimony–Holt inequality \cite{chsh}; an expectation value larger than $2$ certifies the existence of entanglement. That is, Bell tests demonstrate that it is possible to detect the presence of entanglement from measurements alone, even when these measurements are tomographically incomplete, i.e., before the quantum state is verified \cite{TERHAL2002313}. 

Entanglement witnesses (EWs) are exactly those observables that realize the detection of entanglement for unknown quantum states in general \cite{TERHAL2002313, PhysRevA.62.052310}. An EW can be constructed as a dual to a positive but not completely positive map $\Lambda$, 
\bea
W = (\id \otimes \Lambda)[Q],\nonumber
\eea
for some $Q\geq 0$. Therefore, EWs characterize the set of separable states; equivalently, a state $\rho$ is entangled if and only if there exists an observable $W=W^{\dagger}$ such that 
\bea
\tr [ W \rho] < 0 ~ \mathrm{whereas} ~ \tr[W \sigma_{sep} ]  \geq 0~\forall \sigma_{sep} \in S_{sep},
\label{eq:defw}
\eea
where $S_{\sep}$ denotes the set of separable states. Note also that the detection scheme is given by the relation, $\tr[W\rho] = \tr[Q (\id \otimes \Lambda)^{\dagger} [\rho]]$ where $\Lambda$ is not completely positive. Moreover, an EW can be factorized into local observables $\{ A_i\}$ and $\{ B_j\}$
\bea
W = \sum_{ij} c_{ij} A_i \otimes B_j,\label{eq:wlocc}
\eea
for constants $\{c_{ij}\}$ \cite{Guhne:2003aa}. Thus, EWs are also feasible with experiments performing local measurements only. 

\begin{figure}[t]
\centering
\includegraphics[width=0.8\columnwidth]{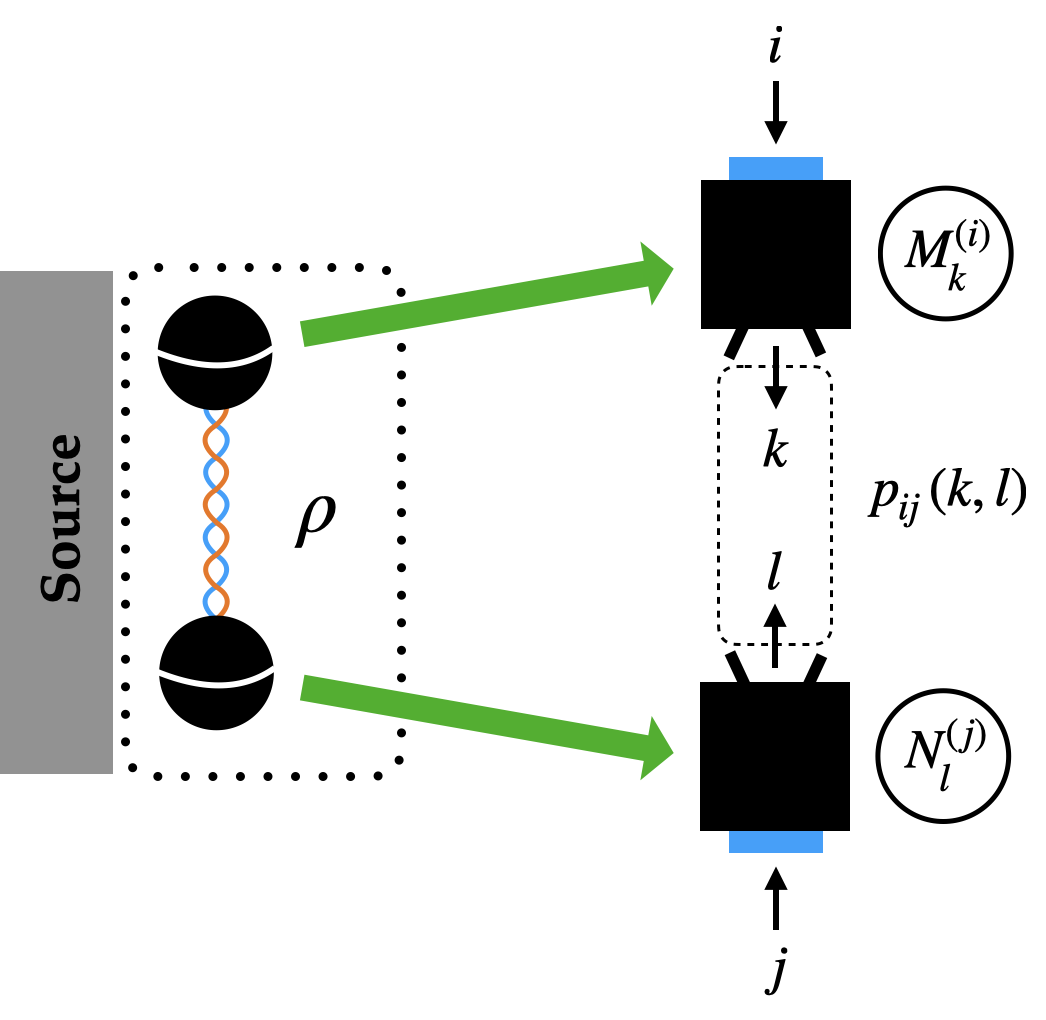}
\caption{ An entangled state $\rho$ can be detected by collecting detection events and finding an estimation, see Eqs. (\ref{eq:pij}) and (\ref{eq:fund}). A state $\rho$ produced from the source (grey rectangle) is distributed to two laboratories (green arrows). Measurement settings labeled $i$ (input from the top) and $j$ (input from the bottom) are chosen, and provide outcomes $k$ and $l$, respectively. Assuming that measurement settings are trusted, that is, given as $\{M_k^{(i)}\}_k$ and $\{N_l^{(j)}\}_l$ according to a decomposition of an EW in Eq. (\ref{eq:wlocc}), two parties achieve an estimation in Eq. (\ref{eq:defw}) from which they may find if a state is entangled. In an MDI framework, a measurement setting is replaced by a state preparation and relaxes the assumption to trust a measurement device, see Eq. (\ref{eq:mdiew}). EWs present the fundamental framework of detecting entangled states. }
\label{fig:mdi}
\end{figure}

\subsubsection*{Entanglement witnesses and their implementation: standard and measurement-device-independent constructions }

{ The fact that an EW can be factorized into local observables presents the fundamental framework that allows one to verify entangled states in experiment. One can decompose observables in terms of positive-operator-valued-measure (POVM) elements, $A_i = \sum_k a_{ki}M_{k}^{(i)}$ and $B_j = \sum_l b_{lj}N_{l}^{(j)}$ with POVM elements $\{M_{k}^{(i)} \}$ and $\{N_{l}^{(j)} \}$ and real numbers $\{ a_{ki}, b_{lj}\}$. Let $p_{ij}(k,l)$ denote the probability of having outcomes $(k,l)$ for measurement settings for local observables $A_i$ and $B_j$
\bea
p_{ij}(k,l) = \tr[\rho M_{k}^{(i)} \otimes N_{l}^{(j)}] \label{eq:pij}
\eea 
where we recall that $M_{k}^{(i)}$ and $N_{l}^{(j)}$ are POVM elements for local observables $A_i$ and $B_j$, respectively. From Eq. (\ref{eq:defw}), an EW can be estimated as follows,
\bea
\tr[W\rho] = \sum_{ijkl} c_{ij}a_{ki} b_{lj} p_{ij}(k,l),  \label{eq:fund}
\eea
which may tell if a state $\rho$ is entangled. For instance, Bell operators can be expressed as above \cite{chsh}. }

{As shown in Eq. (\ref{eq:fund}), an EW signfies a set of measurements and coefficients to construct measurement settings from which probabilities $p_{ij}(k,l)$ are obtained to conclude if a state is entangled. Since measurements in practice may contain imperfections such as non-unit detection efficiencies leading to detection loopholes as well as readout errors, extensive effort has been devoted to attempting to relax assumptions on measurement settings, meaning that probabilities $p_{ij}(k,l)$ are obtained without trust in measurements. For instance, it is fairly straightforward to convert an EW to its measurement-device-independent (MDI) counterpart \cite{PhysRevLett.110.060405, PhysRevLett.108.200401}; for a local decomposition of an EW, see Eq. (\ref{eq:wlocc}), its  MDI counterpart may be given by,
\bea
W_{\mathrm{MDI}} = \sum_{ij} c_{ij} A_{i}^T \otimes B_{j}^T = \sum_{ijkl} c_{ij}a_{ki} b_{lj} M_{k}^{(i)T} \otimes N_{l}^{(j)T}\label{eq:mdiew}
\eea
where $T$ denotes a transpose. Device-independent verification of entangled states, which does not assume specific properties of quantum states either such as a Hilbert space dimension, can also be derived \cite{PhysRevLett.121.180503}. Thus, EWs in Eqs. (\ref{eq:defw}) and (\ref{eq:fund}) presents the fundamental framework for detecting entangled states, and the level of certification of entanglement may be adjusted based on the assumptions made about implementations; standard EWs assume trust in measurement settings.}


We emphasize that the fundamental structure that enables EWs to characterize the set of separable states is convexity. A convex mixture of separable states is also separable. An element outside the set, i.e., an entangled state, can be distinguished by a hyperplane, which corresponds to an EW. EWs for multipartite systems can be constructed by exploiting the convexity and can be used to find a rich structure of multipartite entangled states, such as genuine multipartite entangled states.

\subsubsection*{ Entanglement witnesses for noisy states }

Let us consider a measurement setting that realizes an EW that aims to detect a state $\rho$. If the state suffers from noise but remains entangled even after, it is natural to ask whether the measurement setting can be used to verify the entanglement of the resulting state, see also \cite{PhysRevX.8.021072}. We here point out that an EW is robust against non-unitary noise, whereas it is not under unitary errors.

\begin{itemize}
    \item { Non-unitary noise: As an instance of non-unitary noise, let us consider a state that suffers from a depolarization channel 
\begin{equation}
\mathcal{N}^{(p)}(\cdot) \;=\; (1-p)\,(\cdot) \;+\; p\,\frac{\id}{d}, \qquad p\in[0,1].\nonumber
\end{equation}
with a noise parameter $p$. Against a noisy channel, a measurement setting for a witness $W$, one achieves an estimation,
\bea
&& \tr\!\big[W\,\mathcal{N}^{(p)}(\rho)\big] = \nonumber \\
&& (1-p)\,\tr \big[ W \rho \big] \;+\; \frac{p}{d} \tr W. \nonumber
\eea
Hence, entanglement can be detected for depolarization $p < p^\star$, where
\begin{equation}
p^\star \;:=\; \frac{d \abs{\tr [ W \rho ]}}{d \abs{\tr [ W \rho ]} +   \tr W}. 
\label{eq:white-noise-threshold}
\end{equation}
An EW is robust against non-unitary noise. }
\item 
{  Unitary errors: Let us consider unitary errors on a state,
\begin{equation}
\mathcal{N}^{(p)}_U(\cdot) \;=\; U\,(\cdot)\,U^\dagger,~\mathrm{where}~ \| U - \I \|_\opn = p,
\end{equation}
and $\| \cdot \|_{\opn}$ denote the operator norm $\| A \|_\opn = \sup_{\|\psi\|=1} |\langle \psi | A | \psi \rangle|$. We have the following inequality, for which the detailed derivation is shown in the Appendix: 
\bea
\Big|\tr\!\big[W \mathcal{N}(\rho) \big] -  \tr \! \big[ W \rho \big]\Big| &=& \Big|\tr\!\big[W(U\rho U^\dagger - \rho)\big]\Big| \nonumber \\
 &\;\le\;& 2\,\|W\|_\opn \, \|U-\id\|_\opn\nonumber
\eea
from which we obtain the robustness condition,
\bea
\|U-\id\|_\opn = p \;<\; p_U^\star \;:=\; \frac{|\tr [ W \rho ]|}{2\,\|W\|_\opn}.
\label{eq:unitary-threshold}
\eea
Hence, for unitaries with $p< p_{U}^*$, a resulting state $U\rho U^{\dagger}$ can be detected since,
\bea
\tr  \big[W\,\mathcal{N}_U(\rho)\big]
\le -\abs{\tr [ W \rho ]} + 2\,\|W\|_\opn \, \|U-\id\|_\opn\nonumber
\eea
where the right-hand side is negative. }
{ 
\item Local unitaries: 
A shortcoming of EWs for detecting entangled states is that EWs are not robust against local unitaries. A straightforward example is two Bell states $ |\phi^{\pm}\rangle = \frac{1}{\sqrt{2}} (|00\rangle \pm |11\rangle)$, which are equivalent to local unitaries. They are equally entangled as they are interconvertible by local unitaries. It holds that 
\bea
\tr[W|\phi^+\rangle \langle \phi^+|] + \tr[W|\phi^- \rangle \langle \phi^- |] \geq 0,\nonumber 
\eea 
since 
\bea
\tr[W \frac{1}{2} (| 00\rangle \langle 00| + |11\rangle \langle 11|) ] \geq 0. \nonumber
\eea
One can conclude that no EW detects both of them: at best, an EW detects either of the Bell states. The shortcoming may also be rephrased as the limitation of the size $D_W$ for an EW. 

For local unitaries for qubits, $U=\bigotimes_{j=1}^N U_j$ with $U_j=e^{i(\theta_j/2)\,\bs{n_j} \cdot \bs{\sigma_j}}$, then we have the following
\bea
\|U-\id\|_\opn \le \tfrac12\sum_{j}|\theta_j|,\nonumber
\eea
for which the derivation is shown in the Appendix. Then, an EW can detect local unitarily equivalent states, i.e., $\tr\!\big[W\,U\rho U^\dagger\big] < 0$
whenever
\bea
\sum_{j=1}^N |\theta_j| \;<\; p_U^\star
\eea
since it holds $\sum_{j=1}^N |\theta_j| \;<\; p_U^\star =  \abs{\tr [W \rho]} /\|W\|_\opn$.}
\end{itemize}

In general, EWs can also be improved to detect highly noisy entangled states \cite{PhysRevA.62.052310}. Suppose that for some $\epsilon >0$ and $P\geq 0$, we have
\bea
W^{'} = W-\epsilon P.\nonumber
\eea
Here $W^{'}$ is also an EW. Let $D_{W }$ denote the set of states detected by an EW $W$
\bea
D_W = \{  \rho \in S (\H \otimes \H) ~:~ \tr[W \rho] <0~\}. \label{eq:dw}
\eea
It holds that $D_W\subset D_{W^{'}}$. In this case, two EWs can be compared and we call $W^{'}$ is {\it finer} than $W$. EWs which cannot be improved anymore, that is, {\it finer} than any other, are called optimal EWs.

\subsubsection*{ Mirrored entanglement witnesses } 

Recently, a non-trivial upper bound to an EW has been presented \cite{Bae2020}. For an EW, there exists $u_W>0$ such that
\bea
u_W \geq \tr[W\sigma_{\sep}] \geq 0,~~\forall \sigma_{\sep}\in S_{\sep}.
\label{eq:upper}
\eea
where the upper bound satisfied by all separable states may be obtained by $u_W = \max_{\sigma \in S_{\sep}} \tr[W \sigma] $. {Note that, in fact, all EWs have upper bounds; hence, the construction in Eq. (\ref{eq:upper}) works for all EWs for high-dimensional and multipartite quantum systems. }

{To derive an upper bound for an EW, it suffices to restrict the optimization over product states $|ef\rangle \in \H\otimes \H$. For bipartite $d\otimes d$ systems, a product state $|ef\rangle$ can be expressed by $4(d-1)$ parameters, which are linear with respect to a dimension $d$. One should optimize the parameters to find a maximal value $u_W$. Note that the maximization is different from the eigen-decomposition that finds eigenvalues of an EW. }

{
Although the optimization may be feasible for small values of $d$, the problem is also NP-hard \cite{SHI201551}. There may also be an error in a bound, denoted by $\epsilon$, obtained from numerical optimization. One can safely choose an upper bound to an EW as $u_{W} = \widetilde{u}_{W} +\epsilon$ for a numerically calculated one $\widetilde{u}_{W}$; in this way, an upper bound can be used to unambiguously discriminate entangled states from separable states. } 

The upper bound can also be re-expressed by introducing an EW, 
\bea
\tr[M \sigma_{\sep} ] \geq 0, ~~\forall \sigma_{\sep}\in S_{\sep}.
\label{eq:lower}
\eea
where $M=   u_W \I -W  $. Entangled states detected by the upper bound in Eq. (\ref{eq:upper}) can be identified by the detection by a standard EW in Eq. (\ref{eq:lower}), and vice versa; for an entangled state $\rho$, it holds that
\bea
u_W < \tr[W \rho] \iff   \tr[ M  \rho] <0. \label{eq:equ}
\eea
Then, since $W = u_W\I - M$ and the lower bound in Eq. (\ref{eq:upper}), it holds that 
\bea
u_W \geq \tr[M \sigma_{\sep}] \geq 0. \nonumber
\eea
Similarly, we have for an entangled state $\rho$,  
\bea
 \tr[ W  \rho] <0 \iff  \tr[M \rho] > u_W . \label{eq:eql}
\eea
We call $W$ and $M$ mirrored EWs. {In Ref. \cite{Bae2020}, a mirrored EW can detect two entangled states $(|00\rangle \pm |11\rangle)/\sqrt{2}$ that are local unitarily connected. Hence, the framework of mirrored EWs enables entanglement detection more robust against unitary errors. }

Let $D_{W}^{(U)} $ denote the set of states detected by an upper bound of an EW $W$ and $D_{W}^{(L)}$ by the lower bound. We introduce the set of states detected by mirrored EWs, denoted by $D_{W}^{(M)}$, which is equal to the union,
\bea
D_{W}^{( M )} =  D_{W}^{(U)} \cup D_{W}^{(L)}. \label{eq:dwm} 
\eea
Note that $D_{W}^{(U)} = D_{M}^{(L)}$ from Eq. (\ref{eq:equ}) and $D_{W}^{(L)} = D_{M}^{(U)}$ from Eq. (\ref{eq:eql}). Remarkably, mirrored EWs can detect two Bell states $|\phi^{\pm}\rangle$; one is by a lower bound, and the other by an upper bound \cite{Bae2020}. Thus, we have 
\bea
D_{W} = D_{W}^{(L)} \subsetneq D_{W}^{(M)},\nonumber
\eea
i.e., a strictly larger set of entangled states can be detected by mirrored EWs.  

We reiterate that implementing mirrored EWs does not require an additional experimental cost. The experimental setup and measurement data are used to estimate the expectation value of a single EW with respect to unknown states. Then, mirrored EWs offer opportunities to detect entangled states twice, by lower and upper bounds, whereas a standard one does so once with a lower bound. Hence, a large set of entangled states can be verified by mirrored EWs, without an extra experimental effort. Mirrored EWs are also of theoretical interest. An EW and its upper bound $(W, u_W)$ produce its mirrored one as a non-trivial EW. Equivalently, a positive but not completely positive map can be generated. 

The ultimate usefulness and significance of mirrored EWs lie in their ability to detect entangled states by upper and lower bounds, beyond the standard ones with a lower bound. To this end, there have been extensive efforts to find mirrored operators of optimal EWs. In fact, it is highly non-trivial to have optimal EWs after mirroring. For two-qubit states, mirroring an optimal EW does not result in an EW but a quantum state; hence, the conclusion is drawn that an EW must not be optimal to have an EW after mirroring \cite{Bera:2023aa}. Numerous examples in $3\otimes 3$ and $4\otimes 4$ systems have shown that mirroring optimal EWs, one cannot detect PPTES \cite{Bera:2023aa}. Non-decomposable EWs can be a mirrored pair of EWs while neither of them is optimal. Remarkably, there is an instance of a mirrored pair of optimal EWs in $3\otimes 3$. \\

\subsubsection*{ Summary of the contributions }

In this work, we investigate mirrored EWs for multipartite and high-dimensional quantum systems. We provide a pedagogical review on mirrored EWs and present the structure of mirrored EWs via structural physical approximations and general observables. For multipartite systems, we show mirrored EWs for $n$-partite Greenberger–Horne–Zeilinger (GHZ) states \cite{GHZ}, a class of graph states such as cluster states, and tripartite bound entangled states. In all of the cases, EWs in a mirrored pair are equivalent up to local unitaries, signifying that two EWs detect distinct sets of entangled states that are equivalent up to local unitaries. Remarkably, we also present a mirrored pair of optimal EWs for three-qubit states.

For bipartite systems, we provide a fairly comprehensive review on mirrored EWs, including a pair of optimal EWs in dimension three. We then present a generalization of mirrored EWs such that EWs are mirrored via another EW, not an identity operator. In this case, it is constructive to mirror optimal EWs with each other, and they may also detect a fraction of entangled states in common. 

We remark that all of the mirrored pairs of optimal EWs obtained so far are local unitarily equivalent; that is, mirrored EWs are particularly useful to verify entangled states under rotations of local bases. The result resolves a drawback of standard EWs, which are robust against noise from interactions with an environment but not for rotations of local bases. 

This work is structured as follows. In Sec. \ref{sec:general}, we introduce EWs and their properties. The general structure of mirrored EWs is provided, as well as their relations ot structural physical approximations. In Sec. \ref{sec:multi}, we show mirrored EWs for multipartite states and investigate their properties, the separability windows, and the equivalence of local unitaries of mirrored EWs. A pair of optimal EWs are also provided for three-qubit states. In Sec. \ref{sec:bio}, mirroring of optimal EWs is investigated. A mirrored pair of optimal EWs is shown. In Sec. \ref{sec:big}, generalized mirrored EWs are presented. Mirroring EWs with respect to another EW is shown. In Sec. \ref{sec:open}, we list open problems for mirrored EWs, and in Sec. \ref{sec:con}, we summarize our findings.\\

\section{General properties}
\label{sec:general}

To begin with, we here summarize notations and terminologies to be used throughout. Let $W$ denote a standard EW and $M$ its mirrored one. We write $S_{\sep}$ as a set of separable states. Then, the set of states detected a witness $W$ with its lower bound is denoted as $D_W$ or $D_{W}^{(L)}$
\bea
D_W = D_{W}^{(L)} = \{\rho~:~\tr[W\rho] <0 \}.\nonumber
\eea
Then, $D_{W}^{(U)}$ denotes the set of states detected by an upper bound of $W$,  
\bea
D_{W}^{(U)} = \{\rho~:~\tr[W\rho] >u_W, \}\nonumber
\eea
where an upper bound $u_W$ relies on a given witness. Then, we also write by $D_{W}^{(M)} =D_{W}^{(U)} \cup D_{W}^{(L)}$, i.e., all of the states detected by both upper and lower bounds of a witness $W$.  

For multipartite systems, Pauli matrices are denoted by $\I$, $X$, $Y$, and $Z$. We use the superscript for the label of a qubit. For instance, we may write a Pauli $X$ gate on the second qubit among three qubits as $
\I^{(1)}X^{(2)} \I^{(3)}$.

\subsection{Entanglement witnesses}

We here collect definitions, results, and properties of EWs. As mentioned above, with connections to positive but not completely positive maps, we consider bipartite systems. 

Let us begin by describing a general optimality of EWs. Two EWs $W_1$ and $W_2$ can be compared by the sets of states $D_{W_1}$ and $D_{W_2}$, respectively. $W_1$ is finer than $W_2$ if $D_{W_2}\subset D_{W_1}$, similarly to entanglement, which can be compared by local operations and classical communication. It follows that a witness 
\bea
W^{'} = W -\epsilon P, \label{eq:optimizationd}
\eea
for some $\epsilon>0$ and $P\geq 0$ is finer than $W$. Once $W$ cannot be optimized anymore and thus is finer than any other one, it is called an optimal EW. 

The spanning property is useful for finding whether an EW is optimal \cite{PhysRevA.62.052310}. We write by product states $|\psi_k \otimes \phi_k\rangle$ that vanish on a witness $W$: $\langle \psi_k\otimes \phi_k| W |\psi_k \otimes \phi_k \rangle=0$. Then, we call a witness $W$ has the spanning property if 
\bea
\mathrm{spanning~property:~}\mathrm{span}\{|\psi_k\otimes \phi_k\rangle \} = \H \otimes \H,
\label{eq:opt}
\eea
and $W$ is optimal. However, the converse is not generally true. Several examples of EWs satisfying (\ref{eq:opt}) (hence optimal) already exist in the literature~\cite{S71,Hall,Breuer,Justyna1,BERA2023131,ani23,Scala_2024,BERA2023131} (cf. also review papers \cite{Kye-Rev,Hansen}). Since the condition \eqref{eq:opt} is only sufficient, optimal EWs can be identified without the spanning property \cite{Remik}. 
 
 Then, the properties of EWs follow from the classification of positive maps. A positive and completely positive map is called decomposable if it can be written in terms of the transpose $T$ such that $\Lambda = \Lambda_1 +T\circ \Lambda_2$ where $\Lambda_1$ and $\Lambda_2$ are completely positive. An EW $W\in \B(\H\otimes \H)$ is also called decomposable if 
\begin{equation}
    W = P + Q^\Gamma,\label{eq:decomposable}
\end{equation}
where $P,Q \geq 0$ and $\Gamma$ denotes the partial transposition. An optimal decomposable EW is given $W= Q^{\Gamma}$ for some $Q\geq 0$. It is clear from the definition that a decomposable EW is unable to detect PPT entangled states. 

EWs detecting PPT entangled states are dual to non-decomposable (ND) maps; therefore, they are referred to as ND EWs. Consider an ND EW, denoted by $W$, and suppose that it remains an EW after subtracting a decomposable operator $P_D$, see Eq. (\ref{eq:decomposable})
\bea
W^{'} = W-\epsilon P_D,
\label{eq:optimizationnd} 
\eea
for some $\epsilon >0$. We call $W^{'}$ ND-finer than $W$ since it detects a larger set of PPT entangled states. An EW is ND-optimal when no decomposable operator can be subtracted anymore; it is ND-finer than any other. 

NDEWs can also be described as those EWs which remain EWs after the partial transpose, i.e., both $W$ and $W^{\Gamma}$ are EWs. It turns out that $W$ is an ND-optimal EW if and only if $W$ and $W^{\Gamma}$ are both optimal EWs.

From Eqs. (\ref{eq:optimizationd}) and (\ref{eq:optimizationnd}), one can also consider an optimization with some other EWs, i.e.,
\bea
W^{'} = W-\epsilon W^{''}, \nonumber 
\eea
for some $\epsilon>0$ and {and} EW $W^{''}$. In other words, a witness $W$ is a convex combination of other EWs $W^{'}$ and $W^{''}$. Once the subtraction is not possible, i.e., $W^{'}$ is no longer an EW, an EW is called extremal. Extremal EWs are optimal, but optimal EWs are generally not extremal.


    
\subsection{ Mirrored entanglement witnesses}

EWs, denoted by $W$ and $M$, are mirrored with each other if they are related as mentioned below:
\bea
W + M = \mu \I_A \otimes \I_B, \label{eq:mr}
\eea
for some $\mu\geq 0$. Note that $\mu=2/d^2$ for $\mathrm{dim}\H=d$ if EWs are normalized, i.e., $\tr[W]=\tr[M]=1$. Note also that normalizing EWs has nothing to do with the ability to detect entangled states: for $W$ and $\alpha W$ for $\alpha >0$, it holds $D_W = D_{\alpha W}$. 

\begin{proposition}
    Mirrored EWs detect distinct sets of entangled states, i.e.,
\bea
D_{W}\cap D_{M} = \emptyset. \label{eq:emp}
\eea
\end{proposition}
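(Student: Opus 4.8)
The plan is a direct proof by contradiction that exploits the defining mirror relation in Eq.~(\ref{eq:mr}) together with the normalization of quantum states. Suppose, toward a contradiction, that some state $\rho\in S(\H\otimes\H)$ lies in the intersection $D_{W}\cap D_{M}$. By the definitions recalled above, this means simultaneously that $\tr[W\rho]<0$ and $\tr[M\rho]<0$.

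The essential step is to pair the mirror relation with $\rho$ and use linearity of the trace. Since $W+M=\mu\,\I_A\otimes\I_B$ acts as $\mu$ times the identity on $\H\otimes\H$, and $\rho$ is a genuine state with $\tr[\rho]=1$, I would compute
\bea
\tr[W\rho]+\tr[M\rho] = \tr[(W+M)\rho] = \mu\,\tr[\rho] = \mu \geq 0. \nonumber
\eea
If $\rho$ belonged to both detection sets, the left-hand side would be a sum of two strictly negative numbers, hence strictly negative, contradicting $\mu\geq 0$. Equivalently, one may argue without contradiction: $\tr[W\rho]<0$ forces $\tr[M\rho]=\mu-\tr[W\rho]>\mu\geq 0$, so $\rho\notin D_{M}$, and the symmetric statement holds with the roles of $W$ and $M$ exchanged. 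Either way, no common state can exist, and $D_{W}\cap D_{M}=\emptyset$.

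There is no substantial technical obstacle here: the statement follows in one line. The single point that must not be overlooked is that the conclusion relies on the unit-trace normalization of $\rho$, so that the identity $\I_A\otimes\I_B$ contributes exactly $\mu$ (rather than $\mu\,d^{2}$) to the sum; this is the only place where ``$\rho$ is a state'' is actually used. It is worth recording, as a byproduct of the same computation, that the two regimes are in fact separated by the full gap $\mu$: a state with $\tr[W\rho]<0$ is sent by $M$ not merely out of $D_{M}$ but strictly above $\mu$, which is precisely the upper/lower-bound correspondence encoded in Eqs.~(\ref{eq:equ}) and~(\ref{eq:eql}) with $\mu=u_{W}$.
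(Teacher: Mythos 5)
Your proof is correct and follows essentially the same route as the paper: take the trace of the mirror relation $W+M=\mu\,\I_A\otimes\I_B$ against $\rho$ to get $\tr[W\rho]+\tr[M\rho]=\mu\geq 0$, so both expectation values cannot be negative. If anything, your version is slightly more careful than the paper's, which writes the sum as $1$ rather than $\mu$; the substance is identical.
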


\begin{proof}
From Eq. (\ref{eq:mr}), $\tr[W\rho] + \tr[M\rho] = {\mu \geq 0 }$ for a state $\rho$. This means that if $\rho \in D_W$, we have $\rho\notin D_M$. Similarly, if $\rho \in D_M$ we have $\rho\notin D_W$.
\end{proof}

From the relation in Eq. (\ref{eq:mr}), non-trivial upper bounds to mirrored EWs can be found,
\bea
\mu \geq \tr[W\sigma_{\sep}] \geq 0,~~\mathrm{and}~~\mu \geq \tr[M\sigma_{\sep}] \geq 0, \label{eq:lulu}
\eea
for all separable states $\sigma_{\sep}\in S_{\sep}$. Note that entangled states detected by the upper bound of a witness $W$ violate the lower bound of its mirrored one $M$. Similarly, entangled states detected by the lower bound of a witness $W$ do not satisfy the upper bound of its mirrored one. Thus, it holds that,
\bea
D_{W}^{(U)} = D_{M}^{(L)}~\mathrm{and}~D_{W}^{(L)} = D_{M}^{(U)}. \nonumber
\eea
We also recall that all entangled states detected by upper and lower bounds of a witness $W$ are denoted by $D_{W}^{(M)} = D_{W}^{(U)}\cup D_{W}^{(L)}$. \\ 

\begin{figure}
    \centering
    \includegraphics[width=0.52 \textwidth]{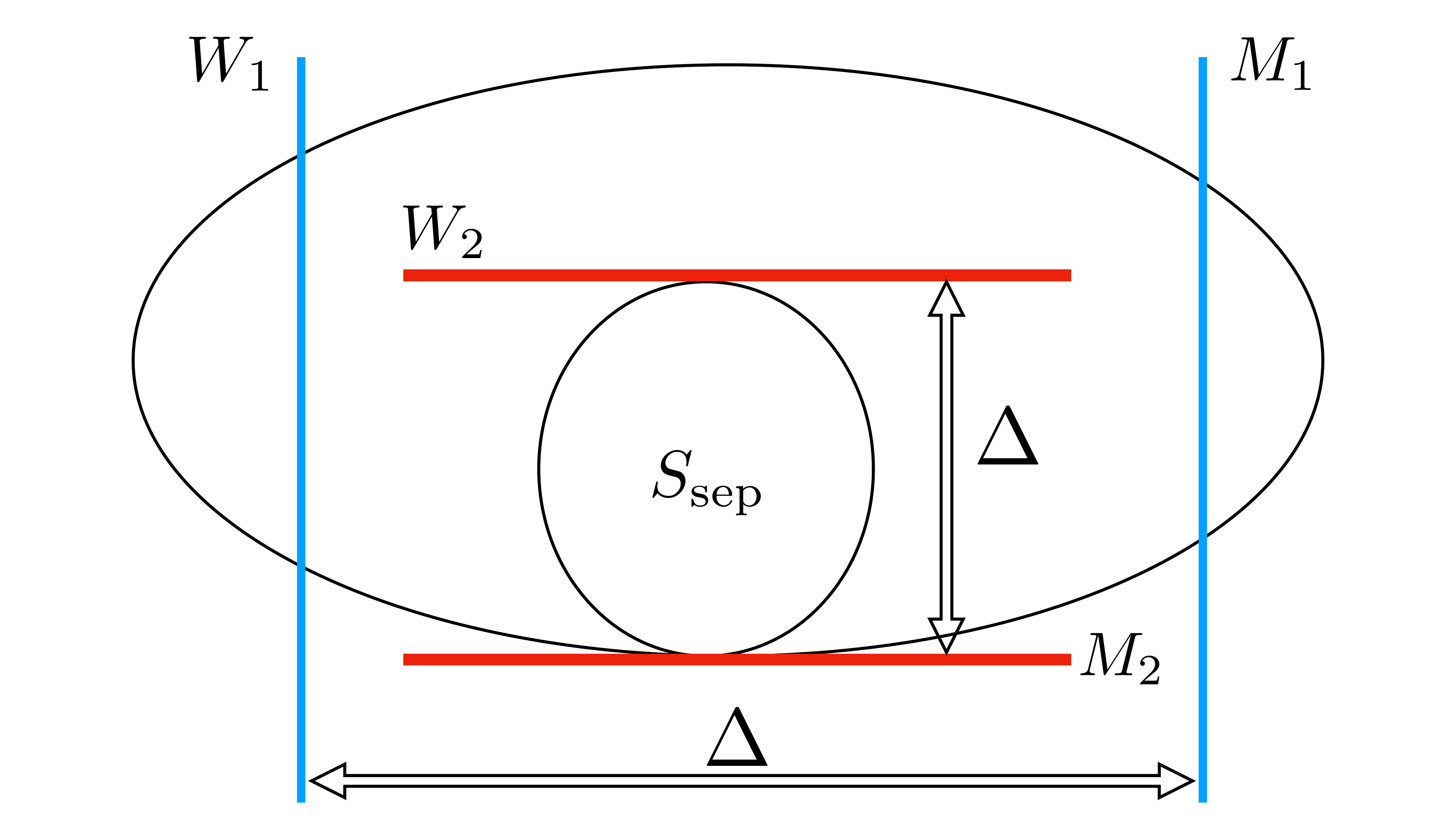}
    \caption{ Mirrored EWs for two-qubit states from Examples 1 and 2 are depicted. (Blue) A pair of non-optimal EWs $W_1$ and {$M_1$} are mirrored with each other, see Eqs. (\ref{eq:w2}) and (\ref{eq:m2}) in Example 1. They detect distinct sets of entangled states. (Red) A mirrored operator to an optimal EW is not an EW, see Eqs. (\ref{eq:wopt2}) and (\ref{eq:mopt2}) in Example 2. In both cases, the separability window is denoted by $\Delta = [0,1/2]$. }
    \label{fig:mr}
\end{figure}

{\it Example 1}.  An EW for detecting a Bell state $|\phi^+\rangle$ can be constructed as follows, 
\begin{align}
    W &= \frac{1}{4} \mathbb{I}
            - \frac{1}{4} \bigl( X^{(1)} X^{(2)} + Z^{(1)} Z^{(2)} \bigr) \nonumber \\
        &= \frac{1}{4} \mathbb{I}
            - \frac{1}{2} \bigl( \dyad{\phi^+} - \dyad{\psi^-} \bigr) \nonumber \\
        &= \frac{1}{4} \begin{pmatrix}
            0 & 0 & 0 & -1 \\
            0 & 2 & -1 & 0 \\
            0 & -1 & 2 & 0 \\
            -1 & 0 & 0 & 0
        \end{pmatrix}, \label{eq:w2}  
\end{align}
where $|\psi^-\rangle = (|01\rangle - |10\rangle)/\sqrt{2}$. Note that $W$ is normalized and thus $\mu=1/2$ in Eq. (\ref{eq:mr}) since $d=2$. It follows that
\bea
\frac{1}{2}\geq \tr[W\sigma_{\sep}]\geq 0,~~\mathrm{for}~~\sigma_{\sep}\in S_{\sep}. 
\nonumber 
\eea
It is straightforward to find its mirrored EW,
\begin{align}
    M &= \frac{1}{4} \mathbb{I}
            + \frac{1}{4} \bigl( X^{(1)} X^{(2)} + Z^{(1)} Z^{(2)} \bigr)\nonumber  \\
        &= \frac{1}{4} \mathbb{I}
            + \frac{1}{2} \bigl( \dyad{\phi^+} - \dyad{\psi^-} \bigr) \nonumber \\
        &= \frac{1}{4} \begin{pmatrix}
            2 & 0 & 0 & 1 \\
            0 & 0 & 1 & 0 \\
            0 & 1 & 0 & 0 \\
            1 & 0 & 0 & 2
        \end{pmatrix}, \label{eq:m2} 
\end{align}
which detects another Bell state $|\psi^{-}\rangle$. Note that EWs satisfy
\bea
W+M = \frac{1}{2} \I \otimes \I. \nonumber
\eea
Similarly, mirrored EWs can be constructed for other pairs of Bell states.\\

{\it Example 2}. An illuminating example is an optimal EW
\bea
W  = |\psi^-\rangle \langle \psi^-|^{\Gamma}  = \frac{1}{2}\I\otimes \I - |\phi^+\rangle \langle \phi^+|. \label{eq:wopt2} 
\eea
The mirrored operator can be found from Eq. (\ref{eq:mr}) with $\mu=1/2$,
\bea
M = |\phi^+\rangle \langle \phi^+|, \label{eq:mopt2} 
\eea
which is not an EW.  {Examples are depicted in Figure \ref{fig:mr}.  EWs for two-qubit states are decomposable, in which mirrored EWs to optimal EWs correspond to non-negative operators, that is, no longer valid EWs \cite{Bera:2023aa}. \\

For mirrored EWs, the separability window is defined as a range between the lower and upper bounds of a normalized witness $W$. From Eq. (\ref{eq:lulu}), assuming that a witness $W$ is normalized, the separability window is given as
\bea
\mathrm{separability~window:~} \Delta= [0,\mu].  \label{eq:swin} 
\eea
The caveat when computing the separability window is that although a witness $W$ is normalized, its mirrored one is not necessarily normalized. One can also rephrase the separability window as an upper bound to a normalized EW.

An optimization of mirrored EWs and the separability window is related as follows. For a non-optimal EW, let us apply an optimization, 
\bea
W^{'} = W - \epsilon P \nonumber 
\eea
for some $\epsilon\geq 0$ and non-negative operator $P\geq 0$. Let $M$ denote the mirrored operator to $W$. From Eq. (\ref{eq:mr}), it follows that $W^{'} + \epsilon P + M = \mu \I \otimes \I$. This means that the mirrored operator to $W^{'}$ is given by
\bea
M^{'} = M + \epsilon P, \nonumber 
\eea
while the separability window is kept, i.e., 
\bea
W^{'} + M^{'} =\mu\I\otimes \I. \nonumber 
\eea
That is, $W^{'}$ is finer than $W$ whereas $M$ is finer than $M^{'}$, or vice versa. Hence, while the separability window is kept, optimization of an EW decreases the capability of its mirrored EW in detecting entangled states, see Fig. \ref{fig:mropt}. 

{Note also that a separability window $\mu$ can be chosen such that an EW has its mirror as a valid EW. For all $|ef\rangle \in \H \otimes \H$, we have
\bea
\langle ef | M | ef\rangle = \langle ef| ( \mu \I \otimes \I - W) |ef\rangle \geq 0. \nonumber
\eea
A mirrored operator $M$ can be a valid EW if there exists an entangled state $\tau$ such that $\tr[M\tau]<0$, that is, $\tr[  (\mu\I\otimes \I - W)\tau] <0$. From two conditions, it holds that
\bea
\max_{|ef\rangle} \langle e f | W | e f \rangle \leq \mu < \tr [W\tau]. \label{eq:cmu}
\eea
Note that an entangled state $\tau$ is not detected by $W$ but by $M$; thus, $\tr[W\tau ] \geq 0$. let us summarize the result as follows.} 

{
\begin{proposition}
For a witness $W$, if there exists an entangled state $\tau$ not detected by $W$ such that it satisfies the following,
\bea
\max_{|ef\rangle} \langle e f | W | e f \rangle <  \tr [W\tau] \label{eq:coon}
\eea
then a separability window $\mu$ can be found such that a mirrored operator $M$ is a valid EW.  
\end{proposition}
}

{A witness $W$ has its mirrror as a positive operator if Eq. (\ref{eq:coon}) does not hold. }

\begin{figure}
    \centering
    \includegraphics[width=0.52 \textwidth]{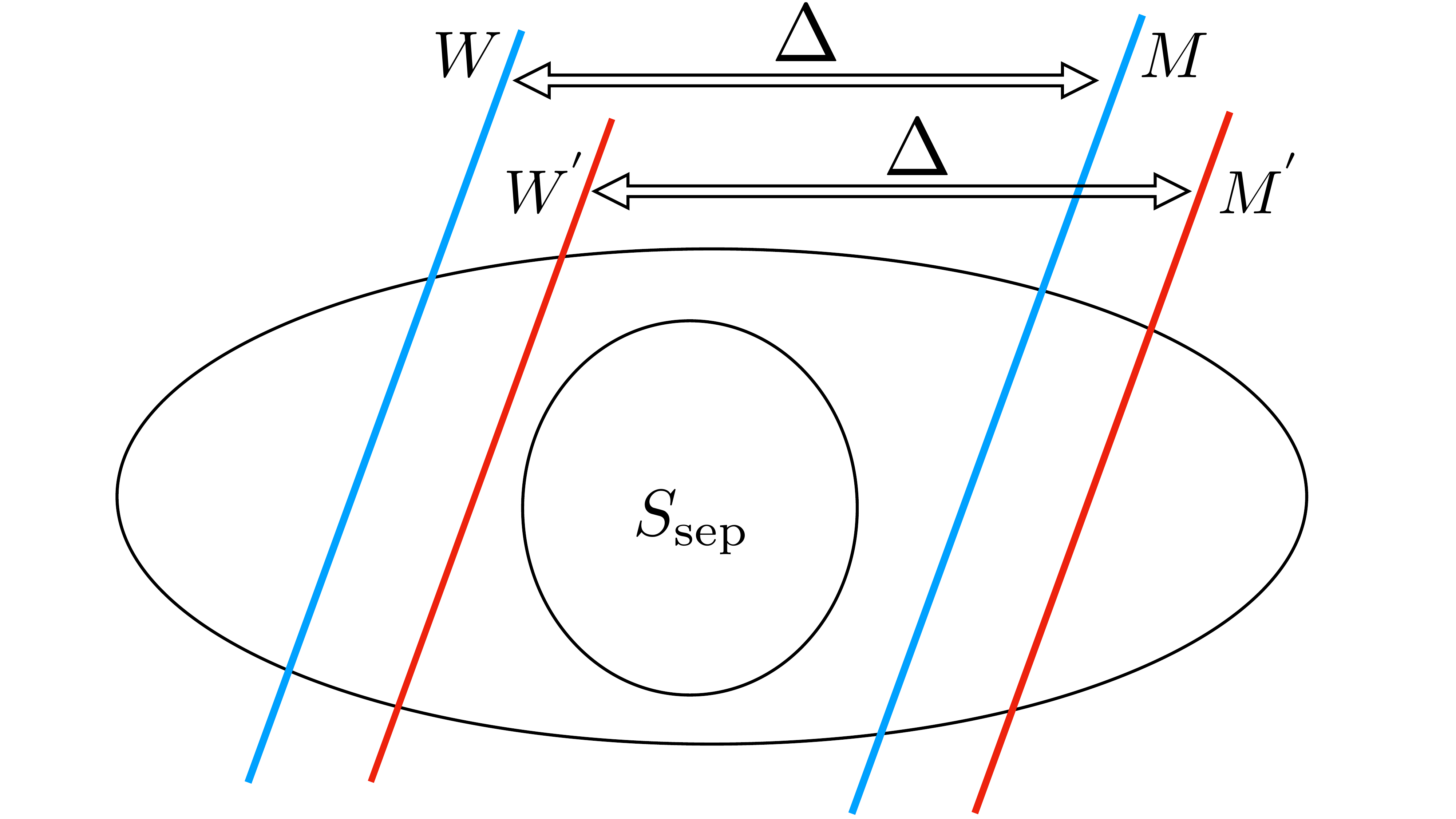}
    \caption{ A witness $W$ is optimized by subtracting a positive operator, and thus $W^{'}$ is finer than $W$. The consequence is that its mirror detects a smaller set of entangled states, $M$ is finer than $M^{'}$, while the separability window is kept constant. }
    \label{fig:mropt}
\end{figure}

\subsection{Structural physical approximation}

The structural physical approximation (SPA) transforms positive maps into quantum operations. It can be equivalently formulated as a transformation of EWs to positive operators, which can be interpreted as quantum states. For convenience, we here omit the normalization of resulting positive operators, although EWs are normalized, and write the SPA as follows, 
\bea
\mathrm{SPA}:~W\mapsto \widetilde{W} = W + p \I\otimes \I, \nonumber 
\eea
with minimal $p$ such that $\widetilde{W}\geq0$. The SPA admixes an identity operator, which is of full rank, to wash out negative projections of an EW so that the resulting operator is non-negative for all states. Since optimal EWs tightly characterize the set of separable states, SPA of optimal EWs are closely connected to separable states \cite{PhysRevA.78.062105, Bae_2017, ende2025simplesufficientcriteriaoptimality}. 

Note that negative projections are used to detect entangled states, while positive ones are used to maintain non-negative expectation values for separable states; for a decomposition $W = W_+ - W_-$ where $W_{\pm}\geq 0$, a witness $W$ may detect entangled states with a negative projection $W_-$. 

Let us consider another witness $M$ and utilize its positive projections for detecting entangled states, i.e., $-M$. We introduce the mirrored SPA as follows, 
\bea
\mathrm{mSPA}:~M\mapsto \widetilde{M} = -M + q \I\otimes \I, \nonumber 
\eea
with a minimal $q$ such that $\widetilde{M}\geq 0$
Note that SPA and mSPA can be applied to arbitrary EWs.

{We thus arrive at the following derivations.}
\begin{itemize}
    \item For an EW, i.e., $W=M$, one applies the SPA and the mSPA. It holds that 
    \bea
    \widetilde{W} + \widetilde{M} = (q+p) \I \otimes \I. \nonumber 
    \eea
    \item For two EWs, the SPA of one and the mSPA of the other coincide, i.e., $\widetilde{W} = \widetilde{M}$. Then, two EWs are mirrored with each other
    \bea
    W + M  = (q-p) \I \otimes \I. \nonumber 
    \eea
\end{itemize}
Note that $q>p$ in general, and thus the separability window is given by $\Delta = [0,q-p]$. Both $W$ and $M$ have an upper bound $\mu = q-p$ satisfied by all separable states.

\subsection {Mirrored EWs as observables: POVM clouds}

One of the lessons from mirrored EWs is that it is, in fact, handy to construct a scheme for verifying entangled states without referring to positive but not completely positive maps directly. Let us begin with an arbitrary observable $\mathbbm{O} \in \B(\H\otimes \H)$. That is, one can describe $\mathbbm{O} = \sum_{ij} c_i \Pi_i \otimes \Pi_j$ with POVM elements $\{\Pi_i \}$, $\{\Pi_j \}$ and $\{c_i \}$ so that it can be estimated in an experiment. 

For an operator $\mathbbm{O}$, we compute,
\bea
u_{\mathbbm{O}} & = & \max_{\sigma\in S_{\sep}} \tr[\mathbbm{O} \sigma], ~~\mathrm{and}~~ l_{\mathbbm{O}}=\min_{\sigma\in S_{\sep}} \tr[\mathbbm{O} \sigma],  \label{eq:olu}
\eea
where it suffices to run the optimization over pure states. With the bounds above, we construct a detection scheme with lower and upper bounds satisfied by all separable states,
\bea
u_{\mathbbm{O}} \geq \tr[\mathbbm{O} \sigma_{\sep}]\geq l_{\mathbbm{O}}, ~~~\forall \sigma_{\sep} \in S_{\sep}. \nonumber 
\eea
The detection scheme above is equivalent to that of mirrored EWs in Eq. (\ref{eq:mr}) by deriving EWs as follows,
\bea
W = \mathbbm{O} - l_{\mathbbm{O}}\mathbbm{I} \otimes \mathbbm{I}, ~~\mathrm{and}~~M = u_{\mathbbm{O}} \mathbbm{I} \otimes \mathbbm{I} - \mathbbm{O}, \nonumber 
\eea
which are mirrored with each other since it holds that
\bea
W + M = (u_{\mathbbm{O}} - l_{\mathbbm{O}})\mathbbm{I\otimes I}, \label{eq:ms} 
\eea
where $\mu = u_{\mathbbm{O}} - l_{\mathbbm{O}}>0$. Hence, mirrored EWs are obtained. We remark that EWs can be constructed with optimizations in Eq. (\ref{eq:olu}). 

Note that the construction of the detection scheme in Eq. (\ref{eq:olu}) generally applies to resource theories in which free resources form a convex set \cite{RevModPhys.91.025001}. In fact, the derived mirrored structure in Eq. (\ref{eq:ms}) exploits the only fact that the set of separable states is convex.

\begin{figure}[t]
\centering
\includegraphics[width=0.45\textwidth]{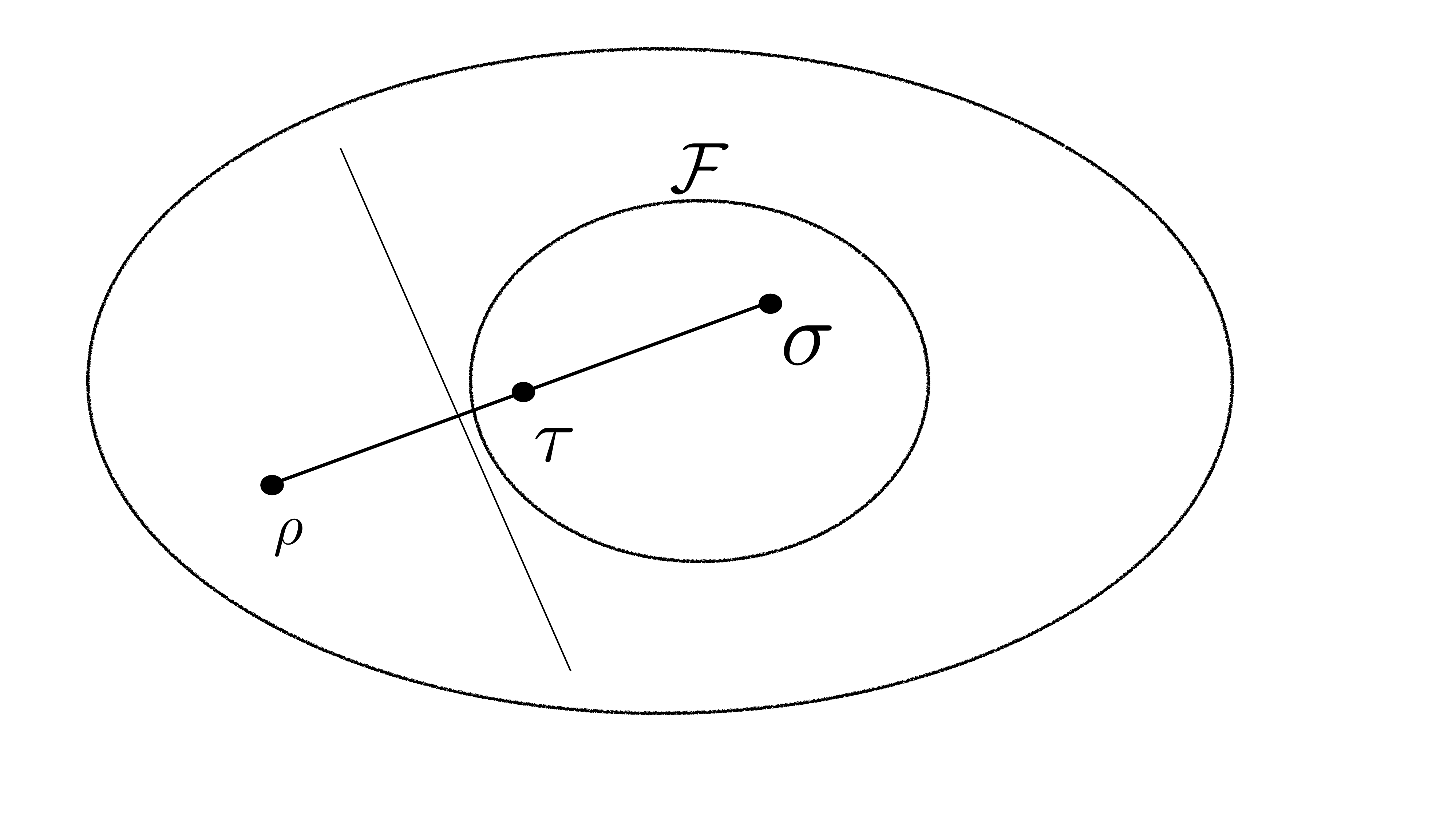}
\caption{ {A resource state $\rho\notin \mathcal{F}$ can be transformed to a free one by admixing a free state $\sigma\in\mathcal{F}$ such that $\tau = (1-\mu) \rho +\mu \sigma \in \mathcal{F}$ for $\mu\in (0,1)$ and $\mu$ may give a quantification, which can be equivalently described by witnesses} \cite{PhysRevX.9.031053}.  }
\label{fig:resource}
\end{figure}

\subsection{Generalization to resource theories }

{ Resource theories generalize the structure of the entanglement theory by identifying free resources and free operations to separable states and local operations and classical communication (LOCC) \cite{RevModPhys.91.025001}; entangled states are resourceful. Free resources from a convex set, analogously to separable states. Then, resources that are not free may be quantified by finding how far they are from the convex set of free resources. Since free resources are convex, the Hahn-Banach theorem tells that there exist separating hyperplanes that distinguish resources from the convex set of free ones. }

{So far, the generalized robustness, also analogous to the robustness of entanglement \cite{PhysRevA.59.141, PhysRevA.67.054305}, applies to the quantification in resource theories. Robustness measures in a resource theory are also equivalently described by witnesses separating resources from free ones \cite{PhysRevX.9.031053}; the result is analogous to the entanglement theory, where EWs are converted to the generalized robustness \cite{PhysRevA.72.022310, Eisert_2007}. Note also that witnesses in resource theories are closely connected to discrimination tasks \cite{PhysRevX.9.031053, Bae_2015, PhysRevLett.122.140404}.  } \\


\subsection{EWs for graph states}

In this subsection, we summarize graph states, an important class of multipartite entangled states.

\subsubsection{Graph states}
To introduce graph states, let $G = (V, E)$ denote a graph consisting of a set $V$ of $n$ vertices and a collection of edges $E$ connecting pairs of these vertices. For a vertex $i$ in the graph $G$, we write its neighborhood $I_i(G)$ as the set of vertices directly connected to the vertex. For a vertex $i$, we introduce an operator $g_i$,
\begin{equation}
g_i = X^{(i)} \bigotimes_{j \in N(i)} Z^{(j)} \nonumber
\end{equation}
An $n$-qubit graph state $\ket{G}$ is defined as an eigenstate, $g_i \ket{G} = \ket{G}$ for $i = {\mu}, \ldots, n$ with eigenvalue $+1$. 

The {stabilizer} of the graph state is defined as,
\begin{equation}
\mathcal{S} (G) = \left\{ s_j \right\}~~\mathrm{where}~~
\quad s_j = \prod_{i \in I_j(G)} g_i, \\ ,
\label{stabdef}
\end{equation}
for a subset of the vertices $I_j(G)$. If a generator $g_k$ appears in the product defining $s_j$, i.e., $k \in I_j(G)$, we say that $s_j$ {contains} $g_k$. It is clear that $s_j \ket{G} = \ket{G}$.

A useful property of graph states and their stabilizers is that the projector onto $\ket{G}$ can be expressed as the sum of all stabilizer elements:
\begin{equation}
\frac{1}{2^n}\sum_{i=1}^{2^n} s_i = \ketbra{G}.
\label{stab}
\end{equation}
Note that the stabilizer $\mathcal{S}(G)$ forms a commutative group and has $2^n$ \cite{Toth2005}. 

 \subsubsection{ Two-colorable graph states} 

Two-coloring of a graph, or a bipartite coloring, is a way to color the vertices of a graph using two colors, which we write by red and blue, such that no two adjacent vertices share the same color. A graph is considered two-colorable if such a coloring exists. We now introduce {stabilizer projectors}, which are in turn employed in the construction of EWs for two-colorable graph states. 


\begin{figure}[t]
\centering
\includegraphics[width=0.45\textwidth]{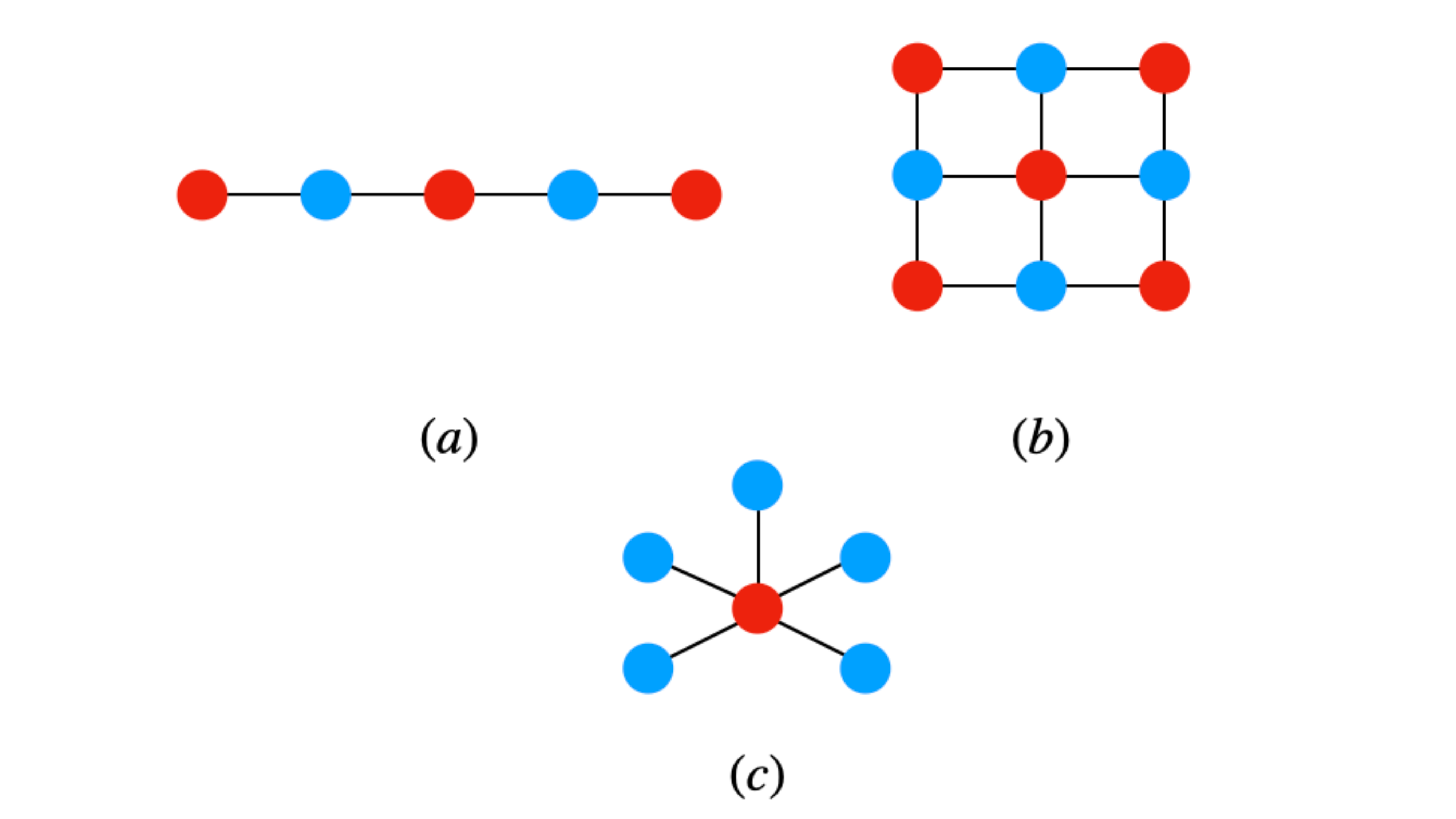}
\caption{Examples of two-colorable graph states are shown: (a) a one-dimensional cluster state and (b) a two-dimensional cluster state (c) a GHZ state. }
\label{fig:2c}
\end{figure}

Examples of two-colorable states include one-dimensional cluster states, two-dimensional cluster states, $n$-partite GHZ states, see Fig. \ref{fig:2c}.

\subsubsection{ Constructing EWs}

For graph states, we consider instances of EWs, called canonical, alternative, and two-measurement ones. 

Suppose that one wants to construct an EW for detecting a particular $n$-partite entangled state $\ket{\psi}$. A straightforward construction may be found as follows,
\be \label{canEW}
 W  = \alpha\, \id^{\otimes n} - \ket{\psi}\bra{\psi}, \nonumber 
\ee
where $\alpha$ may be computed such that an EW is non-negative for all separable states. For instance, graph states $\ket{G}$, which are genuinely multipartite entangled, have $\alpha \geq 1/2$. An EW constructed in this manner is called a canonical EW. For an $n$-partite stabilizer state, we also have $\alpha\geq 1/2$. EWs can be expressed in terms of generators~\cite{Muller:PhysRevA.101.012317},
\be
\mathrm{canonical~EW:}~W_c = \frac{1}{2}\, \id^{\otimes n} - \prod^n_{i=1} \frac{\id^{\otimes n} + g_i}{2} \label{eq:can}
\ee
Note that an EW above can be estimated by measurements on $2^n - 1$ stabilizers. 

Measurement settings can be reduced in EWs in the following~\cite{Guhne:PhysRevA.72.022340, Guhne:PhysRevLett.94.060501, Muller:PhysRevA.101.012317}:
\be 
\mathrm{alternative~EW:}~ W_a = \frac{n - 1}{2}\, \id^{\otimes n} - \frac{1}{2} \sum^n_{i=1} g_i, \label{altEW}
\ee
which are called alternative EWs. These can detect genuinely multipartite entangled states. An alternative EW can be estimated by measurements on $n$ generators.

For 2-colorable graph states, two-measurement EWs can be defined as follows~\cite{Muller:PhysRevA.101.012317}, 
\bea 
&&\mathrm{two}-\mathrm{measurement~EWs:}~ \nonumber \\
&& W_{2m} = \frac{3}{2}\, \id^{\otimes n} - \prod_{{\rm red \;}i} \frac{\id^{\otimes n} + g_i}{2} - \prod_{{\rm blue \;}i} \frac{\id^{\otimes n} + g_i}{2},\label{2mEW}
\eea
where the vertices are colored in two different colors, red and blue, corresponding to two measurement settings. {For $n=2$, substituting $g_i$ into Eq.~(26) yields that only the two settings $XX$ and $ZZ$ are required.
In general, $^{11}$} For example, two-measurement EWs for GHZ states can be computed by measuring $X$ on each qubit and $Z$ again on each qubit, i.e., $X^{(1)} \otimes X^{(2)} \otimes \cdots \otimes X^{(n)}$ and $Z^{(1)} \otimes Z^{(2)} \otimes \cdots \otimes Z^{(n)}$. \\

\section{ Multipartite systems}
\label{sec:multi}

In this section, we construct mirrored EWs for multipartite quantum systems and investigate their properties. We consider various mirrored EWs for graph states and compare their separability windows. We also present a mirrored pair of optimal EWs for three-qubit states. 

\subsection{ $n$-partite GHZ states}

Let us begin with $ n$-partite GHZ states, denoted by
\begin{equation}
    \ket{GHZ_n} = \frac{1}{\sqrt{2}} \left( \ket{0}^{\otimes n} + \ket{1}^{\otimes n} \right). \label{eq:nghz}
\end{equation}
which are also instances of two-colorable states. An $n$-partite GHZ state is stabilized by the following set of generators:
\bea
g_1 = X^{(1)} X^{(2)} \cdots X^{(n)}~\mathrm{and}~  g_i = {Z^{(i-1)} Z^{(i)}} \nonumber 
\eea  
for $i=2,\cdots, n-1$. Note that $g \ket{GHZ_n} = \ket{GHZ_n}$ for all generators $g$. Thus, it also follows that 
\bea
    \dyad{GHZ_n} = \prod_{i=1}^{n} \left( \frac{\id^{\otimes n} + g_i}{2}\right).
\eea
The projection onto a GHZ state is uniquely determined by generators. In what follows, we derive mirrored EWs for $n$-partite states and compare separability windows.

\subsubsection{Mirrored EWs}

 Firstly, we derive a mirrored EW for the canonical one from Eq. (\ref{eq:can}), 
\bea
W_c &=& \frac{1}{2^{n-1}-1} \left( \frac{1}{2} \id^{\otimes n} - \dyad{GHZ_n} \right), \nonumber \\
M_c &=& \dyad{GHZ_n}, \nonumber 
\eea
where the mirrored operator is not an EW. Note that a witness {$W_c$} is normalized. The separability window is computed as $\Delta = [0,\mu_c]$ where
\bea
\mu_c = \frac{1}{2^n -2}, \label{eq:delc}
\eea
which quickly converges to $0$ as $n$ tends to be large. 

Secondly, let us consider alternative EWs from Eq. (\ref{altEW}) and compute their mirrored EWs 
\bea
    W_a  &=& \frac{1}{2^n} \left( \id^{\otimes n} - \frac{1}{n-1} \sum^{n}_{i=1} g_i \right),~~\mathrm{and} \nonumber \\
    M_a   &= &\frac{1}{2^n} \left( \id^{\otimes n} + \frac{1}{n-1} \sum^{n}_{i=1} g_i \right), \nonumber
\eea
One can find that the mirrored one {$M_a$  } detects GHZ states equivalent to the state in Eq. (\ref{eq:nghz}) up to local unitaries, 
\bea
    \ket{ GHZ_{n}^{(a)} } = \frac{1}{\sqrt{2}} \bigl( \ket{01010\cdots x} - \ket{10101\cdots (x\oplus 1)} \bigr) ~~~~~~\label{eq:nghza}
\eea
where $x=1$ if $n$ is even and $x=0$ otherwise, and $\oplus$ denotes the bitwise addition. Note that two states $|GHZ_n\rangle$ and {$\ket{ GHZ_{n}^{(a)} }$} are connected by local unitaries, 
\bea
&&    \id^{(1)} \otimes Y^{(2)} \otimes \cdots \otimes Y^{(n-1)} \otimes \id^{(n)} ~~\mathrm{for~n~odd} ~ \nonumber \\
\mathrm{and} &~~& 
    \id^{(1)} \otimes Y^{(2)} \otimes \cdots \otimes \id^{(n-1)} \otimes Y^{(n)} ~~\mathrm{for~n~even}. 
 \nonumber 
\eea
Since a witness {$W_a$} is normalized, we have a separability window $\Delta_c = [0,\mu_a]$ where
\bea
\mu_a = \frac{1}{2^{n-1}} \label{eq:dela}
\eea
which approaches zero rapidly as $n$ increases.

Thirdly, we compute mirrored operators for two-measurement EWs, see Eq. (\ref{2mEW}) {
\bea
W_{2m} &=& \frac{1}{2^n - 2} \left(
            \frac{3}{2} \id^{\otimes n}
            - \frac{\id^{\otimes n} + g_1}{2}
            - \prod_{i=2}^{n} \frac{\id^{\otimes n} + g_i}{2} \right),~\mathrm{and} \nonumber \\
M_{2m} &=& \frac{1}{2^n + 2} \left( \frac{1}{2} \id^{\otimes n} + 
            \frac{\id^{\otimes n} + g_1}{2}
            + \prod_{i=2}^{n} \frac{\id^{\otimes n} + g_i}{2}
        \right) \nonumber 
\eea }
where the mirrored operator is positive semidefinite. Note that a witness {$W_{2m}$} is normalized and thus the separabilty is computed as $\Delta_{2m} = [0,\mu_{2m}]$ where 
{
\bea
\mu_{2m} = \frac{1}{2^{n-1}} \label{eq:mu2m}
\eea
}
which quickly converges to zero as $n$ tends to be large. 
 
\begin{figure}
    \centering
    \includegraphics[width=0.52 \textwidth]{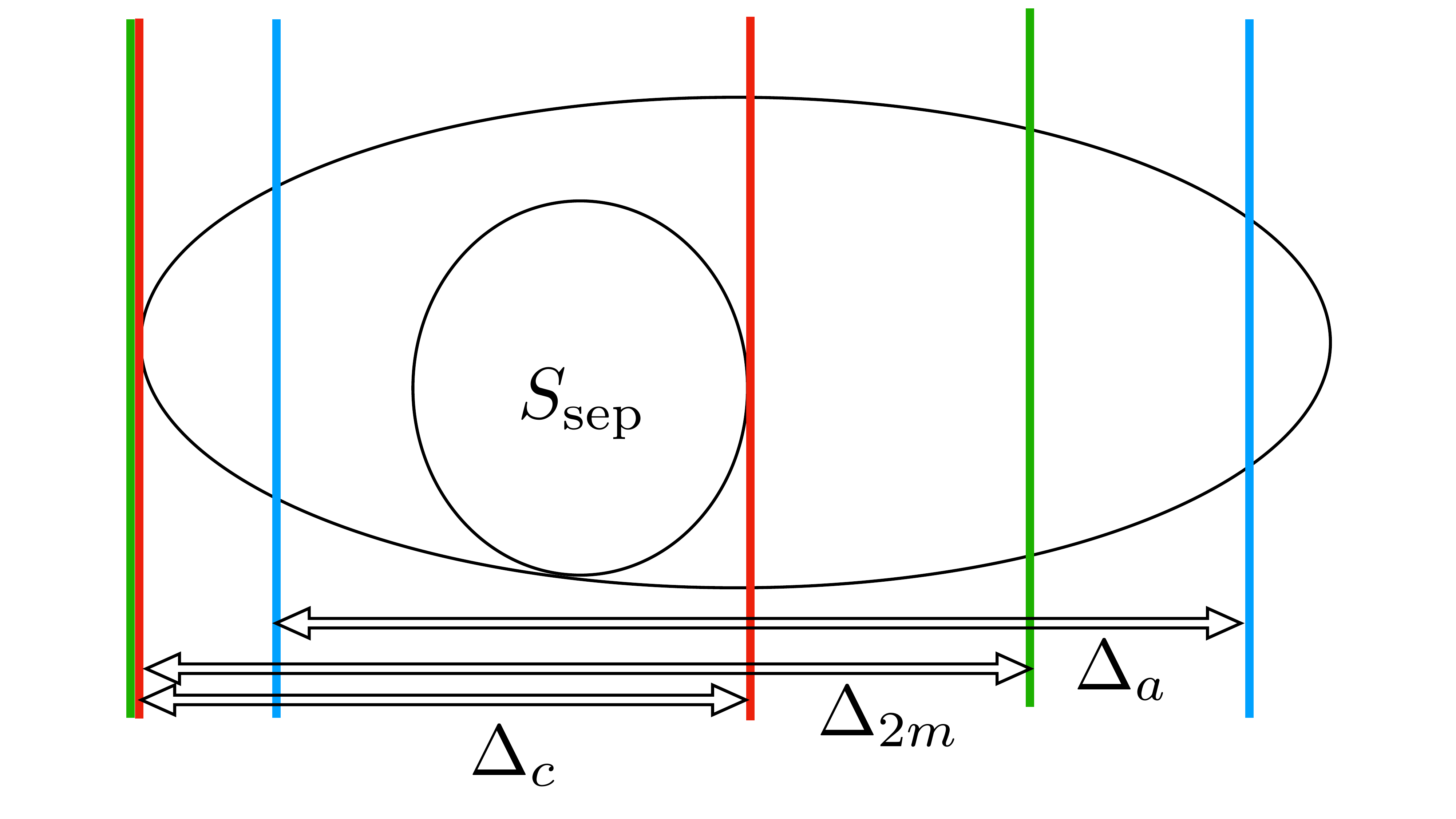}
    \caption{Separability windows of mirrored EWs are compared, see Eq. (\ref{eq:delh}). An optimal EW having its mirror as a positive operator has the smallest separability window (red). Two-measurement (green) and alternative (blue) EWs, which are not optimal, have larger separability windows, $\Delta_{2m}<\Delta_a$. Note that a mirrored pair of EWs can be constructed for alternative EWs. Two alternative EWs in a mirrored pair are equivalent up to local unitaries. }
    \label{fig:windows}
\end{figure}

\subsubsection{ Comparison of the separability window}

Mirrored EWs with a smaller separability window may detect a larger set of entangled states. The three types of EWs above can be compared in terms of the separability window. From Eqs. (\ref{eq:delc}), (\ref{eq:dela}), and (\ref{eq:mu2m}). Overall, the separability windows satisfy the hierarchy
{
\bea
\Delta_c \leq \Delta_{2m} = \Delta_a, \label{eq:delh}
\eea
}
where the equality holds for $n=2$, see Fig. \ref{fig:windows}.

\begin{figure}
    \centering
    \includegraphics[width=0.47 \textwidth]{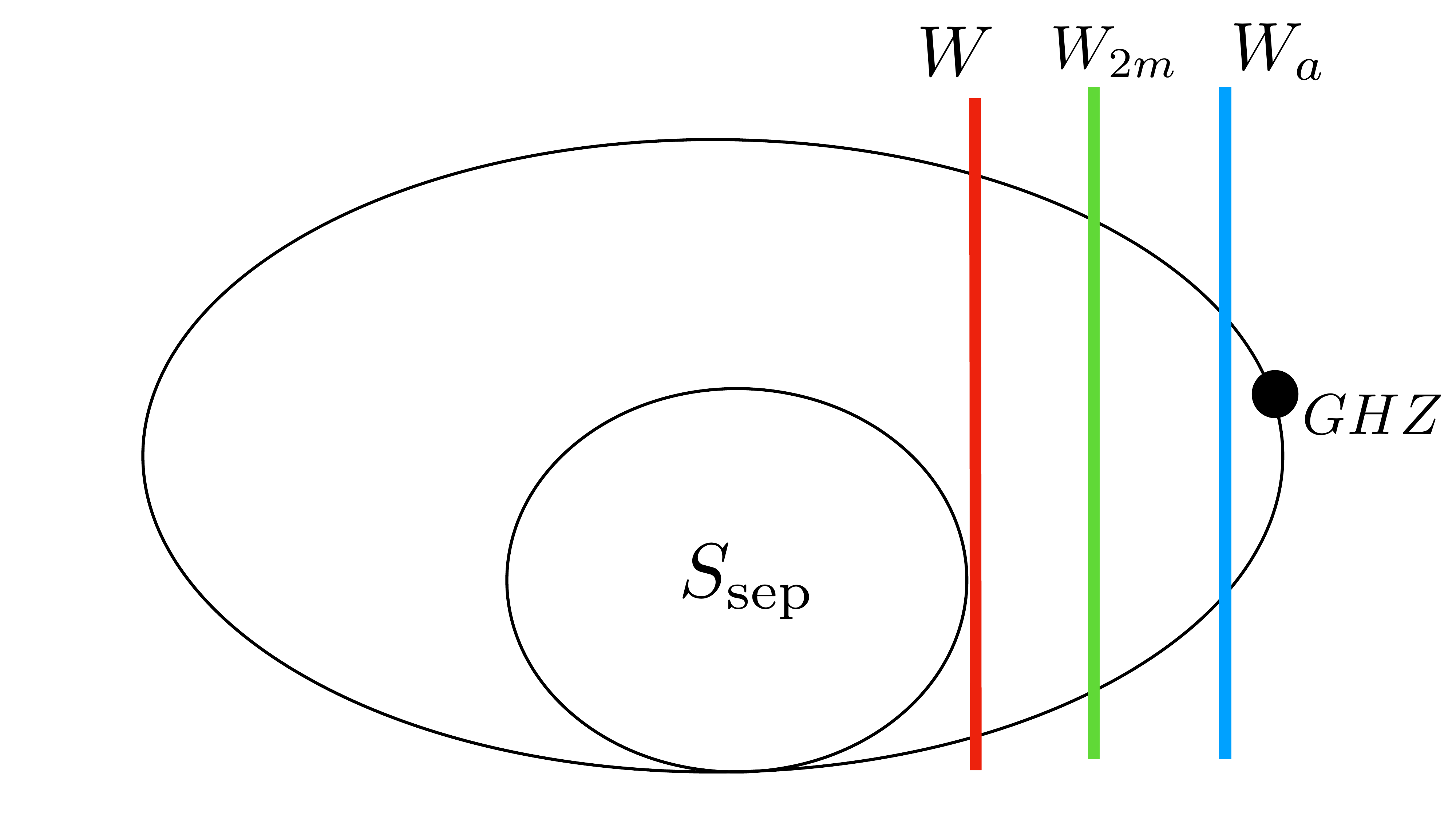}
    \caption{ Negative expectation values show the robustness of EWs against noise on quantum systems, see the comparisons in Eqs. (\ref{eq:noc}), (\ref{eq:noa}), and (\ref{eq:no2m}). }
    \label{fig:ewu}
\end{figure}

\subsubsection{ Mirrored EWs against local unitaries}

Noise on quantum systems generally decreases the entanglement existing in a quantum state. It can be classified into i) noise due to interaction with an environment and ii) unwanted changes of local basis. The former decreases the purity of quantum states, and so does entanglement. The latter, however, keeps the entanglement. Nonetheless, EWs are robust to noise in the first type but not the latter one, to which mirrored EWs provide a partial solution. 

The canonical witness $W_c$ cannot detect a GHZ state in Eq. (\ref{eq:nghza}). Instead, one can take an alternative EW, $W_a$, in Eq. (\ref{altEW}). Estimating an alternative EW also finds its mirrored EW, which can detect GHZ state in a different local basis in Eq. (\ref{eq:nghza}). That is, we have a limited capability with a canonical EW, 
\bea
&& GHZ_n \in D_{W_c}, ~~\mathrm{but} ~~ GHZ_{n}^{(a)}  \notin D_{W_c},
\nonumber
\eea
whereas mirrored EWs exploiting both upper and lower bounds can detect a larger set of entangled states, 
\bea
&&  GHZ_n, ~GHZ_{n}^{(a)} \in D_{W_{a}}^{(U)}\cup D_{W_{a}}^{(L)}. ~ \nonumber 
\eea
Therefore, mirrored EWs offer a strategy of detecting entangled states under rotations of local unitaries. The advantage makes them more suitable for practical entanglement detection in the presence of local unitary noise.

Note also that alternative and two-measurement EWs having non-trivial mirrored operators are less robust against noise due to unwanted interactions with the environment. For an $n$-partite GHZ state in Eq. (\ref{eq:nghz}), we have
\bea
\tr[W_c |GHZ_n\rangle \langle GHZ_n|] &= &  \frac{-1}{2^{n}-2}, \label{eq:noc} 
\eea
which means that $W_c$ provides a negative expectation value up to a noise fraction less than $(2^n-2)^{-1}$. {Additionaly, witness $W_c$ can tolerate unitary rotation error for GHZ state whenever $p < p^{(c)}_U$ where
\bea
    p^{(c)}_U = \frac{1}{2}.
\eea
since
\bea
\| W_c \|_\opn = 1/[2(2^{n-1}-1)].
\eea}
We also find that 
\bea
\tr[W_a |GHZ_{n}\rangle \langle GHZ_{n}|] &= &  \frac{-1}{ (n-1)2^n}, \label{eq:noa} \\
\mathrm{and}~~\tr[W_{2m} |GHZ_{n} \rangle \langle GHZ_{n}  |] &= &   \frac{-1}{2(2^{n}-2)}, \label{eq:no2m} 
\eea
which shows that the other EWs can tolerate smaller fractions of noise, see Fig. \ref{fig:ewu}. It is also straightforward to find that
\bea
\tr[M_a |GHZ_{n}^{(a)}\rangle \langle GHZ_{n}^{(a)} |] 
& = &\frac{-1}{(n-1)2^n}, \nonumber 
\eea
which is equal to Eq. (\ref{eq:noa}).

{For unitary errors, witness $W_a$ and $M_a$ can tolerate such errors for state $\ket{GHZ_n}$ and $\ket{GHZ^{(a)}_n}$, respectively, whenever $p < p^{(a)}_U$ where
\bea
    p^{(a)}_U = \frac{1}{2n}.
\eea
since 
\bea
\| W_a \|_\opn = \| M_a \|_\opn = n/[(n-1)2^n].
\eea 
Witness $W_{2m}$ can tolerate unitary rotation error for GHZ state whenever $p < p^{(2m)}_U$ where
\bea
    p^{(2m)}_U = \frac{1}{6}.
\eea
since
\bea
\| W_a \|_\opn = 3/[2(2^n-2)].
\eea
}

\begin{table*}[t]
{
\centering
\small
\setlength{\tabcolsep}{9pt}
\renewcommand{\arraystretch}{1.5}
\begin{tabular}{l c c c c c}
\hline
\textbf{EW} & \textbf{Mirror} & \textbf{Optimality} & \textbf{Separability window ($\mu$)} & $\boldsymbol{p^\star}$ & $~~\quad\boldsymbol{p_U^\star}$ \\
\hline
\rule{0pt}{6ex}
canonical        & positive            & O \cite{Toth2005} & $\displaystyle \frac{1}{2^{n}-2}$     & $\displaystyle \frac{2^{\,n-1}}{2^{\,n}-1}$            & $~~\quad\displaystyle \frac{1}{2}$ \\
\rule{0pt}{6ex}
alternative      & EW (non-optimal)    & X   & $\displaystyle \frac{1}{2^{\,n-1}}$   & $\displaystyle \frac{1}{n}$                              & $~~\quad\displaystyle \frac{1}{2n}$ \\
\rule{0pt}{6ex}
two-measurement  & positive            & O \cite{Toth2005} & $\displaystyle \frac{1}{2^{\,n-1}}$   & $\displaystyle \frac{2^{\,n-1}}{3\cdot 2^{\,n-1}-2}$    & $~~\quad\displaystyle \frac{1}{6}$ \\
\rule{0pt}{1ex}
& & & & & \\
\hline
\end{tabular}
\caption{ Three constructions of EWs for detecting GHZ states are compared. For instance, a canonical EW has its mirror as a positive operator, i.e., not a valid EW. A separability window is denoted by $\mu$ where $\Delta=[0,\mu]$. Detection by an EW can tolerate depolarization noise up to $p^{\star}$ in Eq. (\ref{eq:white-noise-threshold}) and unitary noise up to $p_{U}^{\star}$ in Eq. (\ref{eq:unitary-threshold}). The optimality of the canonical and two-measurement EWs is shown in Ref. \cite{Toth2005}. 
Among the three constructions, canonical EWs have the smallest separability window and are the most robust against depolarization and unitary noise. Alternative EWs that are not optimal have non-trivial pairs. Two-measurement EWs are also optimal and share similar properties with canonical EWs. }
\label{tab:ghz-witness-summary}}
\end{table*}

\subsection{EWs for graph states}

So far, we have found that alternative EWs are useful to construct mirrored EWs. We further develop mirroring alternative EWs to detect graph states and, in particular, focus on alternative EWs for two-colorable graph states such as cluster states. Let $\ket{G}$ denote an $n$-qubit graph state $\{g_i\}_{i=1}^n$ generators, a product of Pauli operators $X$ and $Z$. An EW for a graph state $|G\rangle$ is given as follows \cite{Toth2005}
\begin{equation} 
    W = (n-1)\, \id^{\otimes n} - \sum_{i=1}^n g_i.\label{eq:graphew}
\end{equation}
Note that an EW above can detect genuinely entangled multipartite states \cite{Toth2005, zhou2019}.

We obtained its mirrored EW as follows,
\begin{equation} 
    M = (n-1)\, \id^{\otimes n} + \sum_{i=1}^n g_i.\label{eq:graphewm}
\end{equation}
The separability window is given by $\Delta = [0, 2^{1-n}]$. In what follows, we show that EWs in Eqs. (\ref{eq:graphew}) and (\ref{eq:graphewm}) are connected by local unitaries. Therfore, entangled states detected by them are local unitarily equivalent. 

\begin{proposition}
Mirrored alternative EWs in Eqs. (\ref{eq:graphew}) and (\ref{eq:graphewm}) are equivalent up to local unitaries. 
\end{proposition}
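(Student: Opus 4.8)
The plan is to exhibit an explicit local unitary $U = U^{(1)} \otimes \cdots \otimes U^{(n)}$ that conjugates $W$ into $M$, i.e.\ $U W U^{\dagger} = M$. Since $W$ and $M$ differ only in the sign of the generator sum $\sum_i g_i$, it suffices to find local unitaries that flip the sign of every generator simultaneously, i.e.\ $U g_i U^{\dagger} = -g_i$ for each $i=1,\dots,n$, while leaving $\id^{\otimes n}$ invariant (which any unitary does automatically). Recall that each generator has the form $g_i = X^{(i)} \bigotimes_{j\in N(i)} Z^{(j)}$, a tensor product of Paulis $X$ and $Z$ (and identities). The key algebraic fact I would use is that a single-qubit Pauli sign-flip can be implemented by conjugation with an anticommuting Pauli: $Z X Z = -X$ and $X Z X = -Z$.

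First I would reduce the problem to a combinatorial coloring statement. Conjugating $g_i$ by a local Pauli $P = \bigotimes_k P^{(k)}$ multiplies $g_i$ by $(-1)^{a_i}$, where $a_i$ counts, modulo $2$, the number of qubits at which the local factor $P^{(k)}$ anticommutes with the corresponding Pauli factor of $g_i$. I want $a_i = 1$ for every $i$. Using the two-colorability of the graph, I would try the candidate $U = \bigotimes_{k \,\in\, \mathrm{red}} Z^{(k)} \otimes \bigotimes_{k\,\in\,\mathrm{blue}} X^{(k)}$ (or some color-dependent choice of single-qubit Paulis). The point of the red/blue bipartition is that for the generator $g_i$ centered at vertex $i$, the central $X^{(i)}$ factor sits on one color class and all neighbouring $Z^{(j)}$ factors sit on the \emph{other} color class, since no edge joins two vertices of the same color. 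I would then count anticommutations: choosing the local operator on red vertices to anticommute with $X$ and on blue vertices to anticommute with $Z$ (or vice versa) should make the total sign contribution from $g_i$ come out to $-1$ regardless of the vertex degree, because the single central factor always contributes the crucial odd count while the neighbour contributions can be arranged to cancel in pairs or to vanish.

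The main obstacle I anticipate is the parity bookkeeping: naively a single local Pauli layer flips the sign of $g_i$ by $(-1)^{\deg(i)+1}$ or similar, so generators of even versus odd degree may receive opposite signs, and a uniform single choice of $U$ may fail. To handle this I would exploit the two-coloring more carefully, assigning the sign-flipping Pauli only to one color class (say, put $Z^{(k)}$ on red vertices and $\id$ on blue vertices). Then for a generator $g_i$ centered at a red vertex, only its central $X^{(i)}$ is hit (neighbours are blue, untouched), giving exactly one anticommutation and the desired factor $-1$; for a generator centered at a blue vertex, only its red neighbours are hit by $Z$, and each $Z^{(j)}$ commutes with the factor $Z^{(j)}$ of $g_i$, giving no sign flip — so this asymmetric choice does not yet work either. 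The resolution is to use a \emph{two-layer} local unitary, one Pauli type on red and a complementary type on blue, tuned so that every generator sees exactly one anticommuting factor; verifying this tuning is the crux, and I would confirm it on the GHZ generators $g_1 = X^{\otimes n}$ and $g_i = Z^{(i)}Z^{(i+1)}$ as a consistency check, matching the explicit $Y$-conjugation local unitaries already recorded after Eq.~(\ref{eq:nghza}). Once the sign-flipping $U$ is established, the identity $U W U^{\dagger} = (n-1)\id^{\otimes n} - \sum_i U g_i U^{\dagger} = (n-1)\id^{\otimes n} + \sum_i g_i = M$ follows immediately, and local-unitary equivalence of the detected state sets is a direct consequence since $\tr[W\rho] = \tr[M\, U\rho U^{\dagger}]$.
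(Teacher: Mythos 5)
Your overall strategy is the same as the paper's: exhibit a tensor product of single-qubit Paulis $U$ with $U g_i U^{\dagger} = -g_i$ for every generator, whence $UWU^{\dagger}=M$ is immediate. However, the proposal has a genuine gap at exactly the step you yourself flag as ``the crux'': you never produce a working $U$. Both concrete candidates you consider (Paulis assigned by colour class, or a Pauli on one colour class only) are shown by your own bookkeeping to fail whenever vertex degrees have mixed parity, and the promised ``tuning'' is left unspecified. The underlying misstep is structural: the two-colouring is the wrong combinatorial input. The paper's proof partitions the vertices by \emph{degree parity}, not by colour, taking
\begin{equation*}
U=\Bigl(\bigotimes_{i:\,\deg(i)\ \mathrm{even}} Y^{(i)}\Bigr)\otimes\Bigl(\bigotimes_{j:\,\deg(j)\ \mathrm{odd}} X^{(j)}\Bigr).
\end{equation*}
Since $YXY^{\dagger}=-X$, $XXX^{\dagger}=X$ and $YZY^{\dagger}=XZX^{\dagger}=-Z$, the generator $g_i=X^{(i)}\bigotimes_{j\in N(i)}Z^{(j)}$ picks up a sign $(-1)\cdot(-1)^{\deg(i)}=-1$ when $\deg(i)$ is even and $(+1)\cdot(-1)^{\deg(i)}=-1$ when $\deg(i)$ is odd, so every generator flips sign regardless of the colouring. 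Your heuristic that ``every generator sees exactly one anticommuting factor'' is also not what happens in the paper's construction (odd-degree vertices contribute $\deg(i)$ anticommutations, an odd number, not one).

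It is worth noting that your ``exactly one anticommuting factor'' idea can in fact be realised, but without any colouring: since each graph-state generator contains exactly one $X$ (at the central vertex) and only $Z$'s elsewhere, the uniform choice $U=Z^{\otimes n}$ already gives $Ug_iU^{\dagger}=-g_i$ for every $i$. Either of these explicit choices would close the gap; as written, the proposal stops just short of a proof because the sign-flipping unitary is asserted to exist by tuning rather than constructed and verified.
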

\begin{proof}
Let $G = (V, E)$ be a two-colorable graph on $n$ vertices. We write by subsets of vertices
\beas
V_{\text{even}} &=& \{ i \in V \mid \abs{N(i)} {\rm \; is \; even} \}, \\ V_{\text{odd}} &=& \{ i \in V \mid \abs{N(i)} {\rm \; is \; odd} \}, \nonumber 
\eeas
and also introduce local unitaries $U$ as follows, 
\be
U = \left( \bigotimes_{i \in V_{\text{even}}} Y^{(i)} \right)\otimes \left( \bigotimes_{j \in V_{\text{odd}}} X^{(j)} \right). \label{eq:graphu} 
\ee
A generator is given as 
\bea
g_i = X^{(i)} \bigotimes_{j \in N(i)} Z^{(j)}. \nonumber 
\eea
Note that $Y Z Y^\dag = -Z$ and $X Z X^\dag = -Z$. It holds that 
\bea
    U g_i U^\dag =  -g_i,\nonumber 
\eea
and consequently we have $U W U^\dag = M$.
\end{proof}
{
A witness $W$ in Eq. \eqref{eq:graphew} tolerates unitary noise for graph state $\ket{G}$ whenever $p_U < 1/(2n-1)$. In the same manner, its mirror $M$ tolerates unitary noise for graph state $\ket{G}$ whenever $p_U < 1/(2n-1)$.
}

Let us present mirrored EWs for instances of graph states. Firstly, we consider linear cluster states $\ket{C}$, for which  generators are given by,
 is given by:
\begin{align*}
    g_1 &= X^{(1)} Z^{(2)}, \\
    g_n &= Z^{(n-1)} X^{(n)}, \\
    g_i &= Z^{(i-1)} X^{(i)} Z^{(i+1)}, \quad \text{for } i = 2, \dots, n-1.
\end{align*}
From Eq. (\ref{eq:graphu}), mirrored EWs are connected by a local unitary transform $U_C$, 
\begin{equation}
    U_C = X^{(1)} \otimes X^{(n)} \bigotimes_{k=2}^{n-1} Y^{(k)}.\nonumber
\end{equation}
Hence, an alternative EW detects a cluster state $|C\rangle$ and then its mirrored EW does its local unitarily equivalent state, $U_C|C\rangle$. 

For instance, for four qubits, the generators are given by 
\bea
g_1 &=& X^{(1)} Z^{(2)},~~    g_2 = Z^{(1)} X^{(2)} Z^{(3)}, \nonumber \\
g_3 &=& Z^{(2)} X^{(3)} Z^{(4)}, ~~\mathrm{and}~~g_4 = Z^{(3)} X^{(4)}. \nonumber 
\eea
from which mirrored EWs are constructed, 
\bea
    W &=& 3\, \id^{\otimes 4} - (g_1 + g_2 + g_3 + g_4), \nonumber \\
    \mathrm{and}~~M &=& 3\, \id^{\otimes 4} + (g_1 + g_2 + g_3 + g_4) \label{eq:4c} 
\eea
and they are connected by a local unitary transform, $    X^{(1)} Y^{(2)} Y^{(3)} X^{(4)}$.

For a four-qubit cluster state $|C_4\rangle $, which can be expressed as follows, 
\[
H^{(1)} H^{(4)} \ket{C_4} = \tfrac{1}{2} (\ket{0000} + \ket{0011} + \ket{1100} - \ket{1111}),
\]
where $H$ denotes a Hadamard gate, its mirrored EW detects a local unitarily equivalent one, 
\bea
 &&   H^{(1)} H^{(4)} ~ X^{(1)} Y^{(2)} Y^{(3)} X^{(4)} \ket{C_4} \nn \\
&=& \frac{1}{2} (\ket{1001} - \ket{0101} - \ket{1010} - \ket{0110}). \nn
\eea
Hence, we have presented mirrored EWs in Eq. (\ref{eq:4c}) detecting local unitarily equivalent cluster states. { The witness pair $(W,M)$ tolerates unitary noise for graph state $\ket{C_4}$ and $X^{(1)} Y^{(2)} Y^{(3)} X^{(4)} \ket{C_4}$ up to $p^\star_U = 1/7$, respectively.}

Secondly, let us revisit alternative EWs for $n$-qubit GHZ states and show local unitaries explicitly. For instance, for $n=3$, EWs are constructed as follows, 
\beas
W &=& 2 \id^{\otimes 3} - (X^{(1)}X^{(2)}X^{(3)} + Z^{(1)}Z^{(2)} + Z^{(2)}Z^{(3)}), \\
M &=& 2 \id^{\otimes 3} + (X^{(1)}X^{(2)}X^{(3)} + Z^{(1)}Z^{(2)} + Z^{(2)}Z^{(3)}),
\eeas
where two EWs are related by a local unitary transform, $Z^{(1)}Y^{(2)} Z^{(3)}$. { The witness pair $(W,M)$ tolerates unitary noise for graph state $\ket{GHZ_3}$ and $Z^{(1)}Y^{(2)} Z^{(3)} \ket{GHZ_3}$ up to $p^\star_U = 1/5$, respectively.}

We have shown that mirrored EWs connected by local unitaries detect entangled states that are local unitarily equivalent. 

\subsection{ Mirrored pair of optimal EWs }

In this section, we explore the existence of a pair of optimal EWs in multipartite systems. In particular, we present mirrored pairs of \emph{optimal} EWs, each of which detects PPT entangled states. 

\subsubsection{Optimality of mirrored EWs} 

We begin by examining the optimality of mirrored EWs in a special class of operators known as \emph{X-shaped} EWs. For $n$ qubits, let $\mathbf{i}$ denote a bitstring of length $n$, indexing basis elements. A Hermitian operator $W$ acting on $n$ qubits is said to be \emph{X-shaped} if its non-zero matrix entries $W_{\mathbf{i},\mathbf{j}}$ satisfy
\[
W_{\mathbf{i},\mathbf{j}} \neq 0 \quad \text{only when} \quad \mathbf{i} = \mathbf{j} \quad \text{or} \quad \mathbf{i} = \bar{\mathbf{j}},
\]
where $\bar{\mathbf{j}}$ is the bitwise complement of $\mathbf{j}$, i.e., $\bar{\mathbf{j}} = \mathbf{j} \oplus \mathbf{1}$. An X-shaped Hermitian operator can typically be put in the following matrix form
\[
\left(
\begin{matrix}
s_1 &&&&&&& u_1\\
& s_2 &&&&& u_2 & \\
&& \ddots &&& \iddots &&\\
&&& s_{2^{n-1}}&u_{2^{n-1}} &&&\\
&&& \bar u_{2^{n-1}}&t_{2^{n-1}}&&&\\
&& \iddots &&& \ddots &&\\
& \bar u_2 &&&&& t_2 &\\
\bar u_1 &&&&&&& t_1
\end{matrix}
\right).
\]
Let us recall results from~\cite{Han2016} for the optimality of X-shaped EWs. 
\begin{theorem} \label{thm:genopt-EW}
Let $W$ be an X-shaped $n$-qubit genuine EW. The following statements are equivalent:
\begin{enumerate}
    \item[(i)] $W$ is an optimal genuine EW;
    \item[(ii)] $W$ satisfies the spanning property;
    \item[(iii)] There exists a bitstring $\mathbf{i}$ and a real number $r > 0$ such that:
    \begin{itemize}
        \item $W_{\mathbf{i},\mathbf{i}} = W_{\bar{\mathbf{i}},\bar{\mathbf{i}}} = 0$ and $\abs{W_{\mathbf{i},\bar{\mathbf{i}}}} = r$;
        \item For every $\mathbf{j} \neq \mathbf{i}$, one has $\sqrt{W_{\mathbf{j},\mathbf{j}} W_{\bar{\mathbf{j}},\bar{\mathbf{j}}}} = r$ and $W_{\mathbf{j},\bar{\mathbf{j}}} = 0$.
    \end{itemize}
\end{enumerate}
\end{theorem}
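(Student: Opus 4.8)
The plan is to exploit the block structure dictated by the X-shape. Pairing each bitstring $\mathbf{j}$ with its complement $\bar{\mathbf{j}}$, the operator $W$ leaves every two-dimensional subspace $\mathrm{span}\{\ket{\mathbf{j}},\ket{\bar{\mathbf{j}}}\}$ invariant, so after a permutation of the computational basis $W$ is block-diagonal with $2^{n-1}$ blocks $B_{\mathbf{j}} = \sm{W_{\mathbf{j},\mathbf{j}} & W_{\mathbf{j},\bar{\mathbf{j}}} \\ W_{\bar{\mathbf{j}},\mathbf{j}} & W_{\bar{\mathbf{j}},\bar{\mathbf{j}}}}$. First I would record that $(ii)\Rightarrow(i)$ needs nothing new: the spanning property implies optimality by the general argument behind Eq.~(\ref{eq:opt}), with no use of the X-shape. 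It therefore suffices to close the cycle with $(iii)\Rightarrow(ii)$ and $(i)\Rightarrow(iii)$. As a preliminary I would translate the genuine-EW property into block data: non-negativity of $\langle\psi|W|\psi\rangle$ over biseparable pure states $\ket{\psi}$ should force $W_{\mathbf{j},\mathbf{j}},W_{\bar{\mathbf{j}},\bar{\mathbf{j}}}\ge 0$ and couple the anti-diagonal magnitudes $|W_{\mathbf{j},\bar{\mathbf{j}}}|$ to the geometric means $\sqrt{W_{\mathbf{j},\mathbf{j}}W_{\bar{\mathbf{j}},\bar{\mathbf{j}}}}$ through a single global budget $r$, which is exactly the constant appearing in (iii).

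For $(iii)\Rightarrow(ii)$ I would construct, under the stated rigidity, a large family of biseparable product vectors $\ket{\psi}$ with $\langle\psi|W|\psi\rangle=0$ and show they span $(\mathbb{C}^2)^{\otimes n}$. Condition (iii) makes the distinguished block $B_{\mathbf{i}}$ purely anti-diagonal with eigenvalues $\pm r$, so the detected genuinely entangled state is $\tfrac{1}{\sqrt2}(\ket{\mathbf{i}}-e^{\ii\vartheta}\ket{\bar{\mathbf{i}}})$, while every other block is positive semidefinite with vanishing determinant, $\sqrt{W_{\mathbf{j},\mathbf{j}}W_{\bar{\mathbf{j}},\bar{\mathbf{j}}}}=r$. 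The zero-expectation biseparable vectors then form phase-parametrized families that saturate the elementary arithmetic--geometric-mean and phase estimates block by block; sweeping the free phases and invoking a Vandermonde-type independence argument should produce enough linearly independent such vectors to span the whole space, giving the spanning property in Eq.~(\ref{eq:opt}).

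For $(i)\Rightarrow(iii)$ I would argue by contraposition: if the rigidity (iii) fails, I would exhibit $P\ge 0$ and $\epsilon>0$ with $W-\epsilon P$ still a genuine EW, contradicting optimality. If some non-distinguished pair has slack, $\sqrt{W_{\mathbf{j},\mathbf{j}}W_{\bar{\mathbf{j}},\bar{\mathbf{j}}}}>r$, then subtracting a small multiple of $\pr{\mathbf{j}}$ lowers that diagonal entry while keeping the budget inequalities valid, so the genuine-EW property survives; if a non-distinguished pair carries an unforced anti-diagonal entry $W_{\mathbf{j},\bar{\mathbf{j}}}\neq 0$, a suitable rank-one positive operator supported on that block can be peeled off in the same way. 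In each case $W$ is strictly improvable, so optimality forces precisely the configuration described in (iii).

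The hard part will be two coupled points. First, the genuine-EW constraint is not block-local: biseparability must hold across every bipartition simultaneously, so the budget $r$ has to be identified as one global quantity and then shown to be saturated on all blocks at once at optimality, and controlling how the different bipartition constraints interlock is delicate. Second, one must verify that the phase-swept vanishing vectors genuinely span $(\mathbb{C}^2)^{\otimes n}$ rather than collapsing into a proper subspace; this is where the precise form of (iii), with a single unpaired block carrying all the negativity and every remaining block exactly saturated, is essential. Once these are settled, the remaining steps are elementary $2\times2$ linear algebra on the blocks.
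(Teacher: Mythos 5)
First, a point of reference: the paper does not prove this theorem. It is imported verbatim from Han and Kye \cite{Han2016} (``Let us recall results from\ldots''), so there is no in-paper argument to compare yours against and your proposal must stand as a proof of the cited result. Your skeleton --- the reduction to $2\times2$ blocks $B_{\mathbf{j}}$, the cycle $(ii)\Rightarrow(i)\Rightarrow(iii)\Rightarrow(ii)$ --- is the right one and matches how the result is actually proved. The decisive gap is that you never state, let alone establish, the characterization of when an X-shaped Hermitian operator is nonnegative on all biseparable pure states, and this lemma is the engine of every step. It is a family of \emph{cross-pair} inequalities, $\sqrt{W_{\mathbf{j},\mathbf{j}}W_{\bar{\mathbf{j}},\bar{\mathbf{j}}}}\ge\abs{W_{\mathbf{k},\bar{\mathbf{k}}}}$ for every two \emph{distinct} pairs $\{\mathbf{j},\bar{\mathbf{j}}\}\neq\{\mathbf{k},\bar{\mathbf{k}}\}$, qualitatively different from the full-separability condition $\sum_{\mathrm{pairs}}\sqrt{W_{\mathbf{j},\mathbf{j}}W_{\bar{\mathbf{j}},\bar{\mathbf{j}}}}\ge\sum_{\mathrm{pairs}}\abs{W_{\mathbf{j},\bar{\mathbf{j}}}}$; your phrase ``a single global budget $r$'' conflates the two, and $r$ is not part of the genuine-EW condition at all but only emerges at optimality. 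Without the pairwise form you cannot (a) see that a genuine EW automatically has at most one non-PSD block, (b) check that your proposed subtractions preserve the genuine-EW property (each must be tested against \emph{all} cross-pair inequalities, not just the one involving $r$), or (c) identify which biseparable vectors vanish on $W$.

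Two further steps would fail as written. In $(i)\Rightarrow(iii)$ your case analysis omits the possibility $W_{\mathbf{i},\mathbf{i}}>0$ or $W_{\bar{\mathbf{i}},\bar{\mathbf{i}}}>0$ on the distinguished block, and this case is coupled to the stray-anti-diagonal case: the constraint blocking a reduction of $W_{\mathbf{i},\mathbf{i}}$ is precisely $\sqrt{W_{\mathbf{i},\mathbf{i}}W_{\bar{\mathbf{i}},\bar{\mathbf{i}}}}\ge\abs{W_{\mathbf{j},\bar{\mathbf{j}}}}$, so one must first remove the off-diagonal entries of the other blocks before one may zero out the diagonal of the distinguished one; the order of the subtractions matters. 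In $(iii)\Rightarrow(ii)$ the vanishing vectors cannot be full product vectors: for a product vector with $\abs{\langle\mathbf{j}|\psi\rangle\langle\psi|\bar{\mathbf{j}}\rangle}=t$ one gets $\langle\psi|W|\psi\rangle\ge 2rt(2^{n-1}-2)$, which is strictly positive for $n\ge3$ unless $t=0$. The correct construction takes, for each bipartition $S$, a biseparable vector supported on the four indices $\mathbf{i},\bar{\mathbf{i}},\mathbf{k},\bar{\mathbf{k}}$ with $\mathbf{k}=\mathbf{i}\oplus\mathbf{1}_{\bar S}$, with moduli saturating the arithmetic--geometric-mean bound on the block $\mathbf{k}$ and phases making the $B_{\mathbf{i}}$ cross term maximally negative; sweeping $S$ and the free phases then spans $(\mathbb{C}^2)^{\otimes n}$. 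You gesture at this but leave the saturation conditions and the span unverified, and the ``Vandermonde-type'' step is exactly where the work lies. Finally, note that for a \emph{genuine} EW the spanning set in (ii) must consist of biseparable vectors with respect to arbitrary bipartitions, not the bipartite product vectors of Eq.~(\ref{eq:opt}); your $(ii)\Rightarrow(i)$ is fine once this is made explicit.
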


An example of an optimal genuine EW for a three-qubit system is given by the following matrix:
\[
W_{opt} = \left(
\begin{matrix}
\cdot & \cdot & \cdot & \cdot & \cdot & \cdot & \cdot & e^{i\theta} \\
\cdot & s_2 & \cdot & \cdot & \cdot & \cdot & \cdot & \cdot \\
\cdot & \cdot & s_3 & \cdot & \cdot & \cdot & \cdot & \cdot \\
\cdot & \cdot & \cdot & s_4 & \cdot & \cdot & \cdot & \cdot \\
\cdot & \cdot & \cdot & \cdot & \frac{1}{s_4} & \cdot & \cdot & \cdot \\
\cdot & \cdot & \cdot & \cdot & \cdot & \frac{1}{s_3} & \cdot & \cdot \\
\cdot & \cdot & \cdot & \cdot & \cdot & \cdot & \frac{1}{s_2} & \cdot \\
e^{-i\theta} & \cdot & \cdot & \cdot & \cdot & \cdot & \cdot & \cdot
\end{matrix}
\right).
\]
It can be readily verified that this matrix satisfies all the conditions stated in the theorem. Its mirror is given as
\[
M_{opt} = \beta \id - W_{opt}
\]
where $\beta \geq \max \set{s_i, 1 \slash s_i}^4_{i=2} \geq 1$ to ensure $M_{opt}$ is non-negative for product states. Then we see that $M_{opt}$ is actually positive. We {can generalize} this observation in the following proposition: 

\begin{proposition}
    Let $W$ be a multiqubit X-shaped genuine EW satisfying $W + M = \alpha \id$ for some $\alpha > 0$. If $W$ is optimal, then the mirror operator $M$ is positive semidefinite.
\end{proposition}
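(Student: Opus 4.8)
The plan is to reduce everything to the $2\times 2$ block structure that X-shapedness forces. Since $M = \alpha\,\id - W$ is X-shaped whenever $W$ is, the permutation of the computational basis that pairs each bitstring $\mathbf{j}$ with its complement $\bar{\mathbf{j}}$ simultaneously brings $W$ and $M$ into block-diagonal form with $2^{n-1}$ blocks of size $2\times 2$, one for each pair $\{\mathbf{j},\bar{\mathbf{j}}\}$. A block-diagonal Hermitian operator is positive semidefinite iff each block is, so it suffices to test positivity of the blocks $\sm{\alpha - W_{\mathbf{j}\mathbf{j}} & -W_{\mathbf{j}\bar{\mathbf{j}}} \\ -W_{\bar{\mathbf{j}}\mathbf{j}} & \alpha - W_{\bar{\mathbf{j}}\bar{\mathbf{j}}}}$ of $M$.

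Next I would invoke the optimality of $W$ through condition (iii) of the Theorem, which supplies a distinguished pair $\{\mathbf{i},\bar{\mathbf{i}}\}$. Writing $s_{\mathbf{j}} := W_{\mathbf{j}\mathbf{j}}$, the distinguished block of $W$ has $s_{\mathbf{i}} = s_{\bar{\mathbf{i}}} = 0$ and off-diagonal modulus $r$, so the matching block of $M$ is $\sm{\alpha & -W_{\mathbf{i}\bar{\mathbf{i}}} \\ -W_{\bar{\mathbf{i}}\mathbf{i}} & \alpha}$ with eigenvalues $\alpha\pm r$; it is positive iff $\alpha\geq r$. Every other block has $W_{\mathbf{j}\bar{\mathbf{j}}} = 0$ and is therefore diagonal, $\mathrm{diag}(\alpha - s_{\mathbf{j}},\, \alpha - s_{\bar{\mathbf{j}}})$, positive iff $\alpha\geq\max(s_{\mathbf{j}},s_{\bar{\mathbf{j}}})$. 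Letting $s_{\max}$ denote the largest diagonal entry of $W$, this shows $M\geq 0$ iff $\alpha\geq r$ and $\alpha\geq s_{\max}$.

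I would then collapse the two conditions using the remaining part of (iii): for each non-distinguished pair the condition $\sqrt{s_{\mathbf{j}} s_{\bar{\mathbf{j}}}} = r$ holds, so both entries are non-negative with geometric mean $r$, whence at least one of them is $\geq r$ and therefore $s_{\max}\geq r$. It thus remains only to show $\alpha\geq s_{\max}$. Since $M$ is the mirrored witness, it is non-negative on separable states, and evaluating it on the computational basis state $\proj{\mathbf{k}}$ realizing $s_{\max}$ — a product, hence separable, state — gives $\tr[M\proj{\mathbf{k}}] = \alpha - s_{\max}\geq 0$; equivalently $\alpha\geq u_W\geq\tr[W\proj{\mathbf{k}}] = s_{\max}$. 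Hence $\alpha\geq s_{\max}\geq r$, every block of $M$ is positive semidefinite, and so is $M$.

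The one delicate point, which dictates the order above, is that positivity of the distinguished block ($\alpha\geq r$) cannot be argued directly from a separable state: the negative direction of $W$ there is the entangled superposition $(\ket{\mathbf{i}}\pm\ket{\bar{\mathbf{i}}})/\sqrt{2}$, on which the bound $u_W\geq r$ is not available. The argument must instead route $\alpha\geq r$ through $s_{\max}\geq r$, and it is exactly here that optimality is used in an essential way — without the geometric-mean condition the off-diagonal blocks could carry nonzero $W_{\mathbf{j}\bar{\mathbf{j}}}$, and the mirror would generically stay a bona fide entanglement witness rather than become a positive operator.
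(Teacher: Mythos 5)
Your proof is correct and follows essentially the same route as the paper's: reduce to the $2\times 2$ blocks on the pairs $\{\mathbf{j},\bar{\mathbf{j}}\}$, use condition (iii) to get eigenvalues $\alpha\pm r$ on the distinguished block and diagonal entries $\alpha - s_{\mathbf{j}}$ elsewhere, and conclude via $\alpha\geq s_{\max}\geq r$. In fact your ordering is slightly more careful than the paper's own wording (which says ``choosing $\alpha\geq r$'' as if $\alpha$ were free): you correctly derive $\alpha\geq s_{\max}$ from block-positivity of $M$ on the computational-basis product states and only then obtain $\alpha\geq r$ from the geometric-mean condition, which is the logically clean way to close the argument.
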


\begin{proof} Let $W$ be an X-shaped optimal genuine EW with $\mathbf{i}$ as specified in (iii). By definition, the only nonzero off-diagonal elements of $W$ occur at positions $(\mathbf{i},\bar{\mathbf{i}})$ and $(\bar{\mathbf{i}},\mathbf{i})$. We define the corresponding mirror operator $M$ by
\[
M = \alpha \id - W.
\]
This operator retains the X-shaped structure of $W$ and has off-diagonal elements
\[
M_{\mathbf{i},\bar{\mathbf{i}}} = -r e^{i\theta}, \qquad M_{\bar{\mathbf{i}},\mathbf{i}} = -r e^{-i\theta},
\]
for some real number $r \in \mathbb{R}$ and angle $\theta \in [0,2\pi)$.

To determine whether $M$ is positive, we examine its submatrices. For any bitstring $\mathbf{j} \neq \mathbf{i}$, the corresponding off-diagonal entry vanishes, i.e., $M_{\mathbf{j},\bar{\mathbf{j}}} = 0$. The diagonal entries are given by
\[
M_{\mathbf{j},\mathbf{j}} = \alpha - W_{\mathbf{j},\mathbf{j}}, \quad M_{\bar{\mathbf{j}},\bar{\mathbf{j}}} = \alpha - W_{\bar{\mathbf{j}},\bar{\mathbf{j}}}.
\]
These are non-negative provided that $\alpha$ is greater than or equal to the largest diagonal entry of $W$. Due to condition (iii), we have $\sqrt{W_{\mathbf{j},\mathbf{j}} W_{\bar{\mathbf{j}},\bar{\mathbf{j}}}} = r$, meaning that at least one of the diagonal entries must be no less than $r$. Therefore, choosing $\alpha \geq r$ shows the non-negativity of these entries.

Now consider the $2 \times 2$ principal submatrix corresponding to indices $\mathbf{i}$ and $\bar{\mathbf{i}}$:
\[
\begin{pmatrix}
    \alpha & -r e^{i\theta} \\
    -r e^{-i\theta} & \alpha
\end{pmatrix},
\]
which has eigenvalues
\[
\lambda_\pm = \alpha \pm r.
\]
Both eigenvalues are non-negative for $\alpha \geq r$, implying that this submatrix is positive semi-definite. As a diagonal entry $M_{\mathbf{j},\mathbf{j}}$ is always positive for any bitstring $\mathbf{j} \neq \mathbf{i}$, the operator $M$ itself is positive. This establishes that, for any optimal X-shaped genuine EW $W$, its mirror $M = \alpha \id - W$ is positive semi-definite. 
\end{proof}

{ 
We revisit the $n$-qubit \emph{canonical} GHZ witness, since it fits the (respective) conditions for optimality mentioned in Theorem~\ref{thm:genopt-EW}; hence, it is a genuine multipartite EW and its mirror is positive. \\

Recall that the canonical EW for $n$-qubit GHZ state reads
\bea
W_c \;=\; \frac{1}{2^{n-1}-1} \Big( \tfrac{1}{2}\,\id^{\otimes n} - \dyad{GHZ_n} \Big),
\eea
In the computational basis $\{\ket{\mathbf{j}}\}_{\mathbf{j}\in\{0,1\}^n}$, $\dyad{GHZ_n}$ has only the two off-diagonal entries
$-\tfrac12$ at $\mathbf{j}=0^n,1^n$ and the diagonals $\tfrac12$ between $0^n$ and $1^n$.
Hence $W_c$ is X-shaped: its only nonzero off-diagonal entries lie on the anti-diagonal pairing
$\mathbf{i}=0^n$ with $\bar{\mathbf{i}}=1^n$. Writing explicitly in X-form gives
\bea
W_c \;=\; \alpha \left(
\begin{matrix}
0 &&&&&&& -1/2 \\
& 1/2 &&&&& 0 & \\
&& \ddots &&& \iddots &&\\
&&& 1/2 & 0 &&&\\
&&& 0 & 1/2 &&&\\
&& \iddots &&& \ddots &&\\
& 0 &&&&& 1/2 &\\
-1/2 &&&&&&& 0
\end{matrix}
\right),
\eea
with $\alpha$ is a normalization factor $\alpha = 1/(2^{n-1}-1)$. It is thus straightforward to see that $W_c$ satisfies the optimality conditions. Recall that its mirror is
\bea
M_c \;=\; \dyad{GHZ_n},
\eea
which is manifestly positive semi-definite.
}

\begin{figure}
    \centering
    \includegraphics[width=0.52 \textwidth]{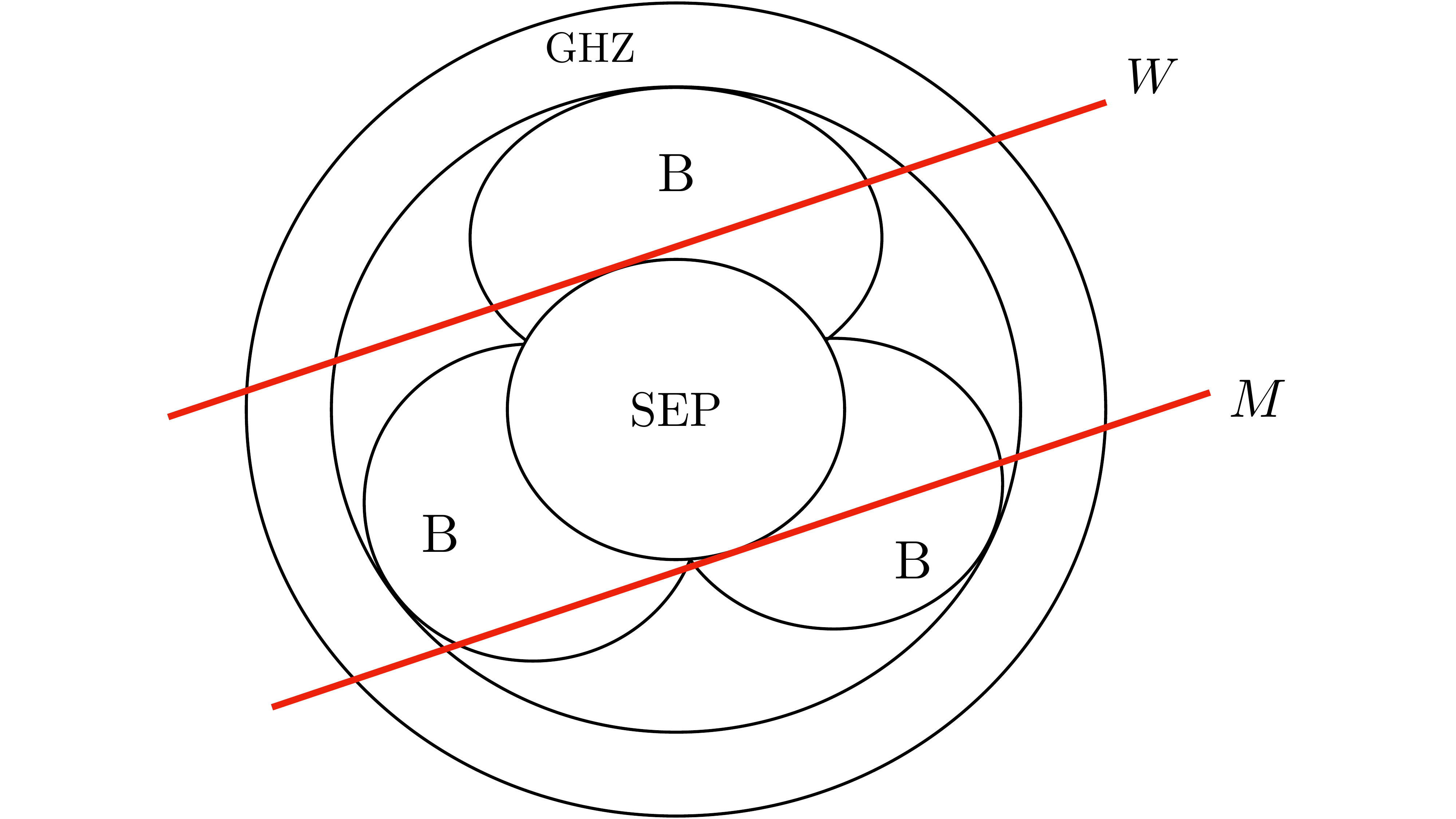}
    \caption{{Three-qubit entangled states have an onion structure \cite{PhysRevLett.87.040401}, where B denotes the set of bi-separable states. Three-qubit EWs in Eqs. (\ref{eq:3w}) and (\ref{eq:3m})) are optimal and detect entangled states which are PPT in all bipartite splittings. Optimal three-qubit EWs can be mirrored with each other.}}
    \label{fig:3opt}
\end{figure}

\begin{table*}[t]
{
\centering
\small
\setlength{\tabcolsep}{12pt}
\renewcommand{\arraystretch}{1.2}
\begin{tabular}{l c c c}
\hline
\textbf{EW} 
& \textbf{Mirror} 
& \textbf{Optimality} 
& $\boldsymbol{p_U^\star}$ \\
&  
& \textbf{(EW and Mirror)}
&  \\
\hline
\rule{0pt}{5ex}
alternative ($n$-graph state) 
& EW (LU-equiv.) 
& X 
& $\displaystyle \frac{1}{2n-1}$ \\
\rule{0pt}{5ex}
alternative ($4$-cluster state) 
& EW (LU-equiv.) 
& X
& $\displaystyle \frac{1}{7}$ \\
\rule{0pt}{5ex}
alternative ($3$-GHZ state) 
& EW (LU-equiv.) 
& X 
& $\displaystyle \frac{1}{5}$ \\
\rule{0pt}{6ex}
$W[i_1,i_2,i_3]$ ($3$-qubit state) 
& EW (optimal, LU-equiv.) 
& O 
& N. A. \\
\rule{0pt}{6ex}
$W[0,1,0]$ 
& EW (optimal, LU-equiv.) 
& O
& $\displaystyle \frac{\big|x+y+z^{-1}-3\big|}{1+\sqrt{17}}$ \\
\rule{0pt}{6ex}
$W[1,1,0]$ 
& EW (optimal, LU-equiv.) 
& O
& $\displaystyle \frac{\big|c-3\big|}{1+\sqrt{17}}$ \\
\rule{0pt}{0.5ex}
& & &  \\
\hline
\end{tabular}
\caption{ EWs for multipartite qubit states are compared. In all cases, EWs and their mirrored pairs are LU-equivalent. Mirrored pairs of optimal EWs can be constructed in many cases, in contrast to bipartite quantum systems. 
}
\label{tab:graph-witness-summary}
}
\end{table*}

\subsubsection{A pair of optimal mirrored witnesses detecting PPT states in three-qubit systems}

We present a family of pairs of optimal mirrored EWs in three-qubit systems. We first consider the following class of three-qubit X-shaped density matrices:
\begin{equation}
\rho_{\rm ppt} = \frac{1}{\sum_{i=1}^4 (s_i + t_i)} \left(%
 \begin{array}{cccccccc} \label{3qX}
   s_1 & \cdot & \cdot & \cdot & \cdot & \cdot & \cdot & u_1 \\
   \cdot & s_2 & \cdot & \cdot & \cdot & \cdot & u_2 & \cdot \\
   \cdot & \cdot & s_3 & \cdot & \cdot & u_3 & \cdot & \cdot \\
   \cdot & \cdot & \cdot & s_4 & u_4 & \cdot & \cdot & \cdot \\
   \cdot & \cdot & \cdot & \bar{u}_4 & t_4 & \cdot & \cdot & \cdot \\
   \cdot & \cdot & \bar{u}_3 & \cdot & \cdot & t_3 & \cdot & \cdot \\
   \cdot & \bar{u}_2 & \cdot & \cdot & \cdot & \cdot & t_2 & \cdot \\
   \bar{u}_1 & \cdot & \cdot & \cdot & \cdot & \cdot & \cdot & t_1 \nonumber
 \end{array}
\right)
\end{equation}
which is PPT in all bipartite splittings \cite{Jafarizadeh2008, PhysRevLett.87.040401} if 
\begin{align*}
    & s_i, t_i > 0 {\rm \quad and \quad} s_i t_i = {\mu},  \\
    & \abs{u_i} \leq 1 {\rm \quad \quad \quad for \; all \;} i = {\mu}, \ldots, 4. 
\end{align*}
Note that the state above reproduces a family of PPT states previously studied in~\cite{PhysRevLett.87.040401, Hyllus2004, Kye2015}. 

Let us consider {a three-parameter family of entanglement witnesses parametrized by $i_1,i_2,i_3$ $^{21}$} that detects PPT states above,
\bea
W[i_1,i_2,i_3]  
&=& \id^{\otimes 3} 
- Z^{(1)} Z^{(2)} Z^{(3)} 
- (-1)^{i_1} X^{(1)} X^{(2)} X^{(3)} \nonumber \\
&&
- (-1)^{i_2} X^{(1)} Y^{(2)} Y^{(3)} 
- (-1)^{i_3} Y^{(1)} X^{(2)} Y^{(3)} \nonumber \\
&& - (-1)^{i_1+i_2+i_3+1} Y^{(1)} Y^{(2)} X^{(3)}, \label{eq:3w}  
\eea
which has been studied in \cite{Jafarizadeh2008}. We compute its mirrored operator, 
\bea
M[i_1,i_2,i_3] 
&=& \id^{\otimes 3}
+ Z^{(1)} Z^{(2)} Z^{(3)} 
+ (-1)^{i_1} X^{(1)} X^{(2)} X^{(3)} \nonumber \\
&&
+ (-1)^{i_2} X^{(1)} Y^{(2)} Y^{(3)} 
+ (-1)^{i_3} Y^{(1)} X^{(2)} Y^{(3)} \nonumber \\
&& + (-1)^{i_1+i_2+i_3+1} Y^{(1)} Y^{(2)} X^{(3)}.\label{eq:3m} \eea
In what follows, we summarize the properties of two EWs.

Firstly, two EWs are equivalent up to local unitaries $Y^{(123)}:=Y^{(1)}Y^{(2)}Y^{(3)}$:
\bea
 M[i_1, i_2, i_3] =  Y^{(123)} W[i_1, i_2, i_3] (Y^{(123)})^{\dagger} \nonumber 
\eea
More generally, it holds that 
    \begin{align*}
        W[i_1',i_2',i_3'] &= U W[i_1, i_2, i_3] U^\dagger,~~ \mathrm{and}\\
        M[i_1',i_2',i_3'] &= U M[i_1, i_2, i_3] U^\dagger,
    \end{align*}
    for some local unitary $U$, e.g.,
    \begin{align*}
        W[1,1,1] &= (Z^{(1)}Z^{(2)}Z^{(3)}) \,W[0,0,0] \,(Z^{(1)}Z^{(2)}Z^{(3)})^\dagger, \\
        M[1,1,1] &= (Z^{(1)}Z^{(2)}Z^{(3)})\, M[0,0,0] \,(Z^{(1)}Z^{(2)}Z^{(3)})^\dagger.
    \end{align*}
A complete characterization of the local equivalences between witness operators is detailed in the Appendix. 

Secondly, both EWs are optimal from the spanning property \cite{Jafarizadeh2008}. For instance, $W[0,0,0]$ has the following eight product states, all of which yield zero expectation values:
\bea
&&    |000\rangle,  ~|011\rangle, ~ |101\rangle,  ~|110\rangle, \nonumber \\
&&   |+++\rangle,~ | +--\rangle,~|-+-\rangle,~ |--+\rangle.\nonumber
\eea
That is, the spanning property is fulfilled. Since two EWs are equivalent up to local unitaries, the spanning property also holds. 

Therefore, the EWs can detect entangled states that remain positive under partial transposition with respect to all bipartite splittings. We show that they detect PPT entangled states studied in~\cite{PhysRevLett.87.040401, Hyllus2004, Kye2015}. 

\begin{itemize}
    \item An EW $W[0,1,0]$ detects a known class of three-qubit bound entangled states in Refs. \cite{PhysRevLett.87.040401, Hyllus2004},  
    \[
    \rho(x,y,z) = \frac{1}{\nu}
    \begin{pmatrix}
    1 & \cdot & \cdot & \cdot & \cdot & \cdot & \cdot & 1 \\
    \cdot & x & \cdot & \cdot & \cdot & \cdot & \cdot & \cdot \\
    \cdot & \cdot & y & \cdot & \cdot & \cdot & \cdot & \cdot \\
    \cdot & \cdot & \cdot & z & \cdot & \cdot & \cdot & \cdot \\
    \cdot & \cdot & \cdot & \cdot & z^{-1} & \cdot & \cdot & \cdot \\
    \cdot & \cdot & \cdot & \cdot & \cdot & y^{-1} & \cdot & \cdot \\
    \cdot & \cdot & \cdot & \cdot & \cdot & \cdot & x^{-1} & \cdot \\
    1 & \cdot & \cdot & \cdot & \cdot & \cdot & \cdot & 1
    \end{pmatrix},
    \]
    with parameters $x,y,z >0$ and $\nu = 2 + x+y+z+x^{-1} +y^{-1} +z^{-1}$ We can see that $\rho(x,y,z)$ above can be reproduced by $\rho_{\rm ppt}$ in Eq.~\eqref{3qX}. The state is thus positive under partial transposition with respect to all bipartitions whenever $x,y,z >0$. We have
    \[
    \tr \big[ W[0,1,0] \, \rho \, (x,y,z) \big] = \frac{2(x+y+z^{-1} -3)}{\nu},
    \]
    which becomes negative when $0< x,y < 1$ and $z > 1$. This shows that $\rho \, (x,y,z)$ is a PPT-entangled state. {$W[1,1,0]$ also tolerates unitary rotation for state $\rho \,(x,y,z)$ whenever
    \bea
        p_U < \frac{\abs{x+y+z^{-1}-3}}{1+\sqrt{17}}
    \eea since $\| W[i_1, i_2, i_3] \| = 1+\sqrt{17}$. The operator norm calculation is provided in the Appendix. }

  {Note also that the locally Pauli-conjugated state $^{22}$}
    \begin{align*}
Y^{(123)}\rho_{\rm  }(x,y,z) ( Y^{(123)} )^\dagger 
    \end{align*}
can be detected by the mirrored one 
\bea
    M[0,1,0] = Y^{(123)} W[0,1,0]  (Y^{(123)})^\dagger,
\eea
Both the witness $W[0,1,0]$ and its mirrored EW $M[0,1,0]$ detect bound entangled states in Ref. \cite{PhysRevLett.87.040401}.
    
    \item An EW $W[1,1,0]$ detects another known class of three-qubit bound entangled states in Ref. \cite{Kye2015}
    \bea
&&    \rho_{\rm  }(b, c) = \nonumber \\
&&    \frac{1}{6 + b + c} \label{eq:3qKye}
    \begin{pmatrix}
    1 & \cdot & \cdot & \cdot & \cdot & \cdot & \cdot & -1 \\
    \cdot & 1 & \cdot & \cdot & \cdot & \cdot & -1 & \cdot \\
    \cdot & \cdot & 1 & \cdot & \cdot & 1 & \cdot & \cdot \\
    \cdot & \cdot & \cdot & b & -1 & \cdot & \cdot & \cdot \\
    \cdot & \cdot & \cdot & -1 & c & \cdot & \cdot & \cdot \\
    \cdot & \cdot & 1 & \cdot & \cdot & 1 & \cdot & \cdot \\
    \cdot & -1 & \cdot & \cdot & \cdot & \cdot & 1 & \cdot \\
    -1 & \cdot & \cdot & \cdot & \cdot & \cdot & \cdot & 1
    \end{pmatrix},\nonumber
\eea
    with parameters $b, c > 0$ such that $bc \geq 1$. It is straightforward to check that $\rho_{\rm  }(b, c)$ can be reproduced by $\rho_{\rm ppt}$ in Eq.~\eqref{3qX}. The state is thus positive under partial transposition with respect to all bipartitions whenever $bc = {\mu}$.
    
    \medskip
    The witness $W[1,1,0]$ detects this PPT entangled state for a range of parameters. Specifically, we compute
    \[
    \tr \big[ W[1,1,0] \, \rho (b,c)\big] = \frac{2(c - 3)}{b + c + 6},
    \]
    which is negative for all $0 < c < 3$, thus confirming it is a PPT-entangled state. The calculation is detailed in the Appendix.
    
    Furthermore, the locally Pauli-conjugated state is given,
 \bea
 && (Y^{(123)} ) \rho_{ }(b, c) (Y^{(123)} )^\dagger \nonumber \\
&=& \frac{1}{6 + b + c}
    \begin{pmatrix}
    1 & \cdot & \cdot & \cdot & \cdot & \cdot & \cdot & 1 \\
    \cdot & 1 & \cdot & \cdot & \cdot & \cdot & 1 & \cdot \\
    \cdot & \cdot & 1 & \cdot & \cdot & -1 & \cdot & \cdot \\
    \cdot & \cdot & \cdot & c & 1 & \cdot & \cdot & \cdot \\
    \cdot & \cdot & \cdot & 1 & b & \cdot & \cdot & \cdot \\
    \cdot & \cdot & -1 & \cdot & \cdot & 1 & \cdot & \cdot \\
    \cdot & 1 & \cdot & \cdot & \cdot & \cdot & 1 & \cdot \\
    1 & \cdot & \cdot & \cdot & \cdot & \cdot & \cdot & 1
    \end{pmatrix}.\nonumber 
\eea
Due to the equivalence by local unitaries, the state above is also PPT in all bipartite splittings. Note that
    \[
    M[1,1,0] = (Y^{(123)} )  W[1,1,0]  (Y^{(123)} )^\dagger,
    \]
    and therefore, for $0<c<3$,
    \bea
&&    \tr \big[ M[1,1,0] (Y^{(123)} ) \rho(b,c) (Y^{(123)} )^{\dagger} \big] \nonumber \\
& = & \tr \big[ W[1,1,0] \rho (b,c) \big] < 0, \nonumber  
    \eea
Hence, both EWs, $W[1,1,0]$ and $M[1,1,0]$, detect the PPT entangled states. 
\end{itemize}

\section{Bipartite systems: optimality}
\label{sec:bio}

In this section, we investigate mirrored EWs for bipartite systems $n\otimes n$ that contain symmetries, such as Bell diagonal and covariant with respect to a maximal commutative subgroup of the unitary group $U(n)$. 

Let us summarize the construction of EWs, and then discuss the properties of their mirrored ones. A Weyl operator $U_{mk}$ can be defined as \cite{Bell1,Bell2}
\begin{equation}
  U_{mk} |\ell\> = \omega^{m\ell} |\ell+k\>~(\mbox{mod}~n),\label{weyl}
\end{equation}
with $\omega = e^{2 \pi i/n}$. Weyl operators satisfy relations as follows 
\bea
&&  U_{k\ell} U_{rs} = \omega^{ks} U_{k+r,\ell + s},~~  U_{k\ell}^* = U_{-k\ell} , \nonumber\\
&&   U^\dagger_{k\ell} = \omega^{k\ell} U_{-k,-\ell}, ~~\mathrm{and}   ~~\tr[  U_{k\ell} U_{rs}^\dagger] = n\, \delta_{kr}\delta_{\ell s}\nonumber
\eea
where addition and subtraction are in modulo $n$. The generalized Bell states are denoted by \begin{equation}\label{gBell}
  |\psi_{k\ell}\> = \I \otimes U_{k\ell} |\psi_{00}^{}\rangle,
\end{equation}
where $ |\psi_{00} \> = {\mu}/\sqrt{n} \sum_{k=0}^{n-1} |k k\>$. A Bell-diagonal operator can be written as 
  $X = \sum_{k,\ell=0}^{n-1} x_{k\ell} P_{k\ell},$
with $P_{k\ell} =|\psi_{k\ell}\>\<\psi_{k\ell}|$. Let us consider a maximal commutative subgroup of a unitary group $U(n)$
\begin{equation}
  T(n) = \{ \ U \in U(n)\, |\, U = \sum_{k=0}^{n-1} e^{i \phi_k} |k\>\<k| \ \}. \nonumber 
\end{equation}
with $\phi_k \in \mathbb{R}$. Moreover, a bipartite operator $X$ is said to be $T \otimes T^*$-covariant whenever
\begin{equation}\label{}
  U \otimes U^* X (U \otimes U^*)^\dagger = X ,
\end{equation}
for any $U \in T(n)$. Then, any covariant operator has the following structure 
\bea
  X = \sum_{k,\ell=0}^{n-1} A_{kl} |k\>\<k| \otimes |\ell\>\<\ell| + \sum_{k\neq \ell=0}^{n-1} B_{kl} |k\>\<\ell| \otimes |k\>\<\ell|,~~~~~~ \label{XAB}
\eea
with complex parameters $A_{k\ell}$ and $B_{k\ell}$. Note that for such operators the Hermitian matrix $A_{k\ell}$ is circulant, i.e. $A_{k\ell} = \alpha_{k-\ell}$ for some real vector $(\alpha_0,\alpha_1,\ldots,\alpha_{n-1})$ and $B_{kl}$ is a constant, i.e. $B_{k\ell} = \beta \in \mathbb{R}$. 

\begin{proposition}\cite{OSID} A symmetric operator in the following is an EW, 
\bea
  W = \sum_{k,\ell=0}^{n-1} \alpha_{k-l} |k\>\<k| \otimes |\ell\>\<\ell| - \sum_{k\neq \ell=0}^{n-1} |k\>\<\ell| \otimes |k\>\<\ell|,~~~~\label{Wit_gen}
\eea
when a circulant matrix $A_{k\ell}$ satisfies the following constraints
\begin{equation}\label{I0}
\alpha_0 + \alpha_1 + \ldots + \alpha_{n-1} = n-1 ,
\end{equation}
and $AA^T = \mathbb{I} + (n-2) \mathbb{J}$ where $\mathbb{J}_{k\ell}= {\mu}$.
\end{proposition}

\subsection{ EWs in $3\otimes 3$ }

Let us consider a class of EWs that generalize the well-known Choi EWs \cite{Korea-1992}
\begin{eqnarray}
  W[a,b,c] &=&\sum_{i=0}^2 \Big[\, a\, |ii\>\<ii| + b\, |i,i+1\>\<i,i+1| \nonumber\\ &+&  c\, |i,i+2\>\<i,i+2|\, \Big] 
   - \sum_{i\neq j}  |ii\>\<jj|\      \label{W-abc}
\end{eqnarray}
with parameters $a,b,c\geq 0$  satisfying
\begin{itemize}
\item $a+b+c\geq 2$, and 
\item if $a \leq 1$, then $bc\geq (1-a)^2$. 
\end{itemize}
The well-known Choi EWs are intances, $W[1,1,0]$ or $W[1,0,1]$ \cite{Choi-2,Choi-3,Choi-4}. Choi EWs are extremal \cite{Choi-5,Ha-extr} and hence also optimal. Note, however, that though being optimal, Choi EWs do not satisfy the spanning property \cite{S71,S72}. Note that $W[a,b,c]$ are optimal when 
$$ a+b+c=2 ~~\mathrm{and}~~ a^2+b^2+c^2=2 , $$ 
if and only if $a \in [0,1]$, and  extremal if and only if $a \in (0,1]$  \cite{Kossak, Filip1}). Note also that $W[a,b,c]$ is decomposable only if $b=c=1$. 

Let us introduce parameterization as follows,
\begin{eqnarray}
  a &=&\frac 23 ( 1 + \cos \phi) , \ \ \ 
   b = \frac 13 ( 2 - \cos \phi - \sqrt{3} \sin\phi) , \nonumber\\
    c &=& \frac 13 ( 2 - \cos \phi + \sqrt{3} \sin\phi) , \label{param3}
\end{eqnarray}
that is, one has a 1-parameter family of EWs $W(\phi)$ for $\phi \in [0,2\pi)$. $W(\pi/3)$ and $W(5\pi/3)$ correspond to a pair of Choi witnesses and $W(\pi)$ corresponds to EW defined via the reduction map. $W(\phi)$ is optimal iff $\phi \in [\pi/3,5\pi/3]$.  

Then, mirrored operators are characterized as follows. \\
   
\begin{proposition} \label{P-abc} Mirrored operators to EWs in Eq. (\ref{W-abc}) are positive, i.e., which can be interpreted as quantum states, for $a \in [0,1/3]$ and decomposable EWs for $a \in (1/3,4/3]$.
\end{proposition}

\begin{figure}
    \centering
    \includegraphics[width=0.55 \textwidth]{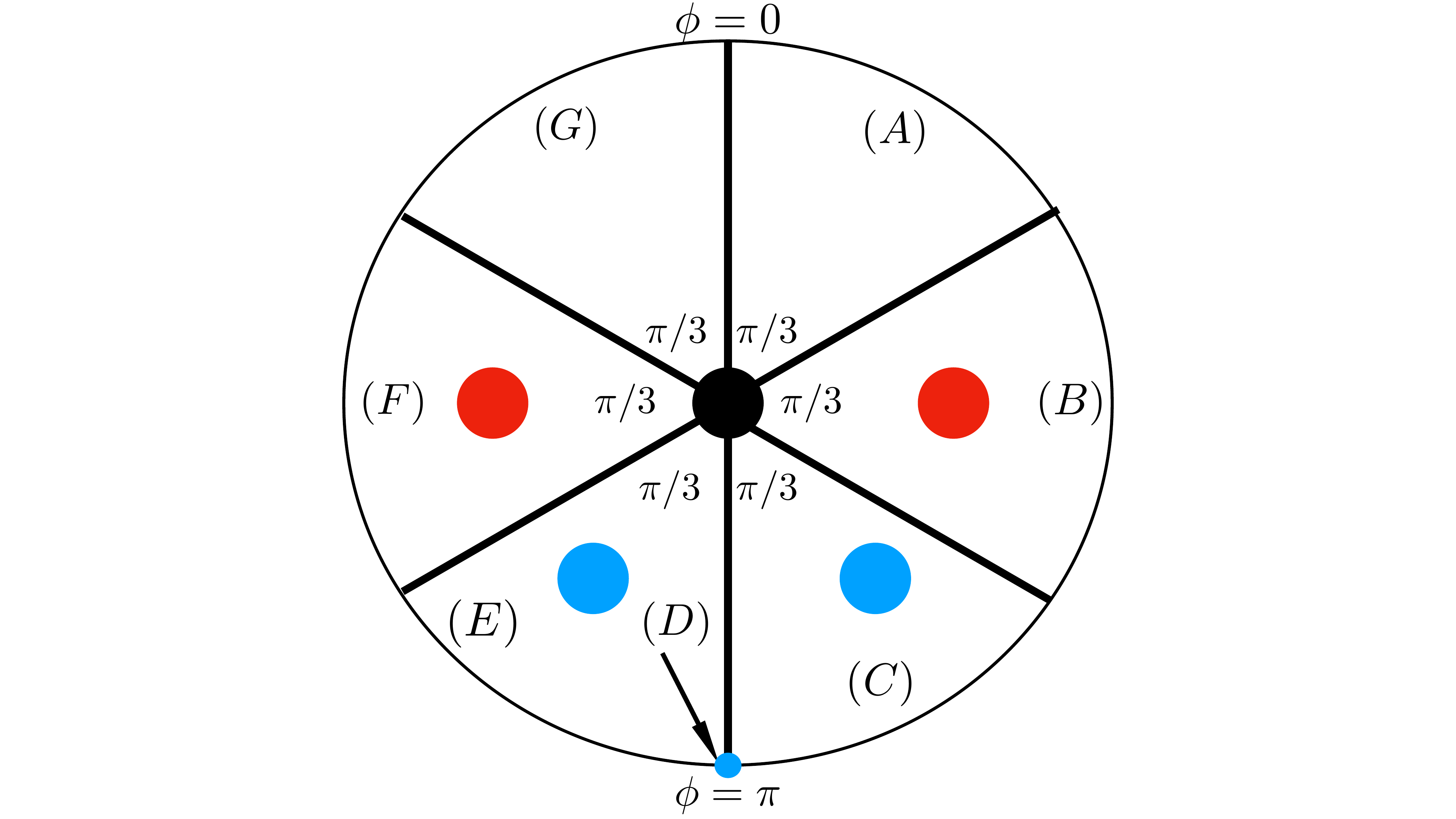}
    \caption{ EWs in Eq. (\ref{W-abc}) are optimal for $\phi\in [\pi/3, 5\pi/3]$, see the parameterization in Eq. (\ref{param3}). For (A) and (G), EWs are not optimal. For (D), i.e., $\phi=\pi$, the EW is decomposable and not optimal, and its mirror is positive semidefinite. For (B) and (F), EWs are non-decomposable and optimal, and their mirrored operators are decomposable EWs. For (C) and (E), EWs are non-decomposable and optimal, and their mirrored operators are positive definite.  }
    \label{fig:phi}
\end{figure}

\subsection{ EWs in $4 \otimes 4 $ }

For $n=4$, we consider instances of EWs in Eq.~(\ref{Wit_gen}) with four parameters \cite{PRA-2022} 
\bea
  W[a,b,c,d] &=& \sum_{i=0}^3 \Big[\, a\, |ii\>\<ii| + b\, |i,i+1\>\<i,i+1| \nonumber\\ 
  && \quad + c\, |i,i+2\>\<i,i+2| +  d\, |i,i+3\>\<i,i+3|\, \Big] \nn \\ 
  && \quad - \sum_{i\neq j}  |ii\>\<jj|\, \label{W-abcd} 
\eea
with $a,b,c,d\geq 0$ satisfying 
\begin{eqnarray}  
  a+b+c+d &=& a^2+b^2+c^2+d^2 = 3, \label{II-1}\\
  ac+bd &=& 1 , \ \ \ \  (a+c)(b+d) =  2 .\nonumber 
\end{eqnarray}
There are two solutions to the above set of equations \cite{PRA-2022}. From four parameters and four equations, we have two variables, called classes I and II:
\bea
\mathrm{ I}:&~&  a+c=2 \ , \ \ b+d=1 \ \nonumber \\
\mathrm{II}: &~&  a+c=1 \ , \ \ b+d=2 \ \nonumber 
\eea
EWs from class I are not optimal, whereas those from class II are optimal \cite{PRA-2022}. 

\subsubsection{Class I: non-optimal non-decomposable EWs}

EWs may be expressed by a single parameter $\theta$, 
\begin{eqnarray}
a &=& \frac{1}{2} (2-\sin\theta)  \ , \ b = \frac{1}{2} (1+\cos\theta) \ , \nonumber\\ \ c &=& 2-a \ , ~\mathrm{and}~\ d = {\mu}- b,
\label{eq:theta_class1}
\end{eqnarray}
In particular, $W(0) = W[1,1,1,0]$ and $W(\pi) = W[1,0,1,1]$ are Choi-like EWs, and 
\bea
W( \frac{\pi}{2}) = W[\frac{1}{2},\frac{1}{2},\frac{3}{2},\frac{1}{2}] \nonumber 
\eea
is the only decomposable EW in the class. The mirrored operator is computed for a witness $W (0)= W[1,1,1,0]$,
\begin{equation}
    M[1,1,1,0] = \frac 43\, \oper_4 \otimes \oper_4 - W[1,1,1,0].  \nonumber
\end{equation}
The mirrored operator can detect PPT entangled states in the following, 
\begin{eqnarray}
\label{xrho}
  \rho_x &\propto &\sum_{i=0}^3 \Big[\, 3\, |ii\>\<ii| + x\, |i,i+1\>\<i,i+1| + \, |i,i+2\>\<i,i+2| \nonumber\\ &+&  \frac 1x\, |i,i+3\>\<i,i+3|\, \Big]
   - \sum_{i\neq j}  |ii\>\<jj|\  ,
\end{eqnarray}
which is PPT for $x > 0$. It follows that 
\begin{equation}
    \tr [  M[1,1,1,0]  \rho_x ] = \frac{4}{3x}(x^2-5x +4),\nonumber 
\end{equation}
which is negative for $x \in (1,4)$. Hence, the mirrored operator $M[1,1,1,0]$ is a non-decomposable EW. 

We have shown that a pair of non-decomposable EWs can be mirrored with each other. 

\subsubsection{Class II: optimal non-decomposable EWs}

We introduce a parameterization as follows:
\begin{eqnarray}
a &=&  \frac{1}{2} (1+\cos\theta) \ , \ b = \frac{1}{2} (2-\sin\theta) \ ,\nonumber\\ \ c &=& 1-a\ , \ d = 2-b ,\label{eq:theta_class2}
\end{eqnarray}
with $\theta \in[0,\pi]$. For $\theta=0$ and $\theta=\pi$, EWs are decomposable. Otherwise, i.e., $\theta\in(0,\pi)$, they are non-decomposable and optimal. 

It turns out that mirrored operators are positive semidefinite for EWs with $\theta\in (\pi/2,\pi)$ and decomposable EWs for $ \theta \in \in (0,\pi/2)$. Hence, for optimal EWs in class II, their mirrored operators cannot be non-decomposable, i.e., either decomposable EWs or positive semidefinite \cite{arxiv25}.

\begin{figure}
    \centering
    \includegraphics[width=0.47 \textwidth]{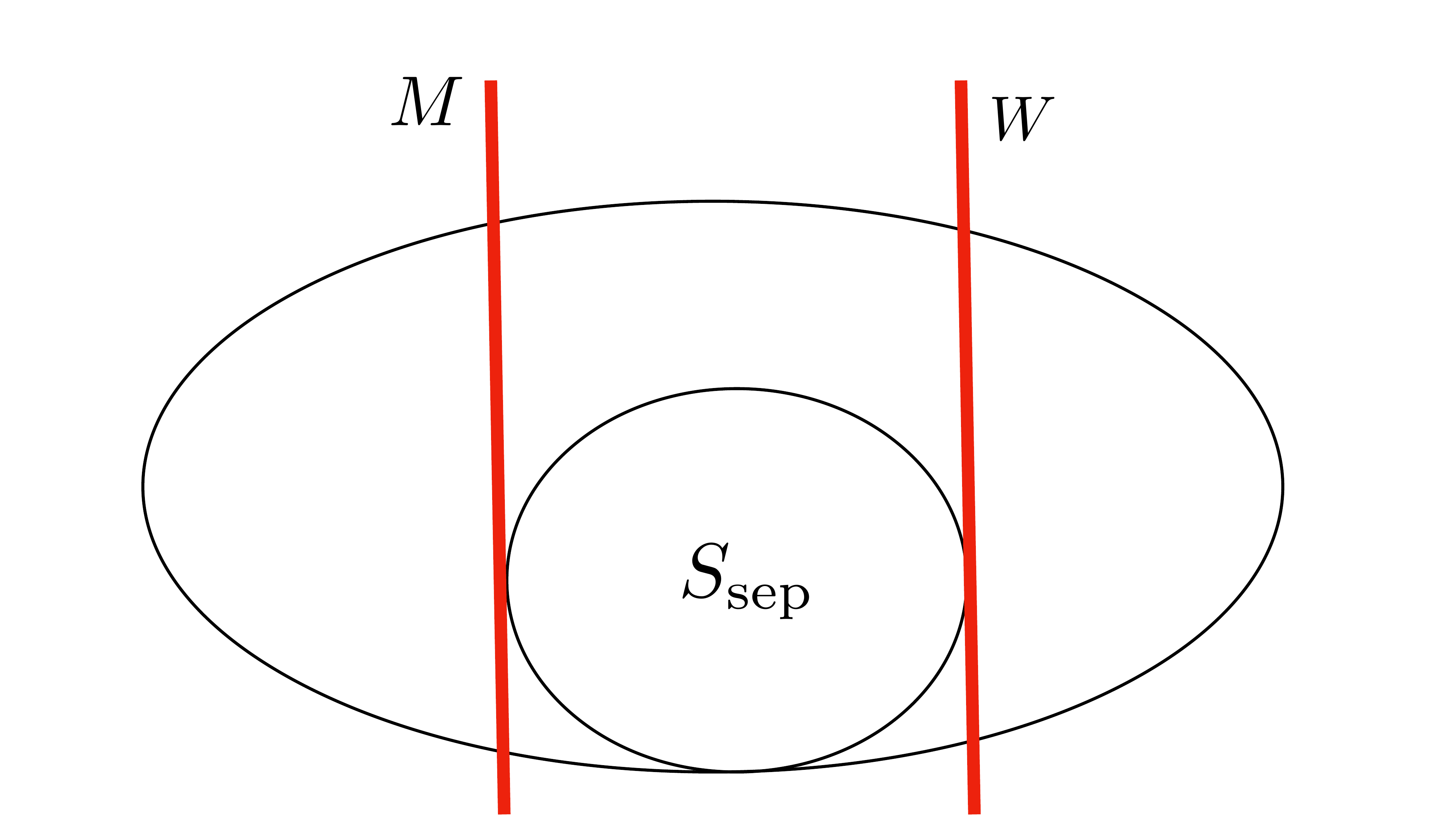}
    \caption{ A pair of optimal EWs in Eqs. (\ref{W&W}) can be mirrored with each other. Both are also non-decomposable. Two EWs are related by local unitaries. }
    \label{fig:opt3}
\end{figure}

\subsection{ Mirrored pair of optimal non-decomposable 
EWs}

So far, we have considered mirrored operators of non-decomposable EWs and found that the mirrored ones are not EWs detecting PPT entangled states; they are either decomposable EWs or positive semidefinite operators. Recently, an intriguing instance of mirrored EWs has been presented in $3\otimes 3$ systems \cite{arxiv25}, where both EWs are optimal, non-decomposable, and equivalent up to local unitaries. We here summarize the pair of optimal non-decomposable EWs.

Firstly, two EWs in the following are non-decomposable, \bea
{W} &=&
\left[\begin{array}{c c c|c c c|c c c}
\cdot & \cdot & \cdot & \cdot & 1 & \cdot & \cdot & \cdot & 1 \\
\cdot & 3 & \cdot & \cdot & \cdot & -2 & -2 & \cdot & \cdot \\
\cdot & \cdot & 3 & -2 & \cdot & \cdot & \cdot & -2 & \cdot \\
\hline
\cdot & \cdot & -2 & 3 & \cdot & \cdot & \cdot & -2 & \cdot \\
1 & \cdot & \cdot & \cdot & \cdot & \cdot & \cdot & \cdot & 1 \\
\cdot & -2 & \cdot & \cdot & \cdot & 3 & -2 & \cdot & \cdot \\
\hline
\cdot & -2 & \cdot & \cdot & \cdot & -2 & 3 & \cdot & \cdot \\
\cdot & \cdot & -2 & -2 & \cdot & \cdot & \cdot & 3 & \cdot \\
1 & \cdot & \cdot & \cdot & 1 & \cdot & \cdot & \cdot & \cdot
\end{array}\right],\nonumber\\
~\mathrm{and}~~{M} & = &
\left[\begin{array}{c c c|c c c|c c c}
4 & \cdot & \cdot & \cdot & -1 & \cdot & \cdot & \cdot & -1 \\
\cdot & 1 & \cdot & \cdot & \cdot & 2 & 2 & \cdot & \cdot \\
\cdot & \cdot & 1 & 2 & \cdot & \cdot & \cdot & 2 & \cdot \\
\hline
\cdot & \cdot & 2 & 1 & \cdot & \cdot & \cdot & 2 & \cdot \\
-1 & \cdot & \cdot & \cdot & 4 & \cdot & \cdot & \cdot & -1 \\
\cdot & 2 & \cdot & \cdot & \cdot & 1 & 2 & \cdot & \cdot \\
\hline
\cdot & 2 & \cdot & \cdot & \cdot & 2 & 1 & \cdot & \cdot \\
\cdot & \cdot & 2 & 2 & \cdot & \cdot & \cdot & 1 & \cdot \\
-1 & \cdot & \cdot & \cdot & -1 & \cdot & \cdot & \cdot & 4
\end{array}\right] \label{W&W}
\eea
as they detect PPT entangled states as follows, 
\bea
\rho_{W} &= & \frac{1}{15}
\left[\begin{array}{c c c|c c c|c c c}
3 & \cdot & \cdot & \cdot & \cdot & \cdot & \cdot & \cdot & \cdot \\
\cdot & 1 & \cdot & \cdot & \cdot & 1 & 1 & \cdot & \cdot \\
\cdot & \cdot & 1 & 1 & \cdot & \cdot & \cdot & 1 & \cdot \\
\hline
\cdot & \cdot & 1 & 1 & \cdot & \cdot & \cdot & 1 & \cdot \\
\cdot & \cdot & \cdot & \cdot & 3 & \cdot & \cdot & \cdot & \cdot \\
\cdot & 1 & \cdot & \cdot & \cdot & 1 & 1 & \cdot & \cdot \\
\hline
\cdot & 1 & \cdot & \cdot & \cdot & 1 & 1 & \cdot & \cdot \\
\cdot & \cdot & 1 & 1 & \cdot & \cdot & \cdot & 1 & \cdot \\
\cdot & \cdot & \cdot & \cdot & \cdot & \cdot & \cdot & \cdot & 3
\end{array}\right]~~\mathrm{and} \nonumber\\
\rho_{M} & = & \frac{1}{15} \left[
\begin{array}{ccc|ccc|ccc}
 1 & \cdot & \cdot & \cdot & 1 & \cdot & \cdot & \cdot & 1 \\
 \cdot & 2 & \cdot & \cdot & \cdot & -1 & -1 & \cdot & \cdot \\
 \cdot & \cdot & 2 & -1 & \cdot & \cdot & \cdot & -1 & \cdot \\\hline
 \cdot & \cdot & -1 & 2 & \cdot & \cdot & \cdot & -1 & \cdot \\
 1 & \cdot & \cdot & \cdot & 1 & \cdot & \cdot & \cdot & 1 \\
 \cdot & -1 & \cdot & \cdot & \cdot & 2 & -1 & \cdot & \cdot \\\hline
 \cdot & -1 & \cdot & \cdot & \cdot & -1 & 2 & \cdot & \cdot \\
 \cdot & \cdot & -1 & -1 & \cdot & \cdot & \cdot & 2& \cdot \\
 1 & \cdot & \cdot & \cdot & 1 & \cdot & \cdot & \cdot & 1 \\
\end{array} \nonumber
\right],
\eea
such that
\bea
   \tr [  W  \rho_W ] =  \tr [M  \rho_M ]  = - \frac{2}{5}.\nonumber 
\eea
Note that EWs in Eq. (\ref{W&W}) are constructed from positive maps via depolarization with mutually unbiased bases \cite{arxiv25}. 

Both EWs in Eq. (\ref{W&W}) are optimal as they fulfill the spanning property. In this case, it also holds that they are connected by local unitaries
\bea
M = U \otimes U^*  W  U^\dagger\otimes U^{*\dagger} \nonumber 
\eea
where 
\bea
U=\frac{1}{\sqrt{3}}\left(
\begin{array}{ccc}
 1 & 1 & \omega\\
 1 & \omega & 1 \\
 \omega^* & \omega & \omega \end{array}
\right)
\eea
and $\omega = \exp[ 2\pi i /3]$.

\section{Bipartite systems: generalization} 
\label{sec:big}

We here generalize mirrored EWs in Eq. (\ref{eq:mr}) such that they are connected by some other observable, not an identity, 
\bea
W + M = \mathbbm{K},\label{eq:gmeew_form}
\eea
where it follows that $\mathbbm{K}$ block-positive, i.e., it is nonnegative for all separable states since $W$ and $M$ are EWs. Generalized mirrored EWs allow one to approach the following problems. Firstly, a pair of EWs may be constructed to detect a larger set of entangled states such that there may be entangled states detected by both EWs in common, i.e., 
\bea
D_W\cap D_M \neq \emptyset. \label{eq:emptyg}
\eea
Secondly, one may attempt to characterize non-extremal EWs composed of optimal EWs and construct EWs that are efficient in the detection of entangled states. 

Let us exploit EWs of class II in Eq.~\eqref{W-abcd}, which are optimal and non-decomposable EWs. We then apply a Weyl operator $U_{jk}$ to construct its generalized mirrored one
\bea
M = (I \otimes U_{jk}) W (I \otimes U_{jk})^{\dag}.\nonumber
\eea
Note that two EWs, $W$ and $M$, are unitarily equivalent; hence, both are optimal and non-decomposable, and the two sets $D_W$ and $D_M$ are unitarily equivalent.

\begin{figure}
    \centering
    \includegraphics[width=0.48 \textwidth]{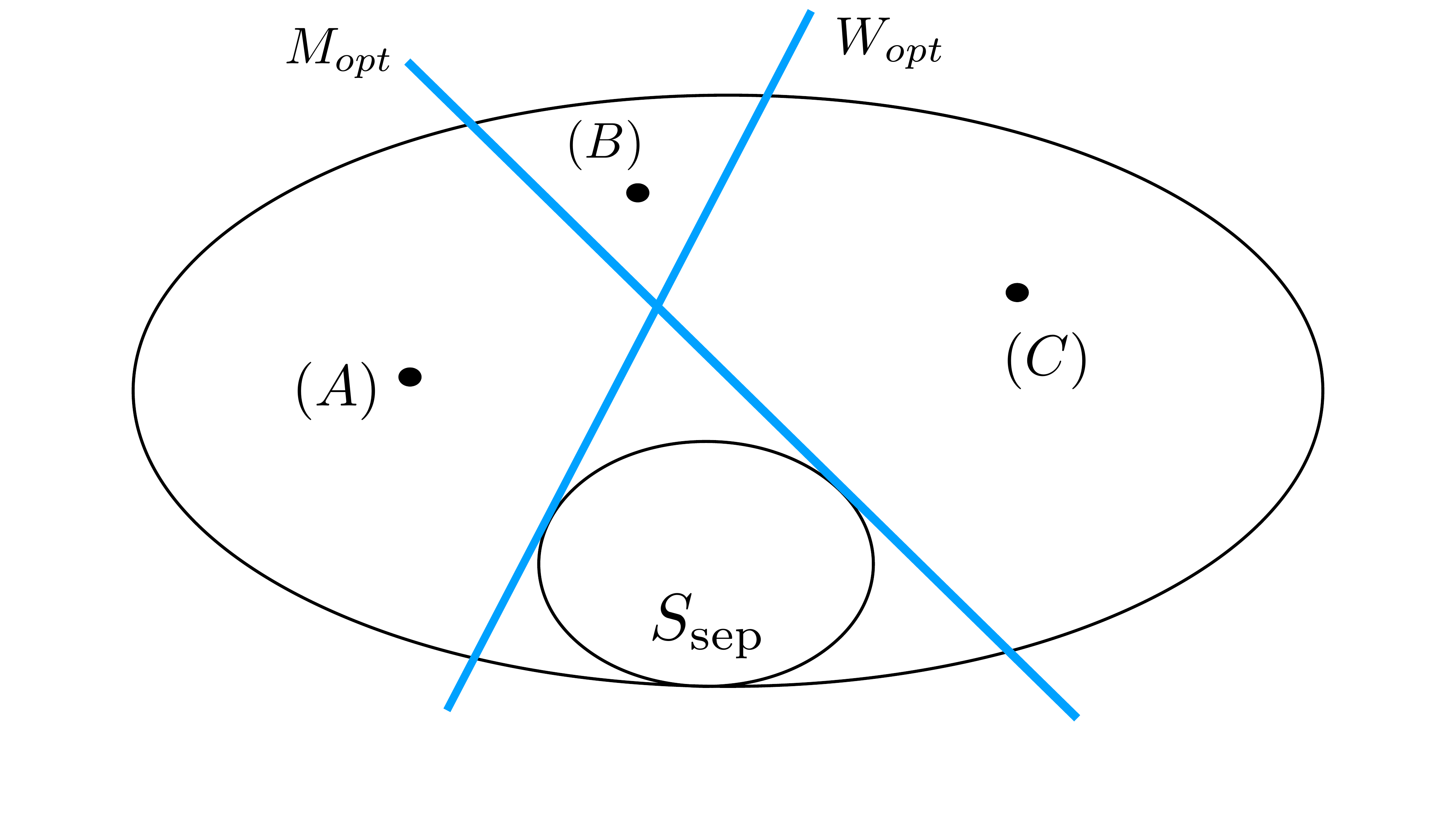}
    \caption{ Entangled states (A) and (C) are detected by a witness $W$ and its mirrored one $M$, respectively, where both EWs are non-decomposable and optimal. Both EWs detect an entangled state (B) in common, see Eq. (\ref{eq:emptyg}), and $W+M$ also defines an EW, see Eq. (\ref{eq:gmeew_form}). }
    \label{fig:gmew}
\end{figure}

A witness $W$ detects a family of PPT entangled states $\rho_x$ in Eq.~\eqref{xrho} since, 
\bea
    \tr [W \rho_x] &=&  a + bx + c+ \frac{d}{x} - 3 \nonumber \\
    &=& b (x-1) + d \left( \frac{1}{x} - 1\right),\nonumber 
\eea
which is negative for 
\bea
\frac{1}{2b} \left( b+d - \abs{b-d} \right) < x <\frac{1}{2b} \left( b+d + \abs{b-d} \right). \nonumber 
\eea
We have $\tr [W \rho_x] < 0$ for $b\neq d$. Since $W$ and $M$ are unitarily equivalent, $M$ detects entangled states$(I \otimes U_{jk}) \rho_x (I \otimes U_{jk})^{\dag}$, which are PPT; $M$ is non-decomposable.  

With two optimal EWs, we construct an operator 
\bea
\mathbbm{K} = W +(I \otimes U_{jk})  W (I \otimes U_{jk})^{\dagger}. \nonumber
\eea
It is clear that $\mathbbm{K}$ is non-negative for all separable states. Let us consider an entangled state 
\bea
\tau = \frac{1}{2}( P_{00} + P_{j1}) \nonumber 
\eea
where $P_{kl} = |\psi_{kl}\rangle \langle \psi_{kl}|$ in Eq. (\ref{gBell}). Note that the state above does not remain positive after the partial transpose. From the parametrization of $a,b,c,d$ in Eq. (\ref{eq:theta_class2}), 
\bea
    \tr [W \tau] &=& \frac{1}{4} (  \cos \theta - \sin \theta -3), \nonumber \\
  \mathrm{and} ~~ \tr [M \tau] &=& \frac{1}{4} (  \cos \theta + \sin \theta -3), \nonumber
\eea
both of which are negative. By construction, it follows that
\be
    \tr [\mathbbm{K} \tau] = \tr [W \tau] + \tr [M \tau] < 0,\nonumber  
\ee
and each one $W$ and $M$ detects entangled states $\tau$ in common, see Fig.~\ref{fig:gmew}. Hence, we have shown that optimal and non-decomposable EWs can be mirrored with each other by another EW.

\section{Open problems}
\label{sec:open}

Mirrored EWs pave the pathway to increasing the usefulness of an experimental realization of EWs. Once an EW is estimated, the estimation can be exploited to detect distinct types of entangled states. Our results open an avenue to develop EWs with intriguing questions in the following. 
\begin{enumerate}
    \item For bipartite systems, we have found that for two-qubit states, optimal EWs cannot be mirrored with each other. One example of a mirrored pair of optimal EWs have been obtained later for systems $3\otimes 3$ \cite{arxiv25}.  However, it is not straightforward to extend the result to higher-dimensional systems.  It remains open to find if a mirrored pair of optimal EWs exists for higher dimensions. It would also be interesting to seek such pairs with covariant Bell-diagonal EWs. 
   \item Optimal EWs can be mirrored via another EW, as shown in Eq. (\ref{eq:gmeew_form}). Since optimal EWs are generally not extremal, yet, it remains unanswered whether a pair of optimal EWs can be mirrored through another optimal EWs. 
    \item Mutually unbiased bases (MUBs) are a measurement setting for quantum state verification. They can also be used for constructing EWs \cite{Bae_2019, Hiesmayr_2021, Bae_2022}, in which extendible and unextendible MUBs, which are inequivalent, provide different efficiencies in detecting entangled states. For practical applications, it would be interesting to find mirrored EWs for EWs constructed from MUBs and investigate how efficient a subset of MUBs is in entanglement detection before a quantum state is fully verified.   
    \item Throughout, we have investigated mirrored linear EWs. In future directions, it is worth investigating the properties of mirrored nonlinear EWs. 
    \item All instances provided here about mirrored pairs of optimal EWs satisfy the equivalence up to local unitaries. That is, mirrored EWs detect distinct sets of entangled states in which states of one set can be transformed the others by local untaries. It remains open if there could be a mirrored pair of optimal EWs that cannot be transformed by local unitaries.  
    \end{enumerate}

\section{Conclusion}
\label{sec:con}

We have developed mirrored EWs for multipartite and high-dimensional quantum systems. For multipartite systems, we have investigated mirrored EWs for $n$-partite GHZ states, graph states, and multipartite bound entangled states.  A mirrored pair of optimal EWs is also shown for three-qubit systems. For bipartite systems, we have considered mirroring for existing optimal EWs and found that the resulting mirrored ones are either decomposable EWs or positive semidefinite. A mirrored pair of optimal EWs is then constructed in a three-dimensional bipartite system. Hence, we have shown that mirrored pairs of optimal EWs, which do not exist for two-qubit states, appear in higher dimensions and multipartite systems. The framework of mirrored EWs has been generalized by taking an EW, not an identity, to which two EWs are mirrored with each other. Our results shed new light on the theory of EWs and also on the efficient detection of entangled states in practical scenarios.  

{On the theoretical aspect, we stress that EWs present the fundamental framework of detecting entangled states. The framework haas been applied to developing measurement-device-independent EWs that relax assumptions on experimental implementations and do not ask one to trust a measurement setting \cite{PhysRevLett.110.060405}. In future directions, it would be interesting to apply the framework of mirrored EWs to scenarios where an implementation cannot be fully trusted, such as semi-device-independent and device-independent EWs. Moreover, as noted, the mirroring framework of a separating hyperplane applies to resource theories in general. The framework of mirrored EW can also apply and be used to detect useful resources \cite{PhysRevX.9.031053}. }

For practical applications, our results are readily applied to realistic scenarios of quantum communication, e.g., fibre-optic communication, where rotations of local bases of qubit states are a critical source of errors. Entanglement is also a precondition in secure quantum communication \cite{PhysRevLett.92.217903, PhysRevLett.94.020501}. Mirrored pairs of EWs considered in the present work, when both are EWs, are connected by local unitaries. Consequently, entangled states detected by each EW of a pair are local unitarily equivalent. Therefore, mirrored EWs are robust against the noise from rotations of local bases. We envisage that mirrored EWs can enhance the detection of entangled states in realistic entanglement-based quantum communication protocols.

\section*{Acknowledgments}
J.S. and J.B. are supported by the National Research Foundation of Korea (Grant No. NRF-2021R1A2C2006309, NRF-2022M1A3C2069728, RS-2024-00408613, RS-2023-00257994), and the Institute for Information \& Communication Technology Promotion (IITP) (RS-2023-00229524, RS-2025-02304540). 

A.B. acknowledges the support received for this research from the research grant sanctioned by the National Board for Higher Mathematics (NBHM), Department of Atomic Energy (DAE), Government of India, under sanction letter no: 02011/32/2025/NBHM(R.P)/R$\&$D II/9677.

B.C.H. acknowledges gratefully that this research was funded in whole, or in part, by the Austrian Science Fund (FWF) project P36102-N (Grant DOI: 10.55776/P36102). For the purpose of open access, the author has applied a CC BY public copyright license to any Author Accepted Manuscript version arising from this submission. The funder played no role in study design, data collection, analysis and interpretation of data, or the writing of this manuscript.

D.C. was supported by the Polish National Science Centre project No. 2024/55/B/ST2/01781.

\section*{Appendix}

{\subsection{ Robustness inequalities}
In this subsection, we provide rigorous proofs of the inequalities employed in Section~E, Chapter~I (\emph{Error Robustness}). 

We collect and prove several basic identities and bounds used in our robustness analysis. 
Throughout, $\| \cdot \|_\opn$ denotes the operator norm, where 
\begin{equation}
    \|A\|_\opn := \sup_{\|\psi\|=1} \|A\psi\| = \sup_{\|\psi\|=1} |\langle \psi | A | \psi \rangle|.
\end{equation}
Now we state the first lemma:
\begin{lemma}[Operator exponential]
\label{lem:exp-FTC}
Let $H$ be a bounded operator on a finite-dimensional Hilbert space.
Then
\bea
e^{iH}-\id \;=\; i \int_0^1 e^{itH}\,H\,dt.
\eea
\end{lemma}

\begin{proof}
Define $f(t):=e^{itH}$ for $t\in[0,1]$. Then $f$ is differentiable and, by the power series of the matrix exponential,
\bea
f'(t) \;=\; iHe^{itH} \;=\; i\,e^{itH}H,
\eea
since $H$ commutes with any polynomial in $H$. By the fundamental theorem of calculus,
\bea
e^{iH}-\id \;=\; f(1)-f(0) \;=\; \int_0^1 f'(t)\,dt
\;=\; \int_0^1 i\,e^{itH}H\,dt.
\eea
\end{proof}

\begin{corollary}[A norm bound for exponentials]
\label{cor:exp-bound}
For bounded $H$,
\bea
\|e^{iH}-\id\|_\opn \;\le\; \|H\|_\opn.
\eea
\end{corollary}

\begin{proof}
By Lemma~\ref{lem:exp-FTC} and $\|e^{itH}\|_\opn=1$,
\bea
\|e^{iH}-\id\|_\opn &\;=\;& \left\| i \int_0^1 e^{itH}H\,dt \right\|_\opn \nonumber \\
&\;\le\;& \int_0^1 \|e^{itH}\|_\opn \, \|H\|_\opn \,dt \;=\; \|H\|_\opn.
\eea
\end{proof}

\begin{lemma}[Expectation bound]
\label{lem:simple-exp-bound}
For any bounded $A$, state $\rho$, and unitary $U$,
\bea
\big|\tr\!\big[ A\rho\big]\big| \;\le\; \| A \|_\opn.
\eea
\end{lemma}

\begin{proof}
Let $\rho=\sum_i p_i \dyad{\psi_i}$ be a density matrix with $p_i\ge 0$, $\sum_i p_i=1$. Then
\bea
\tr\!\big[ A \rho\big]
&=& \sum_i p_i \bra{\psi_i}A\ket{\psi_i} \nonumber \\
&\le& \sum_i p_i \,\| A \|_\opn \;=\; \| A \|_\opn.
\eea
Taking absolute values on both sides yields the stated inequality.
\end{proof}

\begin{corollary}[Bound on the witness expectation deviation]
\label{lem:exp-diff}
For any bounded $W$, state $\rho$, and unitary $U$,
\bea
\big|\tr\!\big[(U^\dagger WU - W)\rho\big]\big|
\;\le\; \|U^\dagger WU - W\|_\opn .
\eea
\end{corollary}

\begin{proof}
Apply Lemma~\ref{lem:simple-exp-bound} with $A=U^\dagger WU - W$ and $B=\rho$.
\end{proof}

\begin{lemma}[Bounding the norm distance between $W$ and $U^{\dag} W U$]
\label{lem:heis-diff}
For any bounded $W$ and unitary $U$,
\bea
\|U^\dagger WU - W\|_\opn \;\le\; 2\,\|W\|_\opn \,\|U-\id\|_\opn .
\eea
\end{lemma}

\begin{proof}
We first rewrite
\bea
U^\dagger WU - W \;=\; (U^\dagger-\id)WU + W(U-\id),
\eea
which yields
\bea
\|U^\dagger WU - W\|_\opn
&\le& \|U^\dagger-\id\|_\opn \,\|W\|_\opn \,\|U\|_\opn \nonumber \\
&& \qquad + \|W\|\,\|U-\id\|_\opn \nonumber \\
&=& 2\,\|W\|_\opn \,\|U-\id\|_\opn,
\eea
as $\|U\|_\opn=1$ and $\|U^\dagger-\id\|_\opn =\|U-\id\|_\opn$.
\end{proof}

\begin{lemma}[Deviation bounds on local tensor product unitaries]
\label{prop:prod-unitary}
Let first put $U=\bigotimes_{j=1}^N U_j$ with $U_j=e^{i(\theta_j/2)\,\bs n_j\cdot\bs\sigma}$, each $U_j$ being a single-qubit rotation. Then we have
\bea
\|U-\id\|_\opn \;\le\; \frac12 \sum_{j=1}^N |\theta_j| .
\eea
\end{lemma}

\begin{proof}
Define local Hermitian operators acting on the global Hilbert space
\bea
H_j \;:=\; \frac{\theta_j}{2}\,(\bs n_j\cdot\bs\sigma)\;\otimes\!\bigotimes_{k\neq j}\id_k.
\eea
It immediately follows that they commute pairwise, i.e., $[H_i,H_j]=0$ for $i\neq j$, since they act on disjoint tensor unitaries. Hence
\bea
U \;=\; \bigotimes_{j=1}^N e^{i(\theta_j/2)\,\bs n_j\cdot\bs\sigma}
\;=\; \exp\!\Big(i\sum_{j=1}^N H_j\Big) \;=:\; e^{iH},
\eea
with $H:=\sum_j H_j$. By Corollary~\ref{cor:exp-bound},
\bea
\|U-\id\|_\opn \;=\; \|e^{iH}-\id\|_\opn \;\le\; \|H\|_\opn.
\eea
By the triangle norm inequality,
\bea
\|H\|_\opn \;=\; \Big\|\sum_{j=1}^N H_j\Big\|_\opn
\;\le\; \sum_{j=1}^N \|H_j\|_\opn.
\eea
Finally, $\|\bs n_j\cdot\bs\sigma\|_\opn =1$ (as its spectrum is $\{\pm1\}$), thus
\bea
\|H_j\|_\opn \;=\; \frac{|\theta_j|}{2}\,\|\bs n_j\cdot\bs\sigma\|_\opn
\;=\; \frac{|\theta_j|}{2},
\eea
which concludes the proof.
\end{proof}

\begin{proposition}[Bound on expectation deviation with local tensor product of unitaries]
\label{cor:master}
Let $W$ be a bounded observable, $\rho$ a state, and $U=\bigotimes_j U_j$ as in Prop.~\ref{prop:prod-unitary}.
Then
\bea
\big|\tr\!\big[W(U\rho U^\dagger-\rho)\big]\big|
&\le & 2\,\|W\|_\opn \,\|U-\id\|_\opn \nonumber \\
&\le & \|W\|_\opn \sum_{j=1}^N |\theta_j|.
\eea
\end{proposition}
\begin{proof}
The result follows when we apply Lemma~\ref{lem:heis-diff} and Lemma~\ref{prop:prod-unitary}.
\end{proof}

\subsection{ Projector formalism and proofs for Entanglement Witness}
The idea of certifying that the alternative witness $W_a$ and the two-measurement witness $W_{2m}$ are indeed entanglement witnesses follows established arguments in the literature, notably the stabilizer approaches to multipartite entanglement detection in \cite{Toth2005} and subsequent developments that recast such witnesses and their noise–robustness properties in projector form \cite{zhou2019}. 
In particular, we use that its maximal biseparable overlap satisfies $\tr[\dyad{GHZ_n}\,\sigma]\le \tfrac12$, which yields the biseparable bounds for $W_a$ and $W_{2m}$ employed below (cf.\ \cite{Toth2005,zhou2019}).
For the $n$-qubit GHZ state,
\bea
\ket{GHZ_n} \;=\; \frac{1}{\sqrt{2}}\big(\ket{0}^{\otimes n} + \ket{1}^{\otimes n}\big),
\eea
with stabilizer generators $g_1 = X^{\otimes n}$ and $g_i = Z^{(i-1)} Z^{(i)}$ for $i=2,\dots,n$, define the commuting projectors
\bea
P_i \;=\; \frac{\id^{\otimes n} + g_i}{2}\quad (i=1,\dots,n).
\eea
Then the GHZ projector is
\bea
\Pi_{GHZ} \;=\; \dyad{GHZ_n} \;=\; \prod_{i=1}^n P_i,
\eea
and for any biseparable state $\sigma$ (with respect to any bipartition) we use
\bea
\tr[\Pi_{GHZ}\,\sigma] \;\le\; \frac{1}{2}.
\eea

\paragraph{A Bell-type inequality for commuting projectors.}
If $\{P_\ell\}_{\ell=1}^M$ commute and are projectors, then \cite{zhou2019}
\bea
\prod_{\ell=1}^M P_\ell + (M-1)\,\id \;-\; \sum_{\ell=1}^M P_\ell \;\ge\; 0.
\eea
Taking the expectation on any state $\rho$ gives
\bea
\sum_{\ell=1}^M \tr[P_\ell \rho] \;\le\; (M-1) + \tr\!\big[\prod_{\ell=1}^M P_\ell \,\rho\big].
\eea

\paragraph{Alternative witness $W_a$ is an EW.}
The alternative EW is
\bea
W_a \;=\; \frac{n-1}{2}\,\id^{\otimes n} - \frac{1}{2}\sum_{i=1}^n g_i.
\eea
Using $g_i = 2P_i - \id^{\otimes n}$, we rewrite
\bea
W_a \;=\; \frac{2n-1}{2}\,\id^{\otimes n} - \sum_{i=1}^n P_i.
\eea
For any biseparable $\sigma$, apply the projector inequality with $M=n$ to obtain
\bea
\sum_{i=1}^n \tr[P_i \sigma] \;\le\; (n-1) + \tr[\Pi_{GHZ}\sigma].
\eea
Using $\tr[\Pi_{GHZ}\sigma]\le \tfrac12$, we have
\bea
\sum_{i=1}^n \tr[P_i \sigma] \;\le\; \frac{2n-1}{2}.
\eea
Therefore,
\bea
\tr[W_a \sigma] \;=\; \frac{2n-1}{2} - \sum_{i=1}^n \tr[P_i \sigma] \;\ge\; 0.
\eea
On the target state,
\bea
\tr[W_a \,\Pi_{GHZ}] &=& \frac{2n-1}{2} - \sum_{i=1}^n \tr[P_i \,\Pi_{GHZ}] \nonumber \\
&=& \frac{2n-1}{2} - n \nonumber \\
&=& -\frac{1}{2} < 0.
\eea
Hence $W_a$ is indeed an entanglement witness.

\paragraph{Two-measurement witness $W_{2m}$ is an EW.}
Define the two commuting projectors
\bea
P_X \;:=\; \frac{\id^{\otimes n} + g_1}{2},
\eea
\bea
P_Z \;:=\; \prod_{i=2}^{n} \frac{\id^{\otimes n} + g_i}{2}.
\eea
They satisfy
\bea
P_X P_Z \;=\; P_Z P_X \;=\; \Pi_{GHZ}.
\eea
The two-measurement EW is
\bea
W_{2m} &=& \frac{1}{2^n - 2} \left(
            \frac{3}{2} \,\id^{\otimes n}
            - \frac{\id^{\otimes n} + g_1}{2}
            - \prod_{i=2}^{n} \frac{\id^{\otimes n} + g_i}{2} \right) \nonumber \\
&=& \frac{1}{2^n - 2}\,\big( \tfrac{3}{2}\,\id^{\otimes n} - P_X - P_Z \big).
\eea
For any biseparable $\sigma$, use the projector inequality with $M=2$:
\bea
\tr[P_X \sigma] + \tr[P_Z \sigma] \;\le\; 1 + \tr[\Pi_{GHZ}\sigma].
\eea
Using $\tr[\Pi_{GHZ}\sigma]\le \tfrac12$, we have
\bea
\tr[P_X \sigma] + \tr[P_Z \sigma] \;\le\; \frac{3}{2}.
\eea
Therefore,
\bea
\tr[W_{2m}\sigma]
\;=\; \frac{1}{2^n - 2}\,\Big(\tfrac{3}{2} - \tr[P_X \sigma] - \tr[P_Z \sigma]\Big)
\;\ge\; 0.
\eea
On the target state,
\bea
\tr[P_X \,\Pi_{GHZ}] \;=\; 1, \qquad \tr[P_Z \,\Pi_{GHZ}] \;=\; 1,
\eea
hence
\bea
\tr[W_{2m}\,\Pi_{GHZ}] 
&=& \frac{1}{2^n - 2}\,\Big(\tfrac{3}{2} - 1 - 1\Big) \nonumber \\
&=& -\,\frac{1}{2(2^n-2)} \;<\; 0.
\eea
Thus $W_{2m}$ is indeed an entanglement witness.

\subsection{ Proofs for $W[i_1, i_2 i_3]$ detecting PPT states}
\subsubsection{ $W[i_1, i_2 i_3]$ is indeed an entanglement witness }
The original proof strategy establishing that operators of the form $W[i_1,i_2,i_3]$ (including the concrete choice
$W[1,1,0]=III - ZZZ + XXX + XYY - YXY + YYX$) are indeed entanglement witnesses
can also be found in Jafarizadeh \emph{et al.} (2008); see \cite{Jafarizadeh2008} for an independent derivation and related families of three-qubit EWs. Here we show 
\begin{proposition}[W is indeed an EW]
    For every fully separable $\sigma$, it holds that
    \bea
        \tr [W[i_1, i_2, i_3] \sigma] \;\ge\; 0.
    \eea
\end{proposition}
\begin{proof}
Let $\rho=\rho_1\otimes\rho_2\otimes\rho_3$ be a pure product state and write single–qubit Bloch coordinates
\bea
&& x_j= \tr(X\rho_j), y_j=\tr(Y\rho_j), z_j=\tr(Z\rho_j),\nonumber \\
&& x_j^2+y_j^2+z_j^2\le 1,\quad j=1,2,3. \nonumber
\eea
Then
\bea
&& \langle W[1,1,0]\rangle_\rho \nonumber \\
&&= 1 - z_1 z_2 z_3 + \Big( x_1 x_2 x_3 + x_1 y_2 y_3 - y_1 x_2 y_3 + y_1 y_2 x_3 \Big). \nonumber
\eea
Introduce complex numbers
\bea
\zeta_j = x_j + i y_j,\qquad |\zeta_j|=\sqrt{x_j^2+y_j^2}\le \sqrt{1-z_j^2}.
\eea
A short algebra gives
\bea
&& x_1 x_2 x_3 + x_1 y_2 y_3 - y_1 x_2 y_3 + y_1 y_2 x_3 \nonumber \\
&& \quad = \Re\!\big( \bar\zeta_1\,\zeta_2\,\bar\zeta_3 \big)
\ge -\,|\zeta_1|\,|\zeta_2|\,|\zeta_3| \nonumber \\
&& \quad \ge -\,\sqrt{(1-z_1^2)(1-z_2^2)(1-z_3^2)}.
\eea
Hence
\bea
\langle W[1,1,0]\rangle_\rho
\ge
1 - z_1 z_2 z_3 - \sqrt{(1-z_1^2)(1-z_2^2)(1-z_3^2)}. \nonumber
\eea
Therefore it suffices to prove the squared inequality
\bea
(1-z_1^2)(1-z_2^2)(1-z_3^2) \;\le\; \big(1 - z_1 z_2 z_3\big)^2
\eea
holds for $z_j\in[-1,1]$. \\
Set $a_j=z_j^2\in[0,1]$. Then
\bea
&& \big(1-\sqrt{a_1 a_2 a_3}\big)^2 - (1-a_1)(1-a_2)(1-a_3) \nonumber \\
&& = \big(a_1+a_2+a_3\big)
     - \big(a_1 a_2 + a_2 a_3 + a_3 a_1\big) \nonumber \\
&& \qquad + 2 a_1 a_2 a_3 - 2 \sqrt{a_1 a_2 a_3}.
\eea
Let $r=(a_1 a_2 a_3)^{1/3}\in[0,1]$. By arithmatic mean-geometry mean inequality,
\bea
\frac{a_1+a_2+a_3}{3} &\ge& r, \\
\frac{a_1 a_2 + a_2 a_3 + a_3 a_1}{3} &\ge& r^2.
\eea
Hence
\bea
&& \big(1-\sqrt{a_1 a_2 a_3}\big)^2 - (1-a_1)(1-a_2)(1-a_3) \nonumber \\
&&\ge 3r - 3r^2 + 2 r^3 - 2 r^{3/2}
\;\equiv\; F(r).
\eea
One checks $F(0)=0$, $F(1)=0$, and $F(r)\ge 0$ for all $r\in[0,1]$, which proves
\bea
(1-z_1^2)(1-z_2^2)(1-z_3^2) \;\le\; \big(1 - z_1 z_2 z_3\big)^2.
\eea
Therefore $\langle W[1,1,0]\rangle_\rho \ge 0$ for every pure product $\rho$, and by convexity $\tr\!\big(W[1,1,0]\sigma\big)\ge 0$ for every fully separable (mixed) state $\sigma$. Since all possible $W[i_1, i_2, i_3]$ are local unitarily equivalent one another, every different $W[i_1, i_2, i_3]$ is indeed an EW.
\end{proof}
}

\subsubsection{ Local equivalences of $W[i_1, i_2, i_3]$}
As stated earlier, two different EWs $W[i_1, i_2, i_3]$ and $W[i'_1, i'_2, i'_3]$ from Eq.~\eqref{eq:3w} are locally equivalent. More precisely, one $W[i'_1, i'_2, i'_3]$ is given by the Pauli-conjugated operator:
\bea
 W[i'_1, i'_2, i'_3] =  U W[i_1, i_2, i_3] U^{\dagger}, \nn
\eea
where $U$ is a local unitary, a product of Pauli matrices. Specifically,
\begin{align*}
    W[1,1,1] &= (Z^{(1)}Z^{(2)}Z^{(3)}) \, W[0,0,0] \,(Z^{(1)}Z^{(2)}Z^{(3)})^{\dag} \\
    W[1,1,0] &= (Z^{(1)}X^{(2)}X^{(3)}) \, W[0,0,0] \,(Z^{(1)}X^{(2)}X^{(3)})^{\dag} \\
    W[1,0,1] &= (X^{(1)}Z^{(2)}X^{(3)}) \, W[0,0,0] \,(X^{(1)}Z^{(2)}X^{(3)})^{\dag} \\
    W[1,0,0] &= (X^{(1)}X^{(2)}Z^{(3)}) \, W[0,0,0] \,(X^{(1)}X^{(2)}Z^{(3)})^{\dag} \\
    W[0,1,1] &= (Y^{(1)}Y^{(2)}\id^{(3)}) \, W[0,0,0] \, (Y^{(1)}Y^{(2)}\id^{(3)})^{\dag} \\
    W[0,1,0] &= (Y^{(1)}\id^{(2)}Y^{(3)}) \, W[0,0,0] \, (Y^{(1)}\id^{(2)}Y^{(3)})^{\dag} \\
    W[0,0,1] &= (\id^{(1)}Y^{(2)}Y^{(3)}) \, W[0,0,0] \, (\id^{(1)}Y^{(2)}Y^{(3)})^{\dag}
\end{align*}
The same applies to $M[i_1, i_2, i_3]$.

{ 
\subsubsection{ Operator Norm of $W[i_1, i_2, i_3]$}
Label computational basis states by bitstrings $\ket{{\bf r}}$ with ${\bf r} = r_1 r_2 r_3\in\{0,1\}^3$ and pair each with its bitwise complement $\ket{\bf \bar r}$, where $\bf \bar r = 1 - r$ (mod 2). A family of EWs $W[i_1, i_2, i_3]$ has properties that (i) contain only $Z\!Z\!Z$ on the diagonal and (ii) flip all three bits in the $X/Y$ terms, each two-dimensional subspace 
$\mathcal{H}_r:=\mathrm{span}\{\ket{{\bf r}},\ket{{\bf \bar r}}\}$ is invariant under application of $W[i_1, i_2, i_3]$ so we can consider $W[i_1, i_2, i_3]$ being restricted to a $2\times2$ block.

Define
\bea
z_{123}({\bf r}):=(-1)^{r_1+r_2+r_3}\in\{\pm 1\}
\eea
where $z_{123}({\bf r})$ is the $Z\!Z\!Z$ eigenvalue on $\ket{{\bf r}}$, and let $w(r)$ be the off-diagonal entry in a $2 \times 2$ block sending from $\ket{{\bf r}}$ to $\ket{{\bf \bar r}}$ generated by the $X/Y$ terms of $W[i_1, i_2, i_3]$.
Then, in the ordered basis $\{\ket{{\bf r}},\ket{{\bf \bar r}}\}$, the block of $W$ reads
\bea
W\big|_{\mathcal{H}_r}\;=\;
\begin{pmatrix}
1 - z_{123}(r) & \ w(r)\\[3pt]
w(r) & \ 1 + z_{123}(r)
\end{pmatrix}.
\eea
Hence the eigenvalues from this block are
\bea
\lambda_\pm({\bf r})\;=\;1 \pm \sqrt{\,z_{123}({\bf r})^2 + w({\bf r})^2\,}\;=\;1 \pm \sqrt{\,1 + w({\bf r})^2\,}.
\eea
For $W[1,1,0]$, for example, we find $w(r)$ to be
\bea
    w({\bf r}) = 1 - z_2 z_3 + z_1 z_3 - z_1 z_2,
\eea
so a quick case check reveals that
\bea
w({\bf r})=4 \ \Longleftrightarrow\ z_2=-z_3\ \text{ and }\ z_1=z_3\ \,
\eea
and otherwise $w({\bf r})=0$. Therefore the eigenvalues from this block are
\bea
\lambda_{\pm}({\bf r})
\;=\; 1 \pm \sqrt{\,z_{123}^2 + w({\bf r})^2\,}
\;=\; 1 \pm \sqrt{\,1 + w({\bf r})^2\,},
\eea
since $z_{123}^2=1$. Hence the full eigenvalue spectrum is given by
\bea
    0, 2, 1\pm \sqrt{17},
\eea
where $w({\bf r}) = 0$ gives $\lambda_{\pm}({\bf r}) = 0,2$ and $w({\bf r}) = 4$ gives $\lambda_{\pm}({\bf r}) = 1 \pm \sqrt{17}$. Therefore, we have 
\bea
    \| W[1,1,0] \|_\opn = 1 + \sqrt{17},
\eea
which is the maximal eigenvalue of $W[1,1,0]$. We also note that all the other $W[i_1, i_2, i_3]$ and $M[i_1, i_2, i_3]$ have the same eigenvalues since they are all local unitarily equivalent one another.
}

\subsubsection{ Calculations for detecting $\rho_{\rm ppt}$}
We provide details of coefficients of $\rho$, introduced in~\eqref{3qX},
\begin{equation}
\rho_{\rm ppt} = \frac{1}{\sum_{i=1}^4 (s_i + t_i)} \left(%
 \begin{array}{cccccccc} \label{3qX}
   s_1 & \cdot & \cdot & \cdot & \cdot & \cdot & \cdot & u_1 \\
   \cdot & s_2 & \cdot & \cdot & \cdot & \cdot & u_2 & \cdot \\
   \cdot & \cdot & s_3 & \cdot & \cdot & u_3 & \cdot & \cdot \\
   \cdot & \cdot & \cdot & s_4 & u_4 & \cdot & \cdot & \cdot \\
   \cdot & \cdot & \cdot & \bar{u}_4 & t_4 & \cdot & \cdot & \cdot \\
   \cdot & \cdot & \bar{u}_3 & \cdot & \cdot & t_3 & \cdot & \cdot \\
   \cdot & \bar{u}_2 & \cdot & \cdot & \cdot & \cdot & t_2 & \cdot \\
   \bar{u}_1 & \cdot & \cdot & \cdot & \cdot & \cdot & \cdot & t_1 \nonumber
 \end{array},
\right)
\end{equation}
when expanded in the Pauli basis. We first denote
\begin{align*}
    r_{ijk} := 8 \tr &[\rho_{\rm ppt} (\sigma_i \otimes \sigma_j \otimes \sigma_k)], \\
    &\quad \sigma_0 = \id, \sigma_1 = X, \sigma_2 = Y, \sigma_3 = Z.
\end{align*}
    
Then we can rewrite $\rho_{\rm ppt}$ in the Pauli basis as~\cite{Jafarizadeh2008}:
\begin{align*}
&\rho_{\rm ppt} = \frac{1}{8} \Big[ 
\id^{(1)} \id^{(2)} \id^{(3)} 
+ r_{300} \, Z^{(1)} \id^{(2)} \id^{(3)} 
+ r_{030} \, \id^{(1)} Z^{(2)} \id^{(3)} \\
&\ + r_{003} \, \id^{(1)} \id^{(2)} Z^{(3)} 
+ r_{330} \, Z^{(1)} Z^{(2)} \id^{(3)} 
+ r_{303} \, Z^{(1)} \id^{(2)} Z^{(3)} \\
&\ + r_{033} \, \id^{(1)} Z^{(2)} Z^{(3)} 
+ r_{333} \, Z^{(1)} Z^{(2)} Z^{(3)} 
+ r_{111} \, X^{(1)} X^{(2)} X^{(3)} \\
&\ + r_{112} \, X^{(1)} X^{(2)} Y^{(3)} 
+ r_{121} \, X^{(1)} Y^{(2)} X^{(3)} 
+ r_{211} \, Y^{(1)} X^{(2)} X^{(3)} \\
&\ + r_{122} \, X^{(1)} Y^{(2)} Y^{(3)} 
+ r_{212} Y^{(1)} \, X^{(2)} Y^{(3)} 
+ r_{221} Y^{(1)} \, Y^{(2)} X^{(3)} \\
&\ + r_{222} \, Y^{(1)} Y^{(2)} Y^{(3)} 
\Big].
\end{align*}

The coefficients $r_{ijk}$ are given by

\begin{align*} 
  r_{_{111}} &=\frac{2}{\nu}(\Re u_1+\Re u_2+\Re u_3+\Re u_4), \\
  r_{_{112}} &=\frac{2}{\nu}(\Im u_1-\Im u_2+\Im u_3-\Im u_4), \\
  r_{_{121}} &=\frac{2}{\nu}(\Im u_1+\Im u_2-\Im u_3-\Im u_4), \\
  r_{_{211}} &=\frac{2}{\nu}(-\Im u_1-\Im u_2-\Im u_3-\Im u_4), \\
  r_{_{122}} &=\frac{2}{\nu}(-\Re u_1+\Re u_2+\Re u_3-\Re u_4), \\
  r_{_{212}} &=\frac{2}{\nu}(\Re u_1-\Re u_2+\Re u_3-\Re u_4), \\
  r_{_{221}} &=\frac{2}{\nu}(\Re u_1+\Re u_2-\Re u_3-\Re u_4), \\
  r_{_{222}} &=\frac{2}{\nu}(\Im u_1-\Im u_2-\Im u_3+\Im u_4), & \\
  r_{_{300}} &=\frac{1}{\nu}(s_1 + s_2 + s_3 + s_4 - t_1 -t_2- t_3- t_4), \\
  r_{_{030}} &=\frac{1}{\nu}(s_1 + s_2 - s_3 - s_4 - t_1 - t_2 + t_3 + t_4), \\
  r_{_{330}} &=\frac{1}{\nu}(s_1+s_2-s_3-s_4+ t_1+t_2-t_3-t_4), \\ 
  r_{_{303}} &=\frac{1}{\nu}(s_1-s_2+s_3-s_4+t_1-t_2+t_3-t_4), \\
  r_{_{033}} &=\frac{1}{\nu}(s_1-s_2-s_3+s_4+t_1-t_2-t_3+t_4), \\
  r_{_{333}} &=\frac{1}{\nu}(s_1-s_2-s_3+s_4-t_1+t_2+t_3-t_4),
\end{align*}

where $\nu$ is the normalization factor, $\nu = \sum^4_{i=1} (s_i + t_i)$.

For $W[i_1, i_2, i_3]$ presented in Eq.~\eqref{eq:3w}, 
\bea
W[i_1,i_2,i_3]  
&=& \id^{\otimes 3} 
- Z^{(1)} Z^{(2)} Z^{(3)} 
- (-1)^{i_1} X^{(1)} X^{(2)} X^{(3)} \nonumber \\
&&
- (-1)^{i_2} X^{(1)} Y^{(2)} Y^{(3)} 
- (-1)^{i_3} Y^{(1)} X^{(2)} Y^{(3)} \nonumber \\
&& - (-1)^{i_1+i_2+i_3+1} Y^{(1)} Y^{(2)} X^{(3)},
\eea

we compute the following expectation value:

\begin{align} \label{eq:exptW3qX}
    \tr [W[i_1, i_2, i_3] \rho_{\rm ppt}] &= {\mu} - r_{_{333}} - (-1)^{i_1}r_{_{111}} - (-1)^{i_2}r_{_{122}} \nn \\
    & - (-1)^{i_3}r_{_{212}} - (-1)^{i_1 + i_2 + i_3 + 1}r_{_{221}},
\end{align}

To evaluate the expectation value of 3-qubit PPT state $\rho(b, c)$ in Eq.~\eqref{eq:3qKye} with respect to $W[i_1, i_2, i_3]$, we expand it in the Pauli basis and compute selected correlation coefficients. The nonzero off-diagonal entries of $\rho(b, c)$ appear as $u_1 = u_2 = u_4 = -1$ and $u_3 = {\mu}$. The diagonal elements yield $s_1 = s_2 = s_3 = {\mu}$, $s_4 = b$ and $t_1 = t_2 = t_3 = {\mu}$, $t_4 = c$. Substituting these into the general expressions for the Pauli coefficients, we obtain:

\begin{align*}
r_{333} &= \frac{b - c}{b + c + 6}, \\
r_{111} &= -\frac{4}{b + c + 6}, \\
r_{122} &= 0, \\
r_{212} &= \frac{4}{b + c + 6}, \\
r_{221} &= -\frac{4}{b + c + 6},
\end{align*}
where \( \nu = \sum_{i=1}^4 (s_i + t_i) = b + c + 6 \) is the normalization factor. Plugging all coefficients $r_{333}, r_{111}, r_{122}, r_{212}$ and $r_{221}$ into Eq.~\eqref{eq:exptW3qX} gives

\begin{align*}
     \tr &[ W[i_1,  i_2, i_3] \; \rho(b, c) ] \\
    &= \frac{ 2c+6 + 4 \left( (-1)^{i_1} - (-1)^{i_3} + (-1)^{i_1 +i_2+i_3+1} \right)}{b+c+6}
\end{align*}

Plugging $i_1= i_2=1$ and $i_3=0$ gives 
\[
    \tr [W[1,1,0] \; \rho(b, c)] = \frac{2(c-3)}{b+c+6}.
\]

\bibliography{reference}

\end{document}